\theoremstyle{plain}
\newtheorem{thm}{Theorem}[section]
\newtheorem*{thm*}{\bf Theorem }
\newtheorem{lem}[thm]{Lemma}
\newtheorem{prop}[thm]{Proposition}
\newtheorem*{prop*}{\bf Proposition}
\newtheorem{cor}[thm]{Corollary}
\theoremstyle{definition}
\newtheorem{defn}[thm]{Definition}
\newtheorem{example}[thm]{Example}
\theoremstyle{remark}
\numberwithin{equation}{section}
\title[Momentum Space Landau Equations Via Isotopy Techniques]{Momentum Space Landau Equations Via Isotopy Techniques}
\author{Maximilian M{\"u}hlbauer}
\begin{document}
\tikzset{
    photon/.style={decorate, decoration={snake}, draw=black},
    electron/.style={draw=black, postaction={decorate},
        decoration={markings,mark=at position .55 with {\arrow[draw=black]{>}}}},
    gluon/.style={decorate, draw=black,
        decoration={coil,amplitude=4pt, segment length=5pt}}
}

\begin{abstract}
    We investigate the analytic structure of functions defined by integrals with integrands singular on a finite union of quadrics. The main motivation comes from Feynman integrals which belong to this class. Using isotopy techniques we derive the Landau equations in momentum space from the theory of Feynman integrals and generalize these equations to naturally include singularities of the second type. For this purpose we introduce a regularization of analytic families of quadratic forms rendering the isotopy techniques applicable. In the case of Feynman integrals we comment on what is known about the behavior on the principal branch where only specific solutions of the Landau equations contribute to non-analytic points. Finally we discuss compatibility with renormalization.
\end{abstract}

\maketitle
\tableofcontents

\section{Notation and Conventions}
We agree on the following conventions: For a set $X$ and subsets $A,B\subset X$ we denote the relative complement of $A$ in $B$ by $B\backslash A$ or $B-A$. If $Y$ is another set and $f:X\to Y$ is a map we denote its image by $\text{im}\,f\subset Y$. The set of natural numbers including 0 is denoted by $\mathbb{N}$ and we set $\mathbb{N}^+:=\mathbb{N}\backslash\{0\}$. Similarly we write $\mathbb{R}^+$ for the set of positive real numbers, $\mathbb{R}_{\geq0}$ for the set of non-negative real numbers. The complex numbers without 0 are denoted by $\mathbb{C}^\times:=\mathbb{C}\backslash\{0\}$. For a complex number $z\in\mathbb{C}$ we denote its real part by $\text{Re}(z)$ and its imaginary part by $\text{Im}(z)$. For a tuple of complex numbers $z=(z_1,\ldots,z_n)\in\mathbb{C}^n$ we define
\begin{align*}
    \text{Re}(z):=(\text{Re}(z_1),\ldots,\text{Re}(z_n))\qquad\text{and}\qquad\text{Im}(z):=(\text{Im}(z_1),\ldots,\text{Im}(z_n)).
\end{align*}
Let $X$ and $Y$ be topological spaces and $A\subset X$ a subset. We always equip $X\times Y$ with the product topology and $A$ with the subspace topology. We denote the closure of $A$ in $X$ by $\text{cl}(A)$. Furthermore if $Z$ is a third topological space we speak of continuous families of maps $\{f_z:X\to Y\}_{z\in Z}$ which shall mean that the map $X\times Z\to Y,\;(x,z)\mapsto f_z(x)$ is continuous. We employ a similar jargon in the case of differentiable or complex analytic manifolds and speak of differentiable or (complex) analytic families of maps. If we say something holds almost everywhere or almost nowhere we always refer to the Lebesque measure on $\mathbb{C}^n$ or $\mathbb{R}^n$.

\section{Introduction}
The aim of this paper is to investigate the analytical structure of functions defined by integrals of the form
\begin{align}\label{eq:introintegral}
    I:U\to\mathbb{C},\quad t\mapsto\int_{\mathbb{R}^n}\frac{p(t)(y)}{\prod_{i=1}^N(Q_i(t)(y))^{\lambda_i}}dy_1\wedge\cdots\wedge dy_n
\end{align}
where $U\subset\mathbb{C}^m$ is an open set, the $\{Q_i(t)\}_{t\in U}$ are families of quadratic functions on $\mathbb{C}^n$ depending analytically on $t\in U$, $\{p(t)\}_{t\in U}$ is a family of polynomials analytically depending on $t$, and $\lambda_1,\ldots,\lambda_N\in\mathbb{C}$ with $\text{Re}(\lambda_i)>0$ for all $i\in\{1,\ldots,N\}$. We choose this notation to consistently distinguish between the variables over which we are integrating and the parameters on which $f$ depends. The main motivation lies in understanding the analytic structure of functions defined by Feynman integrals (in the momentum space representation) in quantum field theory which belong to this class of integrals.\\
Even though these functions have been heavily investigated, leading to astonishing connections to number theory \cite{periods}, elliptic curves \cite{elliptic-dilog}, tropical geometry \cite{hepp-bound} and many more flourishing areas of contemporary mathematics, many features of such functions are not yet well-understood. A famous theorem due to Cutkosky (see \cite{cutkosky}) for example, relating the monodromy around singular points to an integral with certain edges \text{on mass-shell}, remains unproven to this day (Which is to say, there is no published paper containing a rigorous proof of the theorem. The article \cite{outer-space}, available on the arXiv, attempting a proof is, as far as the author understands, work in progress.) although the statement dates back to the 60s. Closer inspection of the literature reveals that there seem to be gaps regarding a rigorous derivation of the Landau equations, characterizing the points at which singularities can occur, in the momentum space setting. The fact that clarification is needed on various points is illustrated by how recently the paper \cite{landaucollins} by Collins appeared, containing independent work dedicated to a rigorous proof of the Landau equations and refining the known results. This paper also showcases potential problems with the parametric viewpoint: It is not clear that the momentum space integral and the parametric integral share the same pinch points in all kinematic situations and in fact \cite{landaucollins} contains an example showing that this is not true in general. Problems occur in particular for massless theories. Results on the Landau equations for the parametric representation of Feynman integrals can be found for example in \cite{periods} or \cite{konrad}. Since Cutkosky's Theorem is formulated in momentum space, this gap should be closed and this paper attempts to do so. Therefore we work exclusively in momentum space and refrain from employing the Feynman or the Schwinger trick.\\
It was suggested (e.g. in \cite{homfeynman} and \cite{s-matrix}) that the isotopy techniques developed in \cite{app-iso}, which can be used in particular to deal with integrals of a specific form (see Definition \ref{def:standard_form} below) and are often cited in the literature regarding the Landau equations, might be applied to the case of Feynman integrals in momentum space representation after suitable modifications. The main goal of this work is to make concrete which modifications are needed and to apply them explicitly. The integrals studied in \cite{app-iso} are defined as follows:
\begin{defn}[\cite{app-iso}]\label{def:standard_form}
    Let $Z$ and $T$ be two complex analytic manifolds with $\dim Z=n$. An integral
    \begin{align*}
        \int_{\Gamma_0}\frac{\omega_{t_0}}{s(t_0)}
    \end{align*}
    is said to be \textit{of standard form} if the following three conditions are fulfilled:
    \begin{enumerate}
        \item $S(t):=\{t\in T \;|\; s(t)=0\}$ is a closed analytical subset of $Z$, depending analytically on $t\in T$ (in particular $s(t):Z\to\mathbb{C}$ is an analytic function for all $t\in T$).
        \item $\omega_t$ is a regular analytic $n$-form on $Z$, depending analytically on $t\in T$.
        \item $\Gamma_0\subset Z-S(t_0)$ defines a compact $n$-cycle.
    \end{enumerate}
\end{defn}
The word \enquote{regular} is somewhat superfluous (and quoted from the original paper). It is simply intended to stress that all singularities of the integrand are encoded in $s(t)$. An integral of standard form defines a germ of an analytic function at $t\in T$ if there is a path $\gamma:[0,1]\to T$ from $t_0$ to $t$ such that the fiber bundle of pairs $\pi:(X\times T,S)\to T$ (see Definition \ref{def:fibre_pair}) is locally trivial at every $t\in\text{im}\,\gamma$ as we discuss below. We review the necessary details of this procedure in Section \ref{sec:preliminaries}.\\
The authors who developed these techniques attempted to apply them to Feynman integrals and seem to have succeeded in the one-loop case (which is very special from the isotopy perspective). The two papers, however, containing explicit calculations for the box and the kite diagram respectively, have never been published. A draft of these works can be found in the book \cite{homfeynman} but unfortunately the second paper regarding the kite integral is largely incomplete.\\
The structure of this paper is as follows: The next Section \ref{sec:preliminaries} reviews the relevant definitions and theorems from the complex analysis in several variables and from the isotopy techniques in \cite{app-iso}. The main tool for us is Corollary \ref{thm:main_corollary}, which allows us to check the analyticity of an integral of standard form by a simple condition. Section \ref{sec:quadric_integrals} contains the main discussion introducing the relevant definitions and a regularization of quadratic forms and integrals in the class under investigation to allow us to apply the aforementioned corollary directly. We obtain a (slight) generalization of what physicists call the Landau equations naturally including singularities \enquote{of the second type}. The following Section \ref{sec:feynman_integral} contains the details for the case of Feynman integrals including a quick review of the general theory. We remark on the restriction of the solution space of the Landau equation if one is interested in the analyticity at specific (physical) points on the principal branch. To understand why it suffices to consider a smaller space of solutions it is necessary to perform an analytic continuation from $t_0$, corresponding to Euclidean external momenta and real masses in the case of Feynman integrals, to the value $t\in T$ of interest. This step has not been well-understood so far, as remarked for example in \cite{hyperlog}. We briefly discuss the role which renormalization plays in this picture in Section \ref{sec:renormalization}. The concluding Section \ref{sec:examples} contains a small list of examples both from physics and pure mathematics to illustrate how these techniques work in practice.

\section{Preliminaries}\label{sec:preliminaries}
We review some basics from the analysis in several complex variables and the techniques developed in \cite{app-iso} for the reader's convenience. This section also functions to fix some of the relevant notation. The part on complex analysis is mostly standard and is available for example in \cite{ana-book} and \cite{chirka} while most of what is contained in Subsection \ref{sec:isotopy} is covered in the book \cite{pham} by Pham or can be found in the original paper \cite{app-iso}. Everything not contained in these sources will be explicitly mentioned and supplied with additional references.
\subsection{Recap: Complex Analysis in Several Variables}
We assume the reader is familiar with the basic notions of complex analysis in several variables. Since the notions of being a holomorphic or an analytic function coincide in complex analysis we stick to the term analytic (which will always mean complex analytic) throughout this work. For an open set $U\subset\mathbb{C}^n$ we denote the space of analytic functions $U\to\mathbb{C}$ by $\mathcal{O}(U)$. As an algebraic structure $\mathcal{O}(U)$ is a $\mathbb{C}$-algebra. We endow it with the compact-open topology which is induced by a family of semi-norms $\{\rho_{K_i}\}_{i\in\mathbb{N}}$ where $\{K_i\}_{i\in\mathbb{N}}$ is a fixed compact exhaustion of $U$ and $\rho_K$ is defined by
\begin{align*}
    \rho_K:\mathcal{O}(U)\to\mathbb{R}_{\geq0},\quad f\mapsto \text{sup}_{z\in K}|f(z)|
\end{align*}
for any compact set $K\subset U$. This family induces a metric
\begin{align*}
    d:\mathcal{O}(U)\times\mathcal{O}(U)\to\mathbb{R}_{\geq0},\quad (f,g)\mapsto\sum_{i=1}^\infty\frac{1}{2^i}\frac{\rho_{K_i}(f-g)}{1+\rho_{K_i}(f-g)}
\end{align*}
turning $\mathcal{O}(U)$ into a Fréchet algebra. In particular we use the following proposition:
\begin{prop}[\cite{ana-book}]\label{thm:compact_convergence}
    If $\{f_k\}_{k\in\mathbb{N}}$ is a sequence of analytic functions on $U\subset\mathbb{C}^n$ converging uniformly on every compact subset $K\subset U$ then $\lim_{k\to\infty}f_k:U\to\mathbb{C}$ defines an analytic function.
\end{prop}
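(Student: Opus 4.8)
The plan is to reduce the statement to the local theory of analytic functions on polydiscs, exploiting that analyticity is a purely local property and that the several-variable Cauchy integral formula interacts well with uniform limits. First I would fix an arbitrary point $a\in U$ and choose a polyradius $r=(r_1,\ldots,r_n)\in(\mathbb{R}^+)^n$ small enough that the closed polydisc $\overline{\Delta}:=\prod_{j=1}^n\{z_j\in\mathbb{C}:|z_j-a_j|\le r_j\}$ is contained in $U$. Its distinguished boundary $T:=\prod_{j=1}^n\{z_j\in\mathbb{C}:|z_j-a_j|=r_j\}$ is compact, so by hypothesis $f_k\to f$ uniformly on $T$; in particular the pointwise limit $f:=\lim_{k\to\infty}f_k$ is a well-defined continuous function on $U$, being a locally uniform limit of continuous functions.

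Next I would invoke the multivariable Cauchy integral formula. Since each $f_k$ is analytic on a neighbourhood of $\overline{\Delta}$, for every $z$ in the open polydisc $\Delta$ we have
\begin{align*}
    f_k(z)=\frac{1}{(2\pi i)^n}\int_T\frac{f_k(\zeta)}{\prod_{j=1}^n(\zeta_j-z_j)}\,d\zeta_1\wedge\cdots\wedge d\zeta_n.
\end{align*}
For fixed $z\in\Delta$ the kernel $\prod_{j=1}^n(\zeta_j-z_j)^{-1}$ is bounded on the compact set $T$, so the uniform convergence $f_k\to f$ on $T$ permits passing to the limit under the integral sign, and hence $f$ satisfies the same Cauchy representation over $T$.

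Finally I would argue that any function represented by such a Cauchy integral over the torus $T$ is analytic on $\Delta$: for $z\in\Delta$ and $\zeta\in T$ the kernel expands into a multivariable geometric series in the quantities $(z_j-a_j)/(\zeta_j-a_j)$ that converges absolutely and uniformly in $\zeta$, and integrating this expansion term by term yields a convergent power series for $f$ centred at $a$. Thus $f$ is analytic at $a$, and since $a\in U$ was arbitrary, $f\in\mathcal{O}(U)$.

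The main obstacle is really only the bookkeeping of the several-variable Cauchy theory: one must be careful that the relevant compact set on which uniform convergence is used is the distinguished boundary $T$ rather than the full topological boundary of $\overline{\Delta}$, and that the term-by-term integration of the geometric expansion is legitimate. Both points follow from the uniform boundedness of the kernel on $T$ together with the locally uniform convergence hypothesis. An alternative route, which avoids writing out the kernel expansion, would be to apply the classical one-dimensional Weierstrass convergence theorem to conclude that the continuous limit $f$ is separately analytic in each variable, and then to appeal to Osgood's lemma (continuity together with separate analyticity implies joint analyticity); I would present the Cauchy-integral argument as the more self-contained of the two.
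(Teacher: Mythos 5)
Your proof is correct. The paper itself offers no proof of this proposition --- it is quoted as a standard result from the cited textbook on several complex variables --- and your argument (uniform convergence on the distinguished boundary, passage to the limit in the polydisc Cauchy integral formula, then term-by-term expansion of the kernel into a geometric series to produce a power series) is precisely the canonical Weierstrass-type proof that such a reference would give, so there is nothing to reconcile between the two.
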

Analogous to algebraic sets in algebraic geometry, a special role is played by sets locally given as the common zeros of families of analytic functions.
\begin{defn}[\cite{ana-book}]
    Let $U\subset\mathbb{C}^n$ be an open set and $A\subset U$. The set $A$ is \textit{analytic} in $U$ if for every $a\in A$ there is a neighborhood $V\subset U$ of $a$ and a finite set of functions $f_1\ldots,f_m\in\mathcal{O}(V)$ such that
    \begin{align*}
        A\cap V=\{z\in V \;|\; f_1(z)=\cdots=f_m(z)=0\}.
    \end{align*}
\end{defn}
This definition naturally extends to complex manifolds since it is local. Clearly $\mathbb{C}^n$ and $\emptyset$ are analytic sets as they are they are (globally) given by $\{z\in\mathbb{C}^n \;|\; f(z)=0\}$ for $f\equiv0:\mathbb{C}^n\to\mathbb{C}$ and $f\equiv1:\mathbb{C}^n\to\mathbb{C}$ respectively. Furthermore analytic sets are closed under finite unions and arbitrary intersections \cite{chirka} and can thus be used to define the closed sets of a topology on $\mathbb{C}^n$ similarly to the Zarisky topology. For an open set $U\subset\mathbb{C}^n$ and a subset $S\subset\mathcal{O}(U)$ we write
\begin{align*}
    V(S):=\{z\in U \;|\; f(z)=0\quad\forall f\in S\}
\end{align*}
for the \textit{zero locus} of $S$. In particular if $S$ contains only one element $f\in\mathcal{O}(U)$ we write $V(f):=V(\{f\})$. In the case of homogeneous functions we are often interested in the zero locus in projective space in the sense of $\{[z]\in\mathbb{C}\mathbb{P}^n \;|\; f(z)=0\quad\forall f\in S\}$ which we denote by the same symbol. It will always be clear from context which meaning we invoke. Let $A$ be an analytic set and $k\in\mathbb{N}$. We say that $A\subset\mathbb{C}^n$ has \textit{codimension} $k$ at $a\in A$, denoted by $\text{codim}_aA=k$, if there is an affine subspace $E$ of $\mathbb{C}^n$ of dimension $k$ such that $a$ is an isolated point in $E\cap A$ and there is no affine subspace of larger dimension than $k$ with this property. If $A$ is non-empty we define the \textit{codimension} of $A$ by $\text{codim}A:=\inf_{a\in A}\text{codim}_aA$. We frequently use the following elementary properties of analytic sets:
\begin{prop}[\cite{ana-book}]\label{thm:basics_analytic_sets}
    Let $U\subset\mathbb{C}^n$ be an open set, $A\subset U$ an analytic set. Then the following hold:
    \begin{enumerate}
        \item If $U$ is connected then $A\neq U$ is equivalent to $\textup{codim}\,A\geq1$.
        \item If $\textup{codim}\,A\geq1$ then $U\backslash A$ is dense in $U$.
    \end{enumerate}
\end{prop}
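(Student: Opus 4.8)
The plan is to reduce both statements to a single local characterization of the codimension and then to invoke the identity theorem to globalize. Concretely, I would first prove the equivalence
\[
\operatorname{codim}_a A = 0 \iff a\in\operatorname{int}(A),
\]
where $\operatorname{int}(A)$ denotes the interior of $A$ in $U$. The direction $\Leftarrow$ is immediate: if a neighborhood of $a$ lies in $A$, then for every affine subspace $E$ of positive dimension through $a$ the intersection $E\cap A$ contains a neighborhood of $a$ in $E$, so $a$ is never isolated and only $\dim E=0$ is admissible. For $\Rightarrow$ I would argue contrapositively. If $a\notin\operatorname{int}(A)$, pick a ball $V\subset U$ around $a$ on which $A\cap V = V(f_1,\dots,f_m)$ with $f_i\in\mathcal{O}(V)$; since $A\cap V\neq V$, some $f_i$ is not identically zero on $V$, so $f_i(b)\neq0$ for some $b\in V$. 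The complex line $E$ through $a$ and $b$ restricts $f_i$ to a one-variable analytic function on the disk $E\cap V$ that is nonzero at $b$, hence not identically zero, hence with isolated zeros; thus $a$ is isolated in $E\cap V(f_i)\supset E\cap A$, giving $\operatorname{codim}_a A\geq1$.

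Taking the infimum over $a\in A$, this equivalence yields at once
\[
\operatorname{codim} A\geq 1 \iff \operatorname{int}(A)=\emptyset \iff U\setminus A \text{ is dense in } U,
\]
which is exactly statement (2), and simultaneously the implication $\operatorname{codim}A\geq1\Rightarrow A\neq U$ of statement (1), since an analytic set equal to the (nonempty, open) set $U$ has nonempty interior. Note that neither of these two conclusions uses connectivity.

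It remains to prove the converse implication of (1): if $U$ is connected and $A\neq U$, then $\operatorname{codim}A\geq1$, equivalently $\operatorname{int}(A)=\emptyset$. Here I would show that $W:=\operatorname{int}(A)$ is not only open but also closed in $U$. Given $a\in\operatorname{cl}(W)\cap U$, choose a connected neighborhood $V$ with $A\cap V=V(f_1,\dots,f_m)$; since $a\in\operatorname{cl}(W)$ the open set $V\cap W$ is nonempty and contained in $A$, so every $f_i$ vanishes on it. By the identity theorem on the connected set $V$, each $f_i\equiv0$, whence $V\subset A$ and therefore $a\in W$. Thus $W$ is clopen in the connected space $U$, so $W=\emptyset$ or $W=U$; the latter would force $A=U$, which is excluded, leaving $\operatorname{int}(A)=\emptyset$.

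The routine inputs are the elementary point-set topology and the one-variable identity theorem used in the local step. The main obstacle—and the only place where connectivity is genuinely needed—is the globalization carried out in the last paragraph: turning the purely local statement \enquote{$A$ is thin near each of its non-interior points} into the dichotomy that $A$ is either all of $U$ or nowhere dense. This hinges on the several-variable identity theorem to propagate the vanishing of the local defining functions from an open subset across the whole connected chart $V$, which is precisely what makes the interior closed and hence clopen.
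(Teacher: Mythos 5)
Your proof is correct. There is nothing in the paper to compare it against: the paper states this proposition as a quoted textbook fact (citing \cite{ana-book}) and gives no proof of its own, so your argument stands as a self-contained derivation from the paper's definitions. The route you take — first the pointwise dichotomy $\operatorname{codim}_a A = 0 \iff a\in\operatorname{int}(A)$, proved by restricting a non-vanishing local defining function to a complex line and applying the one-variable identity theorem, then the clopen argument for $\operatorname{int}(A)$ via the several-variable identity theorem to handle the connected case — is the standard way to prove this, and all the steps check out. One small point you should make explicit: at a point $a\in\operatorname{cl}(\operatorname{int}(A))\cap U$ you \enquote{choose a connected neighborhood $V$ with $A\cap V=V(f_1,\dots,f_m)$}, but the paper's definition of an analytic set only guarantees such local equations at points \emph{of} $A$. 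You therefore need to observe that $A$ is closed in $U$ (a standard fact, which the paper records immediately after the proposition), so that $\operatorname{cl}(\operatorname{int}(A))\cap U\subset\operatorname{cl}(A)\cap U=A$ and the local equations are indeed available at $a$; alternatively, work with the usual textbook definition in which local equations are required at every point of $U$. With that one-line addition the proof is complete.
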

Note that if $A\subset U$ is analytic then $A$ is closed and hence (2) from the above proposition tells us that $\text{codim}\,A\geq1$ is equivalent to $A$ being nowhere dense in $U$. Additionally we employ the Proper Mapping Theorem due to Remmert. Recall that a proper map $f:X\to Y$ between two topological spaces $X$ and $Y$ is a map such that for every compact set $K\subset Y$ the preimage $f^{-1}(K)$ is compact in $X$.
\begin{thm}[Remmert's Proper Mapping Theorem, \cite{chirka}]\label{thm:proper_mapping_thm}
    Let $U\subset\mathbb{C}^m$ and $V\subset\mathbb{C}^n$ be open sets, $f:U\to V$ a proper analytic map. If a subset $A\subset U$ is analytic in $U$ then $f(A)$ is analytic in $V$.
\end{thm}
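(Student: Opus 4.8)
The plan is to exploit that analyticity of $f(A)$ is a \emph{local} condition on $V$, so it suffices to produce, for each $w_0\in V$, a neighbourhood $W\ni w_0$ and finitely many functions in $\mathcal{O}(W)$ cutting out $f(A)\cap W$. Two preliminary observations reduce the work considerably. First, since $A$ is analytic it is closed in $U$, and a proper continuous map into a locally compact Hausdorff space is closed; hence $f(A)$ is closed in $V$. Thus if $w_0\notin f(A)$ we may shrink $W$ to avoid $f(A)$ entirely and the empty set is analytic, so we may assume $w_0\in f(A)$. Second, the fibre $A\cap f^{-1}(w_0)$ is compact, being a closed subset of the compact set $f^{-1}(w_0)$; this compactness is what will let us control the local geometry. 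Finally, because $f(A)=\bigcup_j f(A_j)$ for the locally finite decomposition of $A$ into irreducible components $A_j$, and finite unions of analytic sets are analytic, I may assume $A$ is irreducible of some pure dimension $d$. I would then argue by induction on $d$, the case $d=0$ being immediate since then $A$ is discrete and closed, $f|_A$ is finite-to-one, and $f(A)$ is a discrete closed, hence analytic, subset of $V$.

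The heart of the argument is the case of a \emph{finite} map, i.e. when every fibre $A\cap f^{-1}(w)$ is finite. Here I would appeal to Weierstrass theory. Writing the finitely many preimages of $w_0$ as $a_1,\dots,a_k$, one chooses local coordinates adapted to each $a_i$ so that, after shrinking, $A$ is realized near $a_i$ as a branched covering in the fibre direction: the Weierstrass preparation theorem expresses the defining ideal through Weierstrass polynomials whose roots are the ``sheets'' of the covering. The elementary symmetric functions of these sheets are holomorphic in the base variables, and eliminating the fibre coordinates (equivalently, forming the relevant resultants and discriminants) produces analytic functions on a neighbourhood $W$ of $w_0$ whose common zero locus is exactly $f(A)\cap W$. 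This is the genuinely algebraic step and is where the local structure theory of analytic sets does the real work.

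To pass from a general proper map to the finite case I would stratify $A$ by fibre dimension. The function $a\mapsto\dim_a\bigl(A\cap f^{-1}(f(a))\bigr)$ is upper semicontinuous, so the locus $\Sigma$ where the fibre dimension exceeds the generic value is a proper analytic subset of $A$, of dimension $<d$; by the inductive hypothesis $f(\Sigma)$ is analytic. On the complement the fibre dimension $\ell$ is locally constant, and after slicing $A$ near a point of a maximal-dimensional fibre by a generic affine-linear subspace of complementary dimension one lowers the fibre dimension, so that iterating this (equivalently, applying the rank theorem for holomorphic maps of locally constant rank) reduces the computation of the image to the finite situation handled above. Combining the analytic set obtained over the generic stratum with the analytic set $f(\Sigma)$, and using that the union of two analytic sets is analytic, yields that $f(A)\cap W$ is analytic near every $w_0$, which is what was required.

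I expect two points to be the main obstacles. The first is the finite-map base case: turning the compactness of the fibre and Weierstrass preparation into an honest finite list of \emph{global} generators for the image on $W$, and verifying that the symmetric-function construction really carves out the set-theoretic image rather than something larger, requires the full local parametrization theorem and a careful coherence argument. The second is the fibre-dimension reduction: ensuring that the slicing subspaces can be chosen uniformly enough that the induction closes, and that the closures of the strata remain analytic (which rests on continuation results of Remmert--Stein type). Everything else --- localization, the closedness of $f(A)$, and the decomposition into irreducible components --- is formal and follows from properness together with Proposition \ref{thm:basics_analytic_sets}.
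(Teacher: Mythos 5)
First, a point of comparison: the paper itself gives no proof of this theorem --- it is quoted from \cite{chirka} as a known classical result --- so your proposal can only be measured against the standard proofs in the literature. Measured that way, your first two paragraphs are sound: the formal reductions (closedness of $f(A)$ from properness, localization at a point of $V$, passage to irreducible components, which stays legitimate because properness makes the family $\{f(A_j)\}$ locally finite in $V$ and each $f|_{A_j}$ proper), and above all the finite-fibre case via local parametrization, elementary symmetric functions of the sheets, and elimination of the fibre variables, which is indeed the genuine core of the standard argument.

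The difficulty is your third paragraph, and it cuts in two directions. It is unnecessary for the statement as the paper formulates it: since $U\subset\mathbb{C}^m$ is an open subset of a number space, every fibre $A\cap f^{-1}(w)$ is a \emph{compact analytic subset of $U$}, and the maximum principle (applied to the coordinate functions on any positive-dimensional irreducible component) forces such a set to be finite. Thus $f|_A$ automatically has finite fibres, the generic fibre dimension is $0$, your exceptional set $\Sigma$ is empty, and the whole theorem is exactly the finite mapping theorem of your second paragraph; no fibre-dimension stratification or slicing is needed. Conversely, as a strategy for the general Remmert theorem on complex spaces (where positive-dimensional fibres really occur, e.g.\ for blow-downs), the paragraph has genuine gaps: neither $A\setminus\Sigma$ nor a slice $A\cap L$ is closed in $U$, so properness does not restrict to them and the induction you set up does not close --- the preimage of a compact set under $f|_{A\cap L}$ need not be compact --- and the assertion that the image of a generic complementary-dimensional slice fills out $f(A)$ near $w_0$ is precisely the hard point of that proof, not something delivered by the rank theorem. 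My recommendation: keep paragraphs one and two, insert the compactness-of-fibres observation, and drop paragraph three entirely; the result is a correct and considerably shorter proof of the statement at hand.
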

An elementary but powerful tool we often use is the principal of analytic continuation. The following well-known theorem is a very useful criterion to determine if an analytic function in several variables can be analytically continued to a larger domain. If $U\subset\mathbb{C}^m$ and $V\subset\mathbb{C}^n$ are open subsets and $A\subset U$ is any subset we say a map $f:U\backslash A\to V$ is \textit{bounded along} $A$ if every point $a\in A$ has a neighborhood $N_a$ in $U$ such that $f$ is bounded on $N_a-A$ \cite{ana-book}.
\begin{thm}[Riemann's Extension Theorem, \cite{ana-book}]\label{thm:riemann_extension}
    Let $U\subset\mathbb{C}^n$ be open and $A\subset U$ analytic with $\textup{codim}\,A\geq1$. If $f:U\backslash A\to\mathbb{C}$ is an analytic function bounded along $A$ then $f$ can be extended to an analytic function $U\to\mathbb{C}$.
\end{thm}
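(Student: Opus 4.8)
The statement is the classical Riemann extension theorem in several variables, so the plan is to reduce it to a one-dimensional removable singularity statement realized by a Cauchy integral, after arranging good local coordinates. First I would localize. Since analyticity is a local property and, by part (2) of Proposition \ref{thm:basics_analytic_sets}, $U\backslash A$ is dense in $U$, any two analytic extensions must agree on the dense set $U\backslash A$ and hence coincide by the identity theorem; it therefore suffices to produce, for each $a\in A$, an analytic extension of $f$ on a small neighborhood of $a$ and to glue. Next I would reduce to the case of a single hypersurface: by definition of an analytic set there is a neighborhood $V$ of $a$ and functions $f_1,\dots,f_m\in\mathcal{O}(V)$ with $A\cap V=V(f_1,\dots,f_m)$, and since $\text{codim}\,A\geq1$ at least one of them, say $g$, is not identically zero, so $A\cap V\subset H:=V(g)$. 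The key observation is that $f$ is then bounded along all of $H$: near points of $A$ by hypothesis, and near points of $H\backslash A\subset U\backslash A$ because there $f$ is analytic and hence locally bounded. Thus it is enough to extend $f$ across the single hypersurface $H$; the resulting extension will automatically agree with $f$ on $V\backslash A$, since $V\backslash H$ is dense in $V\backslash A$ by part (1) of Proposition \ref{thm:basics_analytic_sets} applied to $H$.

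For the hypersurface case I would put $g$ into standard position. After a linear change of coordinates we may assume $g$ is regular of some order $d$ in the last variable $z_n$ at $a$, and the Weierstrass preparation theorem gives $g=u\cdot W$ with $u$ a unit and $W$ a Weierstrass polynomial of degree $d$ in $z_n$. Shrinking $V$ to a polydisk $\{|z'-a'|<\rho\}\times\{|z_n-a_n|<r\}$, the zeros of $W(z',\cdot)$ stay inside the disk $|z_n-a_n|<r$ while $W$ remains nonvanishing on the circle $|z_n-a_n|=r$, uniformly for $z'$ in the base polydisk. For fixed $z'$ the slice $z_n\mapsto f(z',z_n)$ is then analytic and bounded on the disk minus the finitely many roots of $W(z',\cdot)$, so by the classical one-variable removable singularity theorem it extends analytically across them; I would realize this extension by the Cauchy integral
\[
\tilde f(z',z_n):=\frac{1}{2\pi i}\int_{|\zeta-a_n|=r}\frac{f(z',\zeta)}{\zeta-z_n}\,d\zeta.
\]
Because the contour $|\zeta-a_n|=r$ avoids $H$ for every $z'$ in the base, the integrand is jointly analytic in $(z',z_n)$, and hence $\tilde f$ is analytic by differentiation under the integral sign; the Cauchy formula shows $\tilde f=f$ wherever $z_n$ is not a root of $W(z',\cdot)$, that is, on $V\backslash H$.

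The main obstacle is the uniformity in the parameters $z'$: one must guarantee that a single radius $r$ keeps every slice-root inside the contour and the contour itself off $H$ for all $z'$ in a fixed neighborhood, and that the one-variable extensions assemble into a jointly analytic function rather than merely a function analytic in each variable separately. Both points are exactly what the Weierstrass preparation theorem and the Cauchy integral representation supply. Once the reduction to a single regular defining function is in place, the remaining verifications — joint analyticity of $\tilde f$, agreement with $f$ off $H$, and gluing the local extensions via the identity theorem — are routine.
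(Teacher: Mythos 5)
Your proof is correct: localization, reduction to a single hypersurface $H=V(g)$ containing $A$ locally, Weierstrass preparation, and slice-wise one-variable removable singularities assembled into a jointly analytic extension by the Cauchy integral is precisely the standard textbook argument for this theorem. Note that the paper itself gives no proof --- the statement is quoted from \cite{ana-book} as a known result --- so there is nothing to compare against; the only cosmetic slip is your appeal to part (1) of Proposition \ref{thm:basics_analytic_sets} for the density of $V\backslash H$ in $V\backslash A$, which really requires part (1) (to get $\textup{codim}\,H\geq1$ on a connected neighborhood) combined with part (2).
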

This theorem has a useful corollary about the topological nature of complements of analytic sets:
\begin{cor}[\cite{ana-book}]\label{thm:connected_corollary}
    Let $U\subset\mathbb{C}^n$ be an open and connected set. If $A\subset U$ is analytic with $\textup{codim}\,A\geq1$ then $U\backslash A$ is connected.
\end{cor}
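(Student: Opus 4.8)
The plan is to argue by contradiction, using Riemann's Extension Theorem (Theorem \ref{thm:riemann_extension}) to remove $A$ and then invoking the identity principle on the connected set $U$. First I would record the basic setup: since $A$ is analytic it is closed in $U$, so $U\backslash A$ is open, and by Proposition \ref{thm:basics_analytic_sets}(1) the hypothesis $\text{codim}\,A\geq1$ forces $A\neq U$, hence $U\backslash A$ is nonempty. Suppose now, for contradiction, that $U\backslash A$ is disconnected. Then I can write $U\backslash A=V_1\sqcup V_2$ with $V_1,V_2$ nonempty, disjoint, and open in $U\backslash A$, hence open in $U$ as well (because $U\backslash A$ is itself open in $U$).

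The key step is to package this separation as a \emph{bounded} analytic function on $U\backslash A$. I would define $f:U\backslash A\to\mathbb{C}$ by $f\equiv0$ on $V_1$ and $f\equiv1$ on $V_2$. Since each point of $U\backslash A$ has a neighborhood contained in one of the open sets $V_1,V_2$, the map $f$ is locally constant and therefore analytic on $U\backslash A$. It takes values in $\{0,1\}$, so it is trivially bounded along $A$ in the sense required by Theorem \ref{thm:riemann_extension}. Applying that theorem yields an analytic extension $\tilde f:U\to\mathbb{C}$ with $\tilde f|_{U\backslash A}=f$.

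To close the argument I would use the identity principle for analytic functions on the connected open set $U$. The extension $\tilde f$ vanishes on the nonempty open subset $V_1\subset U$, and since $U$ is connected this forces $\tilde f\equiv0$ on all of $U$. But $V_2\subset U$ is nonempty and $\tilde f=f=1$ there, a contradiction. Hence the assumption that $U\backslash A$ is disconnected is untenable, and $U\backslash A$ is connected.

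I expect the only points requiring care to be the verification that the separating sets $V_1,V_2$ are genuinely open in $U$ (so that $f$ is locally constant, and thus analytic, in a full neighborhood of each of its points), and the clean application of the identity theorem on $U$ rather than on the disconnected complement $U\backslash A$. Indeed, the entire force of the corollary lies in this upgrade: extending across $A$ promotes local constancy on the disconnected complement to a global identity on the connected set $U$. No hard analysis is involved — the content is entirely in encoding the disconnection as a bounded analytic function and then erasing the singular set via Theorem \ref{thm:riemann_extension}.
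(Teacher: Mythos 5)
Your proof is correct and follows essentially the same route as the paper: encode a hypothetical disconnection $U\backslash A=V_1\sqcup V_2$ as a locally constant (hence analytic, bounded) indicator function, extend it across $A$ via Theorem \ref{thm:riemann_extension}, and derive a contradiction from the connectedness of $U$. The only cosmetic difference is that you invoke the identity principle to conclude $\tilde f\equiv0$, whereas the paper phrases the final step as the extended function being locally constant on connected $U$ and thus simultaneously $\equiv0$ and $\equiv1$; these are the same argument.
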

\begin{proof}
    Suppose we can write $U\backslash A=V_1\cup V_2$ with $V_1\cap V_2=\emptyset$ and $V_1,V_2\neq\emptyset$. Then we can define a function
    \begin{align*}
        f:U\backslash A\to\mathbb{C},\quad z\mapsto\begin{cases} 1 & \text{if } z\in V_1 \\ 0 & \text{if } z\in V_2 \end{cases}
    \end{align*}
    which is clearly analytic. Moreover it is certainly bounded along $A$ so that we can analytically continue to all of $U$. But $U$ is connected and $f$ locally constant so that $f\equiv0$ and $f\equiv1$ which is absurd.
\end{proof}
Recall that any connected and locally path-connected topological space is path-connected. This holds in particular for open subsets of $\mathbb{R}^n$ or $\mathbb{C}^n$. Thus the situation in Corollary \ref{thm:connected_corollary} already implies path-connectedness. Another consequence of Theorem \ref{thm:riemann_extension} is the last theorem of this subsection which is sometimes called Riemann's Second Extension Theorem.
\begin{thm}[Riemann's 2nd Extension Theorem, \cite{ana-book}]\label{thm:riemann_extension_2}
    Let $U\subset\mathbb{C}^n$ be open and $A\subset U$ analytic with $\textup{codim}\,A\geq2$. If $f:U\backslash A\to\mathbb{C}$ is an analytic function then $f$ can be extended to an analytic function $U\to\mathbb{C}$.
\end{thm}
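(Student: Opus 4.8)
The plan is to reduce the statement to the first extension theorem (Theorem \ref{thm:riemann_extension}), which is already available since $\textup{codim}\,A\geq2$ in particular gives $\textup{codim}\,A\geq1$. That theorem demands boundedness of $f$ along $A$, a hypothesis we are not given; hence the entire content of the proof is to manufacture this boundedness out of the extra codimension. I would do so by constructing, locally around each point of $A$, an honest analytic extension of $f$ via a Cauchy integral taken over slices transverse to $A$, in the spirit of Hartogs. Since analytic continuation is unique and the statement is \emph{local} (local extensions agree on overlaps and glue), it suffices to extend $f$ across $A$ in a small polydisc around a fixed point $a\in A$.

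First I would put the situation into a convenient local normal form. After a generic $\mathbb{C}$-linear change of coordinates centred at $a$, the projection $\pi:\mathbb{C}^n\to\mathbb{C}^{n-1}$, $(z_1,\dots,z_n)\mapsto(z_1,\dots,z_{n-1})$, becomes finite (proper with discrete fibres) on $A$ near $a$; this is the local parametrization of analytic sets and reflects that a generic direction is not contained in $A$. Consequently one finds a polydisc $P=\Delta'\times\Delta_r$, with $\Delta'\subset\mathbb{C}^{n-1}$ and $\Delta_r=\{|z_n|<r\}$, such that $A\cap P$ avoids the tube $\Delta'\times\{|z_n|=r\}$. Shrinking suitably and invoking Theorem \ref{thm:proper_mapping_thm}, the image $B:=\pi(A\cap P)$ is analytic in $\Delta'$, and since $\dim A\leq n-2$ forces $\dim B\leq n-2<n-1$, Proposition \ref{thm:basics_analytic_sets} shows $\textup{codim}\,B\geq1$, so $\Delta'\backslash B$ is dense in $\Delta'$. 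This dimension count is exactly where the hypothesis $\textup{codim}\,A\geq2$ (rather than merely $\geq1$) is spent.

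On $P$ I would then define
\[ F(w,z_n):=\frac{1}{2\pi i}\oint_{|\zeta|=r}\frac{f(w,\zeta)}{\zeta-z_n}\,d\zeta, \qquad (w,z_n)\in\Delta'\times\Delta_r. \]
Because the contour lies in the tube avoiding $A$, the integrand is analytic in $(w,z_n)$ jointly, and $F$ is analytic on all of $P$ (approximating the integral by Riemann sums and applying Proposition \ref{thm:compact_convergence}, or differentiating under the integral sign). For each fixed $w\in\Delta'\backslash B$ the slice $\{w\}\times\Delta_r$ misses $A$, so $f(w,\cdot)$ is analytic on the full disc and the ordinary one-variable Cauchy formula gives $F(w,z_n)=f(w,z_n)$ there. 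Thus $F$ and $f$ agree on the nonempty open set $(\Delta'\backslash B)\times\Delta_r\subset P\backslash A$; since $P\backslash A$ is connected by Corollary \ref{thm:connected_corollary}, the identity theorem yields $F=f$ on $P\backslash A$. Hence $F$ is the desired analytic extension. In particular $f$ is locally bounded near $a$, so alternatively one may invoke Theorem \ref{thm:riemann_extension} directly at this point.

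The main obstacle is the local normalization: securing a single circle $|z_n|=r$ that separates $A$ from the distinguished boundary uniformly over the base $\Delta'$, equivalently the finiteness and properness of $\pi|_A$ near $a$. This is the genuine several-variables input (local parametrization of analytic sets, or Weierstrass preparation after a generic choice of coordinates), and it is precisely the step that breaks down when $\textup{codim}\,A=1$: there $A$ can meet every vertical slice, $B$ can exhaust $\Delta'$, and no separating tube exists — consistent with the unextendability of $1/z$ across $\{z=0\}\subset\mathbb{C}$. Everything else is routine bookkeeping with the identity theorem and with the analyticity of parameter-dependent integrals.
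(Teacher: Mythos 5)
The paper contains no proof of this statement to compare yours against: it is quoted from \cite{ana-book}, accompanied only by the remark that it is \enquote{another consequence of Theorem \ref{thm:riemann_extension}}. What you have written is the standard several-variables argument (Cauchy integration over slices transverse to $A$ after a local normalization), and it is essentially correct. You also correctly isolate where $\textup{codim}\,A\geq2$ is spent, namely in forcing $B=\pi(A\cap P)$ to be a proper, nowhere dense subset of $\Delta'$ so that enough slices avoid $A$ for the one-variable Cauchy formula to reproduce $f$; and your closing observation that the construction yields local boundedness of $f$ near $A$ reconnects your route with the paper's stated viewpoint that the theorem follows from Theorem \ref{thm:riemann_extension}.

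Two caveats. First, your appeal to Theorem \ref{thm:proper_mapping_thm} is not literally licensed by the paper's formulation of Remmert's theorem, which requires the \emph{ambient} map $U\to V$ to be proper: your projection $\pi:P\to\Delta'$ is never proper (the preimage of a compact set is a product with $\Delta_r$); only its restriction to $A\cap P$ is. You either need the stronger standard form of the proper mapping theorem (properness of $f|_A$ suffices, as in \cite{chirka}), or you can sidestep analyticity of $B$ entirely: properness of $\pi|_{A\cap P}$ already makes $B$ closed in $\Delta'$, and a dimension count makes $B\neq\Delta'$, which is all the identity-theorem step actually needs. Second, your final heuristic about the codimension-one case misidentifies the obstruction: for $A=\{z_n=0\}$ the separating tube $\Delta'\times\{|z_n|=r\}$ \emph{does} exist and $F$ is still analytic on all of $P$; what fails is precisely that $B$ exhausts $\Delta'$, so no slice avoids $A$ and $F$ need not agree with $f$ (for $f=1/z_n$ one computes $F\equiv0$). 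This slip occurs only in a side remark and does not affect the validity of the proof itself.
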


\subsection{Isotopy Techniques}\label{sec:isotopy}
We review the techniques developed in \cite{app-iso} to treat integrals of standard form (see Definition \ref{def:standard_form}). First we need the notion of a fiber bundle of pairs and certain compatibility with the differentiable structure in case the topological spaces under consideration are differentiable manifolds. Smooth always means $C^\infty$ and we restrict ourselves to smooth (i.e. $C^\infty$-) manifolds in all that follows.
\begin{defn}[\cite{pham}]\label{def:fibre_pair}
    Let $Y$ and $T$ be topological spaces, $S\subset Y$ a subspace and $\pi:Y\to T$ a continuous map. We say that $\pi:(Y,S)\to T$ is a \textit{(locally) trivial fiber bundle of pairs} if there exists a (local) trivialization of $Y$ which is also a (local) trivialization for $S$ (generally with a different fiber).
\end{defn}
Note that we explicitly state \enquote{locally trivial} in this definition as opposed to the definition of fiber bundles which usually already includes local triviality (although this depends on the author). This convention is adopted from \cite{pham}. In case of bundles of smooth manifolds we want to impose some extra conditions:
\begin{defn}[\cite{app-iso}]\label{def:smoothly_trivial}
    Let $Y$ and $T$ be smooth manifolds, $S\subset Y$ a topological subspace. A trivial fiber bundle of pairs $\pi:(Y,S)\to T$ is called a \textit{trivial $\mathcal{C}^\infty$-fiber bundle of pairs} if the following conditions hold:
    \begin{enumerate}
     \item There exists a trivialization $g:Y\to Z\times T$ which also trivializes $S$ (i.e $\pi:(Y,S)\to T$ is a trivial fiber bundle of pairs) and the fiber $Z$ is a smooth manifold.
     \item $\pi$ is a smooth map.
     \item The inverse $g^{-1}$ is differentiable with respect to $t\in T$.
     \item Any smooth vector field $F$ on $T$ lifts to a locally Lipschitzian vector field $g^{-1}F$ on $Y$.
    \end{enumerate}
    Furthermore we say that $\pi:(Y,S)\to T$ is a \textit{locally trivial $\mathcal{C}^\infty$-fiber bundle of pairs} if each $t\in T$ has an open neighborhood $U\subset T$ such that
    \begin{align*}
        \pi:(\pi^{-1}(U),S\cap\pi^{-1}(U))\to U
    \end{align*}
    is a trivial $\mathcal{C}^\infty$-fiber bundle of pairs.
\end{defn}
While the first three conditions seem intuitive, the fourth one warrants some explanation. First recall the definition of the lifted vector field $g^{-1}F$ given by $(g^{-1}F)(y):=D_t(g^{-1}(z,\cdot))F(t)$ for any point $y\in Y$ where $(z,t)=g^{-1}(y)$. This is well-defined due to condition (3) in Definition \ref{def:smoothly_trivial}. Note also that (4) is a strictly weaker condition than requiring $g^{-1}$ to be continuously differentiable with respect to $z\in Z$ (any continuously differentiable map is locally Lipschitz continuous). We will see further below that condition (4) is enough for the proof of the following isotopy theorem, which lies at the heart of \cite{app-iso}:
\begin{thm}[\cite{app-iso}]\label{thm:isotopy_thm}
    Let $Y$ and $T$ be smooth manifolds, $\pi:Y\to T$ a smooth, open, and proper map, and $S\subset Y$ an open or closed subset. Then the following are equivalent:
    \begin{enumerate}
        \item $\pi:(Y,S)\to T$ is a locally trivial $\mathcal{C}^\infty$-fiber bundle of pairs.
        \item For any $y\in Y$ there exists $U\subset Y$ open such that $\pi:(U,S\cap U)\to T$ is a locally trivial $\mathcal{C}^\infty$-fiber bundle of pairs.
    \end{enumerate}
\end{thm}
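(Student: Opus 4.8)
The plan is to treat the two implications separately, with essentially all of the work residing in $(2)\Rightarrow(1)$, which is the genuine isotopy statement. The implication $(1)\Rightarrow(2)$ is immediate: if $\pi:(Y,S)\to T$ is already a locally trivial $\mathcal C^\infty$-fiber bundle of pairs, then for every $y\in Y$ we may simply take $U=Y$, so the required open set exists tautologically. (If one prefers a small $U$, one restricts a global trivialization $g:\pi^{-1}(V)\to Z\times V$ to a product box $U=g^{-1}(W\times V')$ around $y$; the four conditions of Definition \ref{def:smoothly_trivial} are visibly inherited by such a restriction.)

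For $(2)\Rightarrow(1)$ I would argue by the Ehresmann--Thom mechanism of lifting and integrating vector fields, with the twist that condition (4) of Definition \ref{def:smoothly_trivial} only furnishes \emph{locally Lipschitzian} lifts rather than smooth ones. Fix $t_0\in T$ and a chart identifying a neighborhood of $t_0$ with an open ball $B\subset\mathbb R^{d}$, $d=\dim T$, carrying the coordinate vector fields $\partial_1,\dots,\partial_d$. First I would use properness: the fiber $\pi^{-1}(t_0)$ is compact, hence is covered by finitely many of the open sets $U_{y_1},\dots,U_{y_k}$ supplied by (2), each carrying a trivialization $g_{y_j}$ over a base neighborhood $V_{y_j}$ that simultaneously trivializes $S$. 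In the $j$-th trivialization the lift $F_i^{(j)}:=g_{y_j}^{-1}\partial_i$ is, by construction, $\pi$-related to $\partial_i$, locally Lipschitzian by condition (4), and has a local flow that preserves $S$ (in the trivializing coordinates it is a pure translation in the $T$-direction, and $S$ is a product there).

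The heart of the argument is to glue these local lifts into global fields over a common base neighborhood $V\subset\bigcap_j V_{y_j}$. Choosing a smooth partition of unity $\{\rho_j\}$ subordinate to the cover on a neighborhood $N$ of $\pi^{-1}(t_0)$, I would set $\tilde F_i:=\sum_j\rho_j F_i^{(j)}$. Then $\tilde F_i$ is again locally Lipschitzian and $\pi$-related to $\partial_i$, since $d\pi\,\tilde F_i=\sum_j\rho_j\,\partial_i=\partial_i$. The delicate point---and the step I expect to be the main obstacle---is that the flow of $\tilde F_i$ must still map the pair to itself, i.e. preserve $S$ (and hence its complement). This is where the product presentation of $S$ in each chart is essential: on overlaps the two lifts differ by a vertical vector tangent to $S$ inside the fiber, so that $\tilde F_i$ remains \enquote{subtangent} to $S$ and, by the flow-invariance criterion for Lipschitzian fields, its flow leaves $S$ invariant. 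Tracking this invariance for a merely Lipschitzian field, rather than a $C^1$ one, is the technical crux and is precisely what condition (4) is designed to accommodate.

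Finally I would integrate. Since $\pi^{-1}(t_0)$ is compact and $N$ open, properness yields (via a tube-lemma argument) a ball $V\ni t_0$ with $\pi^{-1}(V)\subset N$; the same compactness prevents the integral curves of $\tilde F_i$ from escaping $\pi^{-1}(V)$ while their projection remains in $V$, so the flows are complete over $V$. Because each $\tilde F_i$ covers $\partial_i$, composing the time-$t_i$ flows along $\tilde F_1,\dots,\tilde F_d$ in a fixed order defines a map $\Phi:\pi^{-1}(t_0)\times V\to\pi^{-1}(V)$ whose inverse is obtained by flowing backwards; as each flow is a homeomorphism of pairs preserving $S$, the map $\Phi$ is a trivialization of $\pi:(\pi^{-1}(V),S\cap\pi^{-1}(V))\to V$. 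The fiber $\pi^{-1}(t_0)$ inherits a smooth manifold structure from the local trivializations, $\pi$ is smooth by hypothesis, and conditions (3)--(4) of Definition \ref{def:smoothly_trivial} hold because $\Phi$ is differentiable along $t$ (its $t_i$-derivative is $\tilde F_i\circ\Phi$) while only Lipschitzian transversally. This exhibits $\pi:(Y,S)\to T$ as a locally trivial $\mathcal C^\infty$-fiber bundle of pairs near the arbitrary point $t_0$, which completes the proof.
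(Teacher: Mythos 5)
Your overall architecture coincides with the paper's sketch of the proof from \cite{app-iso}: $(1)\Rightarrow(2)$ is dispatched by taking $U=Y$; for $(2)\Rightarrow(1)$ you use properness to cover the compact fiber $\pi^{-1}(t_0)$ by finitely many trivializing opens, lift coordinate vector fields through the trivializations (condition (4) supplying locally Lipschitzian lifts), glue with a partition of unity, and integrate the glued fields to obtain a trivialization of the pair. The only organizational difference is that the paper runs an induction on $\dim T$, lifting one unit field at a time and decomposing a cube $K^p=K^{p-1}\times K^1$, whereas you compose the flows of all $d$ lifted coordinate fields in a fixed order; these are equivalent ways of unwinding the same construction.

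There is, however, a genuine gap at exactly the step you flag as the crux, namely the claim that the glued field $\tilde F_i=\sum_j\rho_j F_i^{(j)}$ still has an $S$-preserving flow. Your justification is that the lifts differ on overlaps by vertical vectors \enquote{tangent to $S$}, so $\tilde F_i$ is \enquote{subtangent} to $S$ and a Nagumo-type invariance criterion applies. But $S$ is only assumed to be an open or closed subset of $Y$, not a submanifold, so tangency can only mean membership in a tangent cone (Bouligand's, say). Invariance of $S$ under each flow does give $F_i^{(j)}(x)\in T^B_S(x)$, and membership of $\tilde F_i(x)$ in $T^B_S(x)$ would (by Lipschitzness) give invariance back; the missing link is that $\tilde F_i(x)$ is a \emph{convex combination} of the $F_i^{(j)}(x)$, and Bouligand cones of arbitrary closed sets are not convex, so subtangency of the combination does not follow. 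This is precisely where \cite{app-iso}, and the paper's sketch, use a different tool: the superposition lemma (Lemma 2 of \cite{app-iso}), which asserts that the sum of two Lipschitzian fields, each generating a one-parameter group of homeomorphisms leaving $S$ invariant, again generates such a group (preserving even the closure of $S$; open $S$ is handled by passing to complements). Its proof manipulates the flows directly, approximating the flow of the sum by alternating compositions of the individual flows, rather than arguing through tangent vectors. The lemma does apply to your construction, because each summand $\rho_jF_i^{(j)}$ individually preserves $S$: in the $j$-th chart it is a horizontal field with variable speed, whose integral curves stay in the slices $\{z\}\times V_{y_j}$, so the product structure of $S$ there is respected. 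Alternatively, your tangency route can be repaired with the proximal-normal characterization of strong invariance for Lipschitz dynamics, which is a linear condition in the field and hence stable under partition-of-unity combinations; but some such lemma must be supplied, and as written the step fails.
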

What this means is that local triviality in the $\mathcal{C}^\infty$-sense, which is a notion local in $T$, is equivalent to the same condition when we consider it also locally in $Y$.\\
Recall that given a smooth manifold $X$, any smooth submanifold $S\subset X$ of codimension $n$ can be characterized by $n$ equations (see for example \cite{pham}): For every $x\in S$ there is a neighborhood $U\subset X$ of $y$ and a smooth map $s:U\to\mathbb{R}^n$ of rank $n$ such that
\begin{align*}
    S\cap U=\{x\in U\;|\; s(x)=0\}.
\end{align*}
We call the $n$ components $s_i:U\to\mathbb{R}$ of $s=(s_1,\ldots,s_n)$ a set of \textit{local equations} for $S$. Now let $\{S_i\}_{i=1}^m$ be a family of smooth submanifolds of $X$, $x\in\bigcup_{i=1}^mS_i$ a point, and for each $i\in\{1,\ldots m\}$ let $s_1^{(i)},\ldots,s_{n_i}^{(i)}:U\to\mathbb{R}$ be local equations for $S_i$ at $x$. Denote
\begin{align*}
    J=\{j_1,\ldots,j_k\}:=\{i\in\{1,\ldots,m\} \;|\; x\in S_i\}.
\end{align*}
We say the family $\{S_i\}_{i=1}^m$ is in \textit{general position} at $x$ if there exists a chart $(U,\varphi)$ around $x$ such that the gradients of
\begin{align*}
    s_1^{(j_1)}\circ\varphi^{-1},\ldots,s_{n_{j_1}}^{(j_1)}\circ\varphi^{-1},s_1^{(j_2)}\circ\varphi^{-1},\ldots,s_{n_{j_k}}^{(j_k)}\circ\varphi^{-1}
\end{align*}
at $\varphi(x)$ are linearly independent over $\mathbb{R}$. It is not difficult to see that this definition does not depend on the choice of local equations since two sets local equations for a submanifold are connected by a diffeomorphism. If the family $\{S_i\}_{i\in I}$ is in general position at every $x\in V$ for some subset $V\subset\bigcup_{i\in I}S_i$ we say the family is in \textit{general position on $V$}, or simply in \textit{general position} if $V=\bigcup_{i\in I}S_i$. We will always use Theorem \ref{thm:isotopy_thm} in form of the following corollary:
\begin{cor}[\cite{app-iso}]\label{thm:main_corollary}
    Let $X$ and $T$ be smooth manifolds with $X$ compact. Let $\{S_i(t)\}_{i\in I}$ be a family of closed smooth submanifolds of $X$ which depend smoothly on $t\in T$ and are in general position for all $t\in T$. Denote $S(t):=\bigcup_{i\in I}S_i(t)$. Then $\pi:(X\times T,\bigcup_{t\in T}S(t))\to T$ is a locally trivial fiber bundle of pairs, where $\pi:X\times T\twoheadrightarrow T$ is the canonical projection.
\end{cor}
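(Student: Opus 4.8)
The plan is to deduce the statement from the Isotopy Theorem (Theorem \ref{thm:isotopy_thm}) by checking its standing hypotheses and then establishing its local-in-$Y$ criterion (2). Write $Y:=X\times T$, let $\pi:Y\to T$ be the projection, and set $S:=\bigcup_{t\in T}S(t)=\{(x,t)\in Y\mid x\in S(t)\}$. First I would verify the hypotheses of Theorem \ref{thm:isotopy_thm}: the projection $\pi$ is manifestly smooth and open, and it is proper precisely because $X$ is compact, so that $\pi^{-1}(K)=X\times K$ is compact for every compact $K\subset T$. Moreover $S$ is closed, being the finite union of the total spaces $\{(x,t)\mid x\in S_i(t)\}$, each of which is closed since the $S_i(t)$ are closed and vary continuously in $t$. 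By Theorem \ref{thm:isotopy_thm} it then suffices to verify condition (2): for every $y=(x_0,t_0)\in Y$ there is an open $U\subset Y$ containing $y$ such that $\pi:(U,S\cap U)\to T$ is a locally trivial $\mathcal{C}^\infty$-fiber bundle of pairs.

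The heart of the argument is the construction of such a local trivialization by simultaneously straightening the submanifolds through $x_0$. Let $J:=\{i\in I\mid x_0\in S_i(t_0)\}$; since the family is finite (local finiteness is all that is used), $J$ is finite, and because each $S_i(t_0)$ with $i\notin J$ is closed and avoids $x_0$, I can shrink to a neighborhood of $y$ that the corresponding total spaces never enter, so that only the $S_j(t)$ with $j\in J$ are relevant. For each $j\in J$ I would choose local equations $s^{(j)}(t)=(s^{(j)}_1(t),\dots,s^{(j)}_{n_j}(t))$ for $S_j(t)$ near $x_0$ depending smoothly on $t$ as well, and collect them into a single map $\sigma(t)(x)\in\mathbb{R}^c$ with $c=\sum_{j\in J}n_j$. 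The general position hypothesis at $x_0$ says exactly that the differentials of these $c$ functions are linearly independent at $x_0$ for $t=t_0$, so $D_x\sigma(t_0)(x_0)$ has rank $c$; completing $\sigma(t_0)$ by $n-c$ further $t$-independent coordinate functions $\rho$ taken from a fixed chart yields a map $\Psi(x,t):=(\sigma(t)(x),\rho(x))$ whose $x$-differential is invertible at $(x_0,t_0)$.

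From here I would form $G(x,t):=(\Psi(x,t),t)=(\sigma(t)(x),\rho(x),t)$, whose total differential at $(x_0,t_0)$ is block-triangular with invertible diagonal blocks $D_x\Psi$ and the identity on $T$; hence $G$ is a local diffeomorphism fibered over $T$. Choosing the target to be a product $Z\times T'$ of a small ball $Z\subset\mathbb{R}^n$ around $(0,\rho(x_0))$ with a neighborhood $T'$ of $t_0$, and setting $U:=G^{-1}(Z\times T')$, the restriction $g:=G|_U:U\to Z\times T'$ is a diffeomorphism commuting with the projections to $T'$. Crucially, $g$ sends each $S_j(t)\cap U=\{s^{(j)}(t)=0\}$ to the fixed coordinate subspace $\{\xi^{(j)}=0\}$ of $Z$, independently of $t$, so that $g(S\cap U)=S_0\times T'$ with $S_0:=\bigcup_{j\in J}(Z\cap\{\xi^{(j)}=0\})$. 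Thus $g$ trivializes the pair $(U,S\cap U)$, and because $g$ is a genuine diffeomorphism, conditions (1)--(4) of Definition \ref{def:smoothly_trivial} hold automatically: $Z$ is a smooth manifold, $\pi$ is smooth, $g^{-1}$ is smooth hence differentiable in $t$, and the lift $g^{-1}F$ of any smooth vector field $F$ is smooth, hence locally Lipschitzian. This establishes condition (2) of Theorem \ref{thm:isotopy_thm}, whose conclusion then yields the claimed local triviality.

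I expect the main obstacle to lie in the smooth, $t$-dependent straightening of the second and third paragraphs: one must ensure that the local equations can be chosen jointly smoothly in $(x,t)$, that the maximal-rank condition supplied by general position at $t_0$ is genuinely what is needed (it persists in a neighborhood by openness of the full-rank locus, but only its value at $t_0$ enters the inverse function theorem for the single chart), and---most delicately---that the diffeomorphic image can be arranged as an honest product $Z\times T'$, so that $S$ is trivialized by a \emph{fixed} union of coordinate subspaces rather than a $t$-varying one. The compactness of $X$ plays no role in this local construction but is indispensable for the properness of $\pi$ required to invoke Theorem \ref{thm:isotopy_thm}.
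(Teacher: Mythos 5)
Your proof is correct and follows essentially the same route as the paper's own argument: note that $\pi$ is open and proper precisely because $X$ is compact, use the general-position hypothesis to turn the smooth local defining equations $s_j^{(i)}$ (together with coordinates for $t$ and complementary coordinates for $x$) into a fibered chart trivializing the pair near any point of $X\times T$, and then invoke Theorem \ref{thm:isotopy_thm} to pass from this local-in-$Y$ triviality to local triviality over $T$. You have merely fleshed out, carefully and correctly, the details (product-shaped image, verification of conditions (1)--(4) of Definition \ref{def:smoothly_trivial}) that the paper's four-sentence sketch leaves implicit.
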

We will not prove Theorem \ref{thm:isotopy_thm} here (the reader is referred to the original paper for a detailed proof), among other reasons since we do not explicitly reference the $\mathcal{C}^\infty$-triviality from the theorem in what follows, but only use it implicitly by applying Corollary \ref{thm:main_corollary}. But we nevertheless give a short sketch of the proof to display the ideas involved.\\
The implication $(1)\Rightarrow(2)$ is trivial (take $U=Y$) and only $(2)\Rightarrow(1)$ warrants a proof. Any Lipschitzian vector field on a smooth manifold $X$ with compact support defines a one-parameter group of homeomorphisms $X\overset{\sim}{\to}X$. A superposition lemma (Lemma 2 in \cite{app-iso}) ensures that, given two such vector fields with corresponding groups of homeomorphisms leaving invariant a subset $S\subset X$, their sum corresponds to a one-parameter group of homeomorphisms leaving $S$ invariant as well (in fact even the closure of $S$ in $X$). By assumption the fibers $\pi^{-1}(\{t\})$ are compact for all $t\in T$, so each one of them can be covered by finitely many open subsets $U_1,\ldots,U_k\subset Y$ such that $\pi:(Y,U_j)\to T$ is locally trivial (in the $\mathcal{C}^\infty$-sense) for all $1\leq j\leq k$. We denote by $W(t)$ the smallest open neighborhood among the $\pi(U_i)$. If now $H$ is a unit vector field on $\mathbb{R}^{p}$ (with $p$ the real dimension of $T$) we can construct a vector field $fH$ on $W(t)$ by means of a bump-function $f$. By assumption $(4)$ these lift to locally Lipschitzian vector fields (which we may assume to be Lipschitzian by possibly restricting $W(t)$ to an even smaller neighborhood) on $Y$ with compact support. By using a partition of unity subordinate to the $U_i$ we obtain Lipschitzian vector fields on $Y$ with compact support on the $U_i$. The sum of these vector fields yields a vector field $F$ which then generates a one-parameter group of homeomorphisms leaving $S$ invariant. Choosing an open cube $K^p$ in $W(t)$ centered around $t$ and given by coordinates $x_1,\ldots,x_p$ decomposition of the cube then allows for induction on $p:=\dim T$: We choose $H$ such that it is parallel to the $x_p$-direction and decompose $K^p=K^{p-1}\times K^1$ where $K^{p-1}$ is determined by the vanishing of $x_p$, $K^1$ by the vanishing of the remaining coordinates. Since $Y$ and $S$ restricted to $K^{p-1}$ satisfy the conditions of the theorem we may apply the induction hypothesis to them. One can then construct from the local trivialization thus obtained a local trivialization for $Y$ and $S$ restricted to $K^p$ completing the proof.\\
Regarding the corollary first note that the canonical projection $\pi:X\times T\twoheadrightarrow T$ is open and is proper if and only if $X$ is compact. Now since the $S_i$ are smooth in $t$ around any $y\in Y$ there is a neighborhood $U\subset Y$ such that $S_i\cap U$ can be written as the zero set of finitely many $C^\infty$-function $s_1^{(i)},\ldots,s_{n_i}^{(i)}$. General position means that at a point $x_0$ the gradients of all $s_j^{(i)}$ belonging to the $S_i$ intersecting at $x_0$ are linearly independent. We may therefore define the sought trivialization by using the $s_j^{(i)}$ as coordinates (together with coordinates for $t$ and the remaining coordinates for $x$).

\subsection{Treatment of Integrals}\label{sec:integraltreatment}
The general idea to treat an integral of standard form with these techniques is roughly the following: At $t_0$ we have a well-defined integral, the integration domain is compact and does not meet the zero locus of the denominator. If we now try to move along a path $\gamma:[0,1]\to T$ away from $t_0$ to some other $t_1\in T$, the initial zero locus $S(t_0)$ changes continuously along the way. This might lead to a collision of the zero locus with $\Gamma_0$ which potentially causes problems. Invoking Stoke's Theorem however we know that the integral does only depend on the homology class of $\Gamma_0$ (in the ambient space minus the zero loci) so that we can deform it suitably to avoid this from happening. The manifold in which we are allowed to deform the cycle is called the \textit{ambient space} ($Z$ in Definition \ref{def:standard_form}). It is certainly not always possible to find a suitable deformation since the integration cycle might be trapped by the singular locus. This is depicted in Figure \ref{fig:pinch} for the simple example $\int_\sigma \frac{dz}{z^2-t}$ where $\sigma$ is a small circle around $1\in\mathbb{C}$.
\begin{figure}
    \centering
    \begin{subfigure}[b]{0.45\textwidth}
        \begin{tikzpicture}[scale=0.70]
        % Configurable parameters
        \def\gap{0.2}
        \def\bigradius{3}
        \def\littleradius{0.5}

        % Axes
        \draw [help lines,->] (-1.25*\bigradius, 0) -- (1.25*\bigradius,0);
        \draw [help lines,->] (0, -1.25*\bigradius) -- (0, 1.25*\bigradius);

        \draw (0.8,0) arc (-180:180:0.7);

        % The labels
        \node at (2.2,-0.8){$\sigma$};
        \node at (3.6,-0.3){$x$};
        \node at (-0.35,3.53) {$iy$};
        \node at (1.5,0) {$\times$};
        \node at (-1.5,0) {$\times$};
        \node at (1.5,-0.45) {$1$};
        \node at (-1.5,-0.45) {$-1$};
        \draw [->] (1.5,0) to (1,0);
        \draw [->] (-1.5,0) to (-1,0);
        \end{tikzpicture}
        \caption{$t=1$}
        \label{fig:ex1}
    \end{subfigure}
    ~ %add desired spacing between images, e. g. ~, \quad, \qquad, \hfill etc. 
      %(or a blank line to force the subfigure onto a new line)
    \begin{subfigure}[b]{0.45\textwidth}
        \begin{tikzpicture}[scale=0.70]
        % Configurable parameters
        \def\gap{0.2}
        \def\bigradius{3}
        \def\littleradius{0.5}

        % Axes
        \draw [help lines,->] (-1.25*\bigradius, 0) -- (1.25*\bigradius,0);
        \draw [help lines,->] (0, -1.25*\bigradius) -- (0, 1.25*\bigradius);

        %\draw plot [smooth,tension=1.0] coordinates{(1.5,-0.5) (2.2,0) (1.5,0.5) (0.5,0.5) (0.05,0) (0.5,-0.5) (1.5,-0.5)};
        \draw (2,0) to [out=90, in=90] (0,0);
        \draw (2,0) to [out=-90, in=-90] (0,0);
        % The labels
        \node at (3.6,-0.3){$x$};
        \node at (-0.35,3.53) {$iy$};
        \node at (0.15, 0) {$\times$};
        \node at (-0.15, 0) {$\times$};
        \end{tikzpicture}
        \caption{$t\to0$}
        \label{fig:tiger}
    \end{subfigure}
    \caption{The situation for the integral $\int_\sigma \frac{dz}{z^2-t}$: The integration contour $\sigma$ can be deformed to avoid the singular points of the integrand, except at $t=0$, where the singular points pinch the contour.}\label{fig:pinch}
\end{figure}
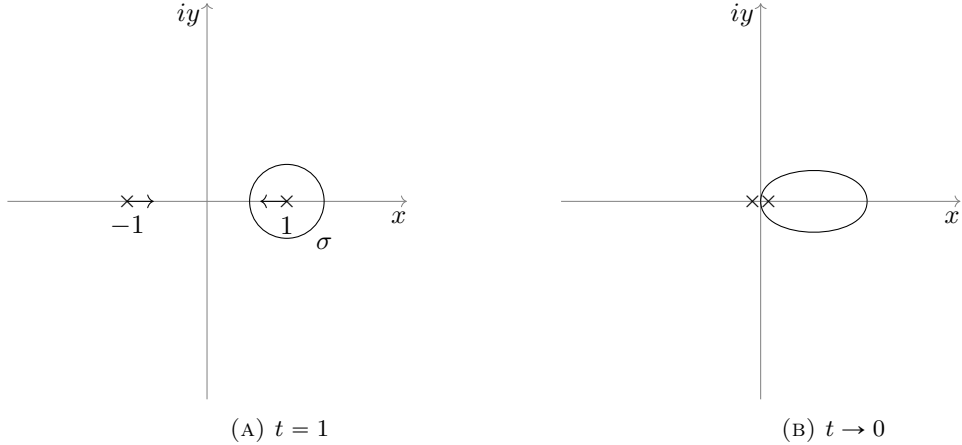
But assuming local triviality a path from $t_0$ to $t_1$ generates a continuous family of homeomorphisms leaving invariant the zero locus and yielding the desired deformation. To make this idea precise we need the following notions:
\begin{defn}[\cite{pham}]
    Let $Z$ be a manifold and $S_0,S_1\subset Z$ two subsets. A homeomorphism of pairs $g:(Z,S_0)\overset{\sim}{\to}(Z,S_1)$ (i.e. a homeomorphism $g:Z\overset{\sim}{\to}Z$ such that $g|_{S_0}:S_0\overset{\sim}{\to}S_1$ is also a homeomorphism) is called an \textit{ambient isotopy} from $S_0$ to $S_1$ in $Z$ if the following holds: There exists a family of subspaces $\{S_s\}_{s\in[0,1]}$ and a continuous family $\{g_s\}_{s\in[0,1]}$ of homeomorphisms $g_s:(Z,S_0)\overset{\sim}{\to}(Z,S_s)$ such that $g_0=\text{id}_Z$ and $g_1=g$. The map $[0,1]\times Z\to Z,\;(s,z)\mapsto g_s(z)$ is called a \textit{realization} of $g$.\footnote{This terminology varies slightly from what is found in \cite{pham}, where the family $\{S_s\}_{s\in[0,1]}$ of interpolating subspaces is called the realization.}
\end{defn}
We can think of an ambient isotopy as a continuous family of homeomorphisms deforming $S_0$ to $S_1$ within the ambient space $Z$, interpolating between $S_0$ and $S_1$ via a family of subspaces. To understand how a path in the parameter manifold $T$ generates an ambient isotopy we also need
\begin{defn}[\cite{pham}]\label{def:inverse_image_bundle}
    Let $\pi:Y\to T$ be a fiber bundle, $f:T'\to T$ a continuous map. Then the \textit{pull-back bundle}\footnote{Some authors in this field, like \cite{pham}, call this the inverse image bundle. Another common name is the \textit{induced bundle}.} of $\pi:Y\to T$ under $f$ is the fiber bundle $\pi':f^{-1} Y\to T'$ given by
    \begin{align*}
        f^{-1} Y:=\{(y,t')\in Y\times T' \;|\; \pi(y)=f(t')\}
    \end{align*}
    and $\pi'(y,t'):=t'$. This definition naturally extends to bundles of pairs. For a fiber bundle of pairs $\pi:(Y,S)\to T$ we denote the corresponding pull-back bundle of pairs by $\pi':f^{-1}(Y,S)\to T'$.
\end{defn}
Recall that homotopic maps $f,g:T'\to T$ induce isomorphic pull-back bundles for sufficiently nice topological spaces $T$ and $T'$ (for example take $T$ and $T'$ to be CW-complexes). As an easy consequence we obtain that fiber bundles over a contractible CW-complex are (globally!) trivial. The same is true for locally trivial fiber bundles of pairs. Hence if we are given a locally trivial bundle of pairs $\pi:(Z\times T,\mathcal{S})\to T$ and a path $\gamma:[0,1]\to T$, the pull-back bundle $\pi':(Z\times[0,1],\mathcal{S}')\to[0,1]$ is trivial as a bundle of pairs. So we conclude that $\gamma$ defines an ambient isotopy from $\mathcal{S}'\cap(\{0\}\times Z)$ to $\mathcal{S}'\cap(\{1\}\times Z)$.\\
Now we return to the the original question of analyticity of integrals. Suppose
\begin{equation}\label{eq:standard_form}
        \int_{\Gamma_0}\frac{\omega(t_0)}{s(t_0)}
\end{equation}
is an integral of standard form. First we note that, since $\omega(t_0)$ is closed as well as analytic and $s(t_0)$ is analytic on $Z-S(t_0)$, the form $\frac{\omega(t_0)}{s(t_0)}$ is a closed analytic $n$-form on $Z-S(t_0)$. Hence the result of the integration depends only on the homology class of $\Gamma_0$ by Stoke's theorem. We may differentiate with respect to $t$ at $t_0$ under the integral sign since $\Gamma$ is compact, $S(t)$ closed, and $\frac{\omega(t)}{S(t)}$ is analytic at $t_0$. So equation (\ref{eq:standard_form}) defines a germ of an analytic function at $t_0$. Denote $\mathcal{S}:=\bigcup_{t\in T}S(t)\times\{t\}\subset Z\times T$. If now $\gamma$ is a path $\gamma:[0,1]\to T$ from $t_0$ to some $t\in T$ such that the pull-back bundle of $\pi:(Z\times T,\mathcal{S})\to T$ by $\gamma$ is locally trivial and $\sigma:[0,1]\times Z\to Z$ is a realization of the induced ambient isotopy we can analytically continue along $\gamma$ via
\begin{equation}
        \int_{\sigma(s,\Gamma_0)}\frac{\omega(\gamma(s))}{s(\gamma(s))}.
\end{equation}
In particular we have the following theorem:
\begin{thm}[\cite{app-iso}]\label{thm:standard_form}
    Let
    \begin{align*}
        \int_{\Gamma_0}\frac{\omega(t_0)}{s(t_0)}
    \end{align*}
    be an integral of standard form such that for $\mathcal{S}:=\bigcup_{t\in T}S(t)$ the fiber bundle of pairs $\pi:(Z\times T,\mathcal{S})\to T$ is locally trivial. Then the integral defines a germ of an analytic function at $t_0$ which can be analytically continued along any path in $T$. Moreover the continuation at $t_1\in T$ is of the form
    \begin{align*}
        \int_{\Gamma}\frac{\omega(t_1)}{s(t_1)}
    \end{align*}
    with $\Gamma\in H_n^c(Z-S(t_1))$ (where $H_n^c(X)$ denotes the $n$th compact homology group of a topological space $X$).
\end{thm}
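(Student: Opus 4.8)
The plan is to make rigorous the heuristic of Subsection \ref{sec:integraltreatment}: a path in $T$ generates an ambient isotopy of the ambient space $Z$ which drags the singular locus along with it, and integrating over the transported cycle produces the continuation. I would proceed in three steps — producing the germ at $t_0$, extracting an ambient isotopy from the path, and assembling the continuation — and then isolate the one point that requires genuine care.

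First I would establish that (\ref{eq:standard_form}) defines an analytic germ at $t_0$. Since $\omega(t_0)$ is a regular analytic $n$-form on the $n$-dimensional complex manifold $Z$ and $s(t_0)$ is analytic and nowhere vanishing on $Z-S(t_0)$, the quotient $\frac{\omega(t_0)}{s(t_0)}$ is an analytic $n$-form on $Z-S(t_0)$; as a holomorphic form of top degree it is $d$-closed, so by Stokes' theorem the integral depends only on the class $[\Gamma_0]\in H_n^c(Z-S(t_0))$. Because $\Gamma_0$ is compact and $S(t_0)$ is closed, $\Gamma_0$ admits a compact neighborhood disjoint from $S(t_0)$, and by continuity of $t\mapsto S(t)$ this neighborhood remains disjoint from $S(t)$ for $t$ near $t_0$. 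On such a neighborhood $\frac{\omega(t)}{s(t)}$ is jointly analytic in $(y,t)$, so the integral $t\mapsto\int_{\Gamma_0}\frac{\omega(t)}{s(t)}$ is a uniform-on-compacts limit of its Riemann sums, each analytic in $t$; Proposition \ref{thm:compact_convergence} then yields analyticity of the germ at $t_0$.

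Next, given a path $\gamma:[0,1]\to T$ from $t_0$ to $t_1$, I would pull back the locally trivial bundle of pairs $\pi:(Z\times T,\mathcal{S})\to T$ along $\gamma$ (Definition \ref{def:inverse_image_bundle}). Since $[0,1]$ is contractible, the pulled-back bundle of pairs $(Z\times[0,1],\mathcal{S}')\to[0,1]$ is globally trivial as a bundle of pairs, as recalled in the preceding discussion. A global trivialization of pairs furnishes a realization $\sigma:[0,1]\times Z\to Z$ of an ambient isotopy with $\sigma_0=\mathrm{id}_Z$ and $\sigma_s:(Z,S(t_0))\xrightarrow{\sim}(Z,S(\gamma(s)))$ a homeomorphism of pairs for every $s$. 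In particular each $\sigma_s$ carries complements to complements, so $\Gamma_s:=\sigma_s(\Gamma_0)$ is a compact $n$-cycle in $Z-S(\gamma(s))$, and $\Gamma:=\Gamma_1\in H_n^c(Z-S(t_1))$. For each $s_*\in[0,1]$ the argument of the first step, applied to $\Gamma_{s_*}$ in place of $\Gamma_0$, produces an open neighborhood $U_{s_*}\subset T$ of $\gamma(s_*)$ on which $t\mapsto\int_{\Gamma_{s_*}}\frac{\omega(t)}{s(t)}$ is analytic. Covering the compact image of $\gamma$ by finitely many such neighborhoods gives a chain of analytic germs along $\gamma$; on overlaps the representing cycles are connected by the sub-isotopy $\{\sigma_r\}$, which keeps them inside $Z-S(t)$, hence homologous in $H_n^c(Z-S(t))$, so Stokes forces the germs to agree there. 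This is exactly an analytic continuation of the germ from $t_0$ along $\gamma$, whose value at $t_1$ is represented by $\int_{\Gamma}\frac{\omega(t_1)}{s(t_1)}$.

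The main obstacle I anticipate is the bookkeeping in the last step rather than any deep analytic input: one must verify that the trivialization genuinely provides a homeomorphism of \emph{pairs} at every intermediate time, so that the deformed cycle avoids the \emph{moving} singular locus $S(\gamma(s))$ throughout and never acquires a spurious intersection, and that the finitely many locally defined analytic functions are mutually consistent. Both points rest on the isotopy being one of pairs — which is precisely what local triviality of $\pi:(Z\times T,\mathcal{S})\to T$ buys us — together with the homotopy-invariance of compact homology. The purely analytic ingredients, namely differentiation under the integral sign and Stokes' theorem, are routine once compactness of the cycles and disjointness from the singular locus are secured.
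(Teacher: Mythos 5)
Your proposal is correct and takes essentially the same route as the paper, whose own justification is exactly the discussion of Subsection \ref{sec:integraltreatment} preceding the theorem: closedness of $\frac{\omega(t)}{s(t)}$ plus Stokes' theorem gives homology invariance, analyticity in $t$ of the integral over the compact cycle gives the germ at $t_0$, and global triviality of the pull-back bundle of pairs over the contractible $[0,1]$ yields the ambient isotopy $\sigma$, so that the continuation is $\int_{\sigma(s,\Gamma_0)}\frac{\omega(\gamma(s))}{s(\gamma(s))}$. The only slip is your claim that the sub-isotopy keeps the intermediate cycles inside $Z-S(t)$ for a \emph{fixed} nearby $t$ --- it only keeps $\sigma_r(\Gamma_0)$ off the \emph{moving} locus $S(\gamma(r))$, and germ consistency on overlaps needs the routine extra observation that the compact cycle $\Gamma_{s_*}$ admits a product neighborhood $V\times U'\subset Z\times T$ disjoint from the closed set $\mathcal{S}$, inside which all nearby transported cycles stay by continuity of the realization --- but this is a repairable detail, and your write-up is already more careful than the paper's own sketch.
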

Note that, since local triviality is a local condition, Theorem \ref{thm:standard_form} implies that an integral of standard form always defines an analytic function on the path-connected component of
\begin{align*}
    \{t\in T \;|\; \pi:(Z\times T,\mathcal{S})\to T\text{ is locally trivial at }t\}\subset T
\end{align*}
containing $t_0$.

\section{Quadratic Integrals}\label{sec:quadric_integrals}
In this section we introduce the class of integrals under investigation in this paper. To do so we first need to set up some notation: We denote the space of $m\times n$-matrices with coefficients in a ring $R$ by $M(m\times n;R)$. For any matrix $M\in M(m\times n;R)$ we denote its transpose by $M^T\in M(n\times m;R)$. For a vector $z\in R^n$ we denote $z^2:=z_1^2+\cdots+z_n^2$ where $z_i$ is the $i$th component of $z$. Furthermore we sometimes write $z_1z_2:=z_1^Tz_2$ for $z_1,z_2\in R^n$ if this does not lead to confusion. Let $z\in R^n\simeq M(n\times1,R)$ and $M\in M(n\times n;R)$. We always identify the $1\times1$ matrix $z^TMz$ with its single entry in $\mathbb{C}$ without explicit mention. Furthermore we say (by slight abuse of language) that $M$ is positive definite (resp. positive semi-definite) if all its entries are real and if it is positive definite (resp. positive semi-definite) as a matrix in $M(n\times n;\mathbb{R})\subset M(n\times n;\mathbb{C})$. The adjugate matrix of $M$ is denoted by $\text{adj}\,M$. Let $U\subset\mathbb{C}^m$ be an open set. By a family of quadratic functions $\{Q(t)\}_{t\in U}$ on $\mathbb{C}^n$ analytically dependent on $t\in U$ we mean a family of functions
\begin{align*}
    Q(t):\mathbb{C}^n\to\mathbb{C},\quad z\mapsto z^TM(t)z+2a(t)^Tz+b(t)
\end{align*}
where $M(t)\in M(n\times n;\mathbb{C})$ is a symmetric matrix, $a(t)\in\mathbb{C}^n$, and $b(t)\in\mathbb{C}$ all analytic in $t$. For $a\equiv b\equiv 0$ this is a quadratic form and we say that $\{M(t)\}_{t\in U}$ is the family of matrices representing $\{Q(t)\}_{t\in U}$. Of course this only makes sense with respect to a basis which we fix once and for all to be, say, the canonical one (if not explicitly mentioned otherwise) for all $t\in U$. We can turn such a family of quadratic functions into a family of quadratic forms by adding one auxiliary variable, homogenizing the expression. For a quadratic function $Q:\mathbb{C}^n\to\mathbb{C}$ with $Q(z)=z^TMz+2a^Tz+b$ we denote this by
\begin{align*}
    H(Q):\mathbb{C}^{n+1}\to\mathbb{C},\quad (z_0,z)&\mapsto z^TMz+2z_0a^Tz+z_0^2b\\
    &\;=\begin{pmatrix}z_0\\ z\end{pmatrix}^T\tilde{M}\begin{pmatrix}z_0\\ z\end{pmatrix},
\end{align*}
where
\begin{align*}
    \tilde{M}:=\begin{pmatrix} b & a^T \\ a & M \end{pmatrix}\in M((n+1)\times(n+1);\mathbb{C}).
\end{align*}
Note that if $\{Q(t)\}_{t\in U}$ is an analytic family of quadratic functions then $\{H(Q(t))\}_{t\in U}$ still depends analytically on $t$. More generally for any finite linear combination of power functions $p(z)=\sum_\alpha c_\alpha z^\alpha$ (where we employ the usual multi-index notation and the sum is over a finite set of complex numbers $\alpha$) of degree $\text{deg}\,p\in\mathbb{C}$ (which we define to be the exponent with the largest real part) we define the corresponding homogeneous function $H(p)$ by
\begin{align*}
    H(p)(z_0,z):=\sum_\alpha c_\alpha z_0^{\deg\,p-\sum_{i=1}^n\alpha_i}z^\alpha.
\end{align*}
The integrals (\ref{eq:introintegral}) from the introduction are not of standard form. To have a chance of applying the isotopy techniques to such an integral we first need to compactify the integration domain and the ambient space. Such a compactification is of course not unique and we have to make a choice.
\begin{defn}\label{def:quadric_integral}
    Let $U\subset \mathbb{C}^m$ be open, $\{Q_1(t)\}_{t\in U},\ldots,\{Q_N(t)\}_{t\in U}$ be analytic families of quadratic functions on $\mathbb{C}^n$, $\{p(t)\}_{t\in U}$ an analytic family of finite linear combinations of power functions on $\mathbb{C}^n$ with exponents having non-negative real part and let $\lambda_1,\ldots,\lambda_N\in\mathbb{C}$ with $\text{Re}\lambda_i>0$ for all $i\in\{1,\ldots,N\}$. Denote $\lambda:=\sum_{i=1}^N\lambda_i$. We call an integral of the form
    \begin{align*}
        I(t)=\int_{\mathbb{R}^n}\frac{p(t)(y)}{\prod_{i=1}^N(Q_i(t)(y))^{\lambda_i}}dy_1\wedge\cdots\wedge dy_n
    \end{align*}
    a \textit{quadratic integral} on $U$.\\
    If moreover the $Q_i(t)$ are quadratic forms for all $i\in\{1,\ldots, N\}$ and $t\in U$, $n$ is odd and $p(t)$ is of degree $2\lambda-n-1$ for all $t\in U$, we call an integral of the form
    \begin{align*}
        I(t)=\int_{\mathbb{R}\mathbb{P}^n}\frac{p(t)(z)\cdot\Omega_n}{\prod_{i=1}^N(Q_i(t)(z))^{\lambda_i}}
    \end{align*}
    a \textit{projective quadratic integral} on $U$ where $\Omega_n:=\sum_{i=0}^n(-1)^iz_idz_0\wedge\cdots\wedge\widehat{dz_i}\wedge\cdots\wedge dz_n$.
\end{defn}
The integrals in this definition should a priori be read as formal expressions, i.e. as just the collection of the set $U$, the analytic families of quadratic functions or forms $\{Q_i(t)\}_{t\in U}$, the analytic family $\{p(t)\}_{t\in U}$, the exponents $\lambda_1,\ldots,\lambda_N$, and the natural numbers $m$ and $n$. As long as no confusion can arise we always abbreviate by simply writing $I$ for a (projective) quadratic integral. If for a given integral $I$ on $U$ there is a subset $V\subset U$ such that $I(t)$ converges for all $t\in V$ then $I$ defines a map $V\to\mathbb{C}$.\footnote{To be a bit pathological: Since we technically allow $V$ to be empty this condition is of course always fulfilled. This will cause no confusion in what follows.} If $I(t)$ converges for a given $t\in T$ or not is of course heavily dependent on the values of $\lambda_1,\ldots,\lambda_N$. We always implicitly identify the formal integral with the corresponding function when appropriate without mentioning so explicitly.
Non-projective and projective quadratic integrals are related by the following simple proposition.
\begin{prop}\label{thm:projectivize}
    Let $n\in\mathbb{N}$ be odd. Then the identity
    \begin{align*}
    \int_{\mathbb{R}^n}\frac{p(t)(y)dy_1\wedge\cdots\wedge dy_n}{\prod_{i=1}^N(Q_i(t)(y))^{\lambda_i}}=
    \int_{\mathbb{R}\mathbb{P}^n}\frac{H(p(t))(z)z_0^{2\lambda-n-1-\text{deg}\,p(t)}}{\prod_{i=1}^N(H(Q_i(t))(z))^{\lambda_i}}\cdot\Omega_n
    \end{align*}
    holds whenever one (and hence both) of the two integrals converges. In particular every quadratic integral with $2\textup{Re}(\lambda)-n-1-\textup{deg}(p)\geq0$ or $2\textup{Re}(\lambda)-n-1-\textup{deg}(p)=-2$ can be identified with a projective one.
\end{prop}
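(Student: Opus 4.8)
The plan is to express the left-hand side as the integral over the standard affine chart of the pullback of the projective integrand, and then to argue that the complementary hyperplane at infinity is negligible. Write $\phi:\mathbb{R}^n\to\mathbb{R}\mathbb{P}^n,\ y\mapsto[1:y_1:\cdots:y_n]$ for the embedding onto the chart $U_0:=\{[z]\mid z_0\neq0\}$; it is a diffeomorphism onto $U_0$ whose complement is the hyperplane at infinity $\{z_0=0\}\cong\mathbb{R}\mathbb{P}^{n-1}$. I would realize $\phi$ by the section $y\mapsto(1,y)$ into $\mathbb{R}^{n+1}\setminus\{0\}$ and pull the homogeneous integrand back along it.

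First I would verify the identity of integrands on this chart. By the definition of the homogenisation $H$ one has $H(Q_i(t))(1,y)=Q_i(t)(y)$ and $H(p(t))(1,y)=p(t)(y)$, while the factor $z_0^{2\lambda-n-1-\deg p}$ equals $1$ on the slice $z_0=1$. For the form it suffices to pull $\Omega_n$ back along $y\mapsto(1,y)$: since $dz_0$ pulls back to $0$, every summand of $\Omega_n$ except the $i=0$ term drops out, and the surviving term gives $dy_1\wedge\cdots\wedge dy_n$. Combining these, the pullback of the projective integrand is exactly the affine integrand.

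Next I would justify that $\int_{\mathbb{R}\mathbb{P}^n}$ is well defined and evaluate it through the chart. A degree count gives numerator degree $2\lambda-n-1$ and denominator degree $2\lambda$, so the fraction is homogeneous of degree $-(n+1)$; since $\Omega_n$ is homogeneous of degree $n+1$ the integrand is homogeneous of degree $0$. The oddness of $n$ enters here: the antipodal map scales $\Omega_n$ by $(-1)^{n+1}=+1$, so $\mathbb{R}\mathbb{P}^n$ is orientable and the integral over it is meaningful, and I would orient it so that $\phi$ preserves orientation. Because $\mathbb{R}\mathbb{P}^n\setminus U_0$ is a submanifold of dimension $n-1$ and hence Lebesgue-null, I would conclude $\int_{\mathbb{R}\mathbb{P}^n}=\int_{U_0}=\int_{\mathbb{R}^n}\phi^*(\cdots)$ once the common integrand is Lebesgue integrable; applying the same reasoning to the absolute value gives the equivalence of convergence of the two sides, which is the asserted ``one (and hence both)''.

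For the final sentence I would check that in the two stated regimes the right-hand side matches Definition \ref{def:quadric_integral}. Put $c:=2\lambda-n-1-\deg p$. If $\text{Re}(c)\geq0$, then $H(p)z_0^{c}$ is again a finite linear combination of power functions with exponents of non-negative real part and of total degree $2\lambda-n-1$, so it is an admissible numerator and the right-hand side is literally a projective quadratic integral. If $c=-2$, I would instead move $z_0^{-2}$ into the denominator: since $z_0^2=H(Q_{N+1})$ for the constant $Q_{N+1}\equiv1$, this adjoins one further quadratic form with exponent $\lambda_{N+1}=1$ and leaves the admissible numerator $H(p)$, whose degree $\deg p=2(\lambda+1)-n-1$ matches the new total exponent $\lambda+1$. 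The step I expect to be most delicate is not the chart computation but the unambiguous descent of the integrand to $\mathbb{R}\mathbb{P}^n$: the power $z_0^{c}$ and the factors $Q_i^{\lambda_i}$ carry complex exponents and are multivalued, so I would carry the argument out on the oriented double cover $S^n$, where only positive real rescalings occur and all branches are unambiguous, and then pass to $\mathbb{R}\mathbb{P}^n$ using the factor $2$ from the covering; the parity of $n$ is exactly what makes this descent consistent.
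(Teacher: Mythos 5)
Your proposal is correct and follows essentially the same route as the paper: the paper's (much terser) proof also works through the affine chart inclusion $y\mapsto[1:y_1:\cdots:y_n]$, identifies the pulled-back integrand after cancellation of $z_0$-factors, and dismisses the real hyperplane at infinity as a set of measure zero. The details you supply beyond that sketch --- the explicit pullback of $\Omega_n$, the degree-$0$ homogeneity count, orientability of $\mathbb{R}\mathbb{P}^n$ for odd $n$, the absolute-value argument for ``one (and hence both)'', the two regimes $c\geq 0$ and $c=-2$ (with $z_0^2=H(1)$ adjoined as an extra quadratic form), and the branch bookkeeping via the double cover $S^n$ --- are all sound elaborations of steps the paper leaves implicit rather than a different method.
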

\begin{proof}
    We consider the inclusion
    \begin{align*}
        i:\mathbb{C}^n\hookrightarrow\mathbb{C}\mathbb{P}^n,\quad (x_1,\ldots,x_n)\mapsto [1:x_1:\cdots:x_n].
    \end{align*}
    It restricts to a diffeomorphism onto its image. We transform the quadratic integral via the pull-back of the inverse of this restriction. After cancellation of factors of $z_0$, this pull-back of the integrand can readily be extended to the form given above. Note that we added the real hyperplane at infinity to the integration domain. This is a set of measure zero and does not change the result of the integration.
\end{proof}
Here we needed to restrict ourselves to odd $n$ due to orientability. This is however not a serious restriction since we can augment a quadratic integral by an auxiliary integration $1=\int_\mathbb{R}\frac{dx}{x^2+\pi^2}$ to get from an even $n$ to an odd one. Adding such a dummy integration has no influence on our arguments as we will see further below. In the rest of this paper we drop this dummy integration from our notation. We are interested in certain regular behavior of the (projective) quadratic functions involved in these integrals which will allow us to apply the isotopy techniques from Subsection \ref{sec:isotopy}. Hence the following definition:
\begin{defn}
    Let $I$ be a projective quadratic integral on $U\subset\mathbb{C}^m$ and $M_1(t),\ldots,M_N(t)$ the matrices representing the $Q_1(t),\ldots,Q_N(t)$. We say that $I$ is \textit{regular at $t_0\in U$} if all matrices $M_1(t_0),\ldots,M_N(t_0)$ are positive definite, \textit{quasi-regular at $t_0\in U$} if they are positive semi-definite. If $V\subset U$ is a subset we say the integral $I$ is \textit{quasi-regular on $V$} if the matrices $M_1(t),\ldots,M_N(t)$ are positive semi-definite for all $t\in V$ and \textit{regular on $V$} if the matrices are positive definite for all $t\in V$.\\
    In both the projective and non-projective case we say that $I(t)$ is \textit{superficially convergent}\footnote{This term stems from the mathematics of Feynman integrals in quantum field theory. Note that $I(t)$ being superficially convergent does not imply that the integral $I(t)$ converges in the usual sense. It should also be noted that this does not quite agree with the usual meaning of superficial convergence in the case of Feynman integrals. To avoid confusion we emphasis this in Section \ref{sec:feynman_integral}, where we initially explicitly distinguish between the notion of being superficially convergent as a projective quadratic integral and simply being superficially convergent.} if $2\text{Re}\lambda\geq n+1+\text{deg}\,p(t)$ for all $t\in U$.
\end{defn}
While the term \enquote{superficially convergent} is adopted from the physics literature, the motivation to introduce this terminology here is Proposition \ref{thm:projectivize} which tells us that we can understand superficially convergent quadratic integrals as projective quadratic integrals.\\ 
As it turns out the regular integrals are rather easy to handle from the isotopy perspective. We choose the term \enquote{quasi-regular} because if a projective quadratic integral is quasi-regular we can still salvage the situation and obtain a regular integral by applying a regularization procedure introduced in Subsection \ref{sec:regularization}. It should be remarked that including the auxiliary integration $\int_\mathbb{R}\frac{dx}{x^2+\pi^2}$ in case of even $n$ generally turns an integral which is regular at some $t_0$ into one which is only quasi-regular at this point. The denominator of the dummy integration also needs to be regularized. A projective quadratic integral $I$ has a compact $n$-cycle as integration domain and the zero loci $V(Q_i(t))$ are closed analytic sets depending analytically on $t\in\mathbb{C}^m$ for all $i\in\{1,\ldots,N\}$. Moreover the integrand constitutes an analytic family of analytic $n$-forms on $\mathbb{C}\mathbb{P}^n$ (which we will take to be the compact ambient space) which are regular outside of an analytic set. Thus in this case $I$ satisfies almost all conditions to be of standard form. To understand $I$ as a function of $t$ we want to apply Corollary \ref{thm:main_corollary} which is of course not always possible since there might not be a $t_0\in U$ such that the integration domain and the zero loci do not intersect. Integrals which are regular at at least one point however have this desirable properties. We rely on the following fact: Let $U\subset\mathbb{C}^m$ and let $\{Q_1(t)\}_{t\in U},\ldots,\{Q_N(t)\}_{t\in U}$ be families of quadratic forms with corresponding matrix representatives $\{M_1(t)\}_{t\in U},\ldots,\{M_N\}_{t\in U}$. Suppose there exists $t_0\in U$ such that $M_i(t_0)$ has full rank for all $i\in\{1,\ldots,N\}$. Then the $M_1(t),\ldots,M_N(t)$ already (simultaneously) have full rank on an open dense subset of $U$ (see Proposition \ref{thm:basics_analytic_sets}) since the set of all $t\in U$ where the rank of one of the $M_i(t)$ is not maximal is determined by the vanishing of the product $\prod_{j=1}^N\det M_j(t)$ which is an analytic function in $t$ (for example by Leibniz formula for the determinant).
\begin{prop}\label{thm:regular_integral}
    Let $I$ be a projective quadratic integral on $U\subset\mathbb{C}^m$. If there exists $t_0\in U$ such that the integral is regular at $t_0$, then $I$ is of standard form. Moreover, if $S\subset U$ denotes the set of all $t$ such that the $V(Q_i(t))$ are not in general position on $\mathbb{C}\mathbb{P}^n$, there exists an open subset $V\subset U$ with $U-V\subset S$ such that for all $t\in V$ and all $i\in\{1,\ldots,N\}$ the zero locus $V(Q_i(t))$ is a complex analytic manifold of (complex) codimension 1.
\end{prop}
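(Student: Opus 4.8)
The plan is to prove the two assertions separately, taking the compact ambient space to be $Z=\mathbb{C}\mathbb{P}^n$ throughout and writing $M_i(t)$ for the symmetric matrix representing the quadratic form $Q_i(t)$. To verify that $I$ is of standard form I would check the three conditions of Definition \ref{def:standard_form} in turn. First, the singular locus $S(t)=\bigcup_{i=1}^N V(Q_i(t))$ is the projective zero locus of the single analytic function $\prod_{i=1}^N Q_i(t)$ and is therefore a closed analytic subset of $\mathbb{C}\mathbb{P}^n$ depending analytically on $t$, since the entries of the $M_i(t)$ are analytic in $t$; this settles condition (1). For condition (2) a degree count shows the integrand is homogeneous of total degree $0$: the form $\Omega_n$ has homogeneity degree $n+1$, the numerator $p(t)$ has degree $2\lambda-n-1$, and $\prod_i Q_i(t)^{\lambda_i}$ has degree $2\lambda$, so the exponents sum to $0$. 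Hence $\frac{p(t)(z)\,\Omega_n}{\prod_i Q_i(t)^{\lambda_i}}$ descends to a closed analytic $n$-form on $\mathbb{C}\mathbb{P}^n-S(t)$ whose only singularities lie along $S(t)$ and which depends analytically on $t$.

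The decisive point is condition (3), namely that $\Gamma_0:=\mathbb{R}\mathbb{P}^n$ is a compact $n$-cycle contained in $\mathbb{C}\mathbb{P}^n-S(t_0)$, and this is where I would invoke regularity at $t_0$. Since each $M_i(t_0)$ is real symmetric and positive definite, $Q_i(t_0)(z)=z^T M_i(t_0)z>0$ for every real $z\neq 0$, so $V(Q_i(t_0))$ contains no real point and consequently $\mathbb{R}\mathbb{P}^n\cap S(t_0)=\emptyset$. Finally $\mathbb{R}\mathbb{P}^n$ is compact and, because $n$ is odd (as required for a projective quadratic integral), orientable, so it defines a genuine compact $n$-cycle. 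Together these three verifications show that $I$ is of standard form.

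For the second assertion I would set $V:=\{t\in U \mid \det M_i(t)\neq 0 \text{ for all } i\}$, which is the complement in $U$ of the zero locus of the analytic function $\prod_{i=1}^N \det M_i(t)$ and hence open; by the regularity hypothesis (which gives $t_0\in V$) together with Proposition \ref{thm:basics_analytic_sets} it is moreover nonempty and dense. For $t\in V$ each quadric is nondegenerate: a point $[z]\in V(Q_i(t))$ is singular precisely when the gradient $2M_i(t)z$ vanishes, which for $z\neq 0$ forces $M_i(t)$ to be singular; since $\det M_i(t)\neq 0$ no such point exists, and the (complex) implicit function theorem then exhibits $V(Q_i(t))$ as a smooth complex hypersurface, i.e. a complex analytic manifold of codimension $1$.

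It then remains to establish the containment $U-V\subset S$. If $t\in U-V$ then $\det M_i(t)=0$ for some $i$, so there is $v\neq 0$ with $M_i(t)v=0$; this gives $[v]\in V(Q_i(t))$ while the gradient of the defining equation vanishes at $[v]$, so this quadric fails to be a smooth submanifold there and the family $\{V(Q_j(t))\}_j$ cannot be in general position at $[v]$, whence $t\in S$. I expect the step requiring the most care to be the bookkeeping in the first paragraph: because the $\lambda_i$ are complex the integrand is genuinely multivalued, so one must be careful to pin \emph{all} singular behavior to the analytic hypersurface $S(t)$ and to read the standard-form framework in terms of the (compact, twisted) homology of $\mathbb{C}\mathbb{P}^n-S(t)$ rather than treating the denominator as a single-valued analytic function.
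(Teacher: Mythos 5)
Your proof is correct and follows essentially the same route as the paper's: positive definiteness at $t_0$ gives $\mathbb{R}\mathbb{P}^n\cap S(t_0)=\emptyset$, the set $V$ where all $\det M_i(t)\neq 0$ is open (and dense by Proposition \ref{thm:basics_analytic_sets}), the implicit function theorem yields the codimension-1 manifold structure there, and a kernel vector of a degenerate $M_i(t)$ produces a point of $V(Q_i(t))$ with vanishing gradient, so $U-V\subset S$. The only difference is that you verify conditions (1) and (2) of Definition \ref{def:standard_form} explicitly (including the degree count and the multivaluedness caveat for complex $\lambda_i$), which the paper instead disposes of in the discussion preceding the proposition.
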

\begin{proof}
    For the first claim it suffices to show that $\mathbb{R}\mathbb{P}^n\cap S(t_0)=\emptyset$. Since the integral is assumed to be regular at $t_0$ the zero locus of each $Q_i(t_0)$ restricted to the reals is trivial. Thus $[z]\in\mathbb{R}\mathbb{P}^n$ implies $Q_i(z)\neq0$ (independent of the representative $z$) for all $1\in\{1,\ldots,N\}$.\\
    Due to regularity, there exists an open and dense subset $V\subset U$ by the remark above such that the matrices $M_i(t)$ representing the $Q_i(t)$ have full rank for all $t\in V$. Thus for $t\in V$ the differential of $Q_i(t)$ vanishes nowhere on $\mathbb{C}\mathbb{P}^n$ for all $1\in\{1,\ldots,N\}$. Hence by the implicit function theorem $Q_i(t)^{-1}(\{0\})$ is indeed a complex analytic manifold of codimension 1. Note that we indeed have $U-V\subset S$ since if $t\in U-V$ there is $z\in\ker\,M_i(t)$ with $z\neq0$. Since this is equivalent to $\frac{\partial}{\partial z}Q_i(t)(z)=0$, the zero loci are not in general position at $z$, i.e. $t\in S$.
\end{proof}
We conclude that a projective quadratic integral $I$ on $U\subset\mathbb{C}^m$ with a regular point can be viewed as an integral of standard form on the non-empty open set $\{t\in U \;|\; \forall i\in\{1,\ldots,N\}: \text{rank}\,M_i(t)=n\}$. All points $t\in U$ which we sacrifice by this restriction are classified as potentially non-analytic by Corollary \ref{thm:main_corollary} anyway. Hence we do not lose information this way.

\subsection{The Landau Surface}
By the previous Proposition \ref{thm:regular_integral} we now get explicit equations characterizing the potentially non-analytic points of $I$. For $U\subset\mathbb{C}^n$ open and $f\in\mathcal{O}(U)$ we denote $\frac{\partial}{\partial z}f:=(\frac{\partial}{\partial z_1}f,\ldots,\frac{\partial}{\partial z_n}f)$. We immediately obtain the following corollary:
\begin{cor}\label{thm:general_landau}
    Let $I$ be a projective quadratic integral and suppose there exists $t_0\in\mathbb{C}^m$ such that $I$ is regular at $t_0$. Then $I$ defines a germ of an analytic function at $t_0$ which can be analytically continued along any path in $\mathbb{C}^m\backslash L$, where $L$ is defined as the set of all $t\in\mathbb{C}^m$ such that there exists a solution $(\alpha,z,t)\in(\mathbb{C}^N\backslash\{0\})\times\mathbb{C}^{n+1}\times\mathbb{C}^m$ to
    \begin{align}\label{eq:landau_equations}
    \begin{split}
        \alpha_i=0\quad\text{or} & \quad Q_i(t)(z)=0\qquad\forall i\in\{1,\ldots,N\}\\
        & \sum_{i=1}^N\alpha_i\frac{\partial}{\partial z} Q_i(t)(z)=0.
    \end{split}
    \end{align}
    In particular $I$ defines a multivalued function on $\mathbb{C}^m\backslash L$.
\end{cor}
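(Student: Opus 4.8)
\section*{Proof proposal}

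The plan is to combine Proposition \ref{thm:regular_integral} with the isotopy machinery, the bridge being the observation that the Landau equations \eqref{eq:landau_equations} describe precisely the failure of general position of the family $\{V(Q_i(t))\}_i$ on $\mathbb{C}\mathbb{P}^n$. Since $I$ is regular at $t_0$, Proposition \ref{thm:regular_integral} already shows that $I$ is of standard form, with ambient space $\mathbb{C}\mathbb{P}^n$, integration cycle $\mathbb{R}\mathbb{P}^n$, and parameter manifold $\mathbb{C}^m$; moreover $\mathbb{R}\mathbb{P}^n\cap S(t_0)=\emptyset$, so by the discussion in Subsection \ref{sec:integraltreatment} the integral defines an analytic germ at $t_0$. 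It remains to produce the analytic continuation, and for that I would set up the fiber bundle of pairs $\pi:(\mathbb{C}\mathbb{P}^n\times T,\mathcal{S})\to T$ over the open locus $T=\mathbb{C}^m\backslash L$ and invoke Corollary \ref{thm:main_corollary} together with Theorem \ref{thm:standard_form}.

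First I would rewrite the disjunctive first line of \eqref{eq:landau_equations} as the single analytic equation $\alpha_iQ_i(t)(z)=0$ for each $i$. Since this equation and $\sum_i\alpha_i\frac{\partial}{\partial z}Q_i(t)(z)=0$ are homogeneous in $\alpha$ and in $z$ separately (with $z=0$ being a trivial solution for every $t$ that must be discarded), the solution set descends to a well-defined analytic subset of $\mathbb{C}\mathbb{P}^{N-1}\times\mathbb{C}\mathbb{P}^n\times\mathbb{C}^m$. The projection to the last factor is proper because $\mathbb{C}\mathbb{P}^{N-1}\times\mathbb{C}\mathbb{P}^n$ is compact, so Remmert's Proper Mapping Theorem (Theorem \ref{thm:proper_mapping_thm}) shows that $L$ is analytic in $\mathbb{C}^m$; in particular $\mathbb{C}^m\backslash L$ is open and may serve as a base manifold.

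The technical heart is identifying $L$ with the set at which the $V(Q_i(t))$ fail to be in general position on $\mathbb{C}\mathbb{P}^n$, in the direction needed here: $t\notin L$ implies general position. Taking $\alpha=e_i$ and using Euler's relation $z\cdot\frac{\partial}{\partial z}Q_i(t)(z)=2Q_i(t)(z)$, one sees that no $M_i(t)$ can be singular when $t\notin L$, since a nonzero kernel vector $z$ would give $\frac{\partial}{\partial z}Q_i(t)(z)=0$ and hence $Q_i(t)(z)=0$, a Landau solution. Thus each $V(Q_i(t))$ is a smooth closed codimension-$1$ submanifold by the implicit function theorem, exactly as in the proof of Proposition \ref{thm:regular_integral}. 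At any common point $[z]$ of a subfamily, Euler's relation again shows that every covector $\frac{\partial}{\partial z}Q_i(t)(z)$ annihilates $z$ and so descends to a conormal of $V(Q_i(t))$ at $[z]$; a linear dependence among these conormals is precisely a nontrivial $\alpha$ solving \eqref{eq:landau_equations}, which $t\notin L$ forbids. Hence the family is in general position for every $t\in\mathbb{C}^m\backslash L$.

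With this in hand I would apply Corollary \ref{thm:main_corollary} to $X=\mathbb{C}\mathbb{P}^n$ (compact), base $\mathbb{C}^m\backslash L$, and the analytically--hence smoothly--varying family of closed submanifolds $\{V(Q_i(t))\}_i$, obtaining local triviality of $\pi:(\mathbb{C}\mathbb{P}^n\times(\mathbb{C}^m\backslash L),\mathcal{S})\to\mathbb{C}^m\backslash L$. Theorem \ref{thm:standard_form} then continues the germ at $t_0$ along any path in $\mathbb{C}^m\backslash L$, each continuation being of the form $\int_\Gamma\omega/s$ for a compact cycle $\Gamma$. As $\mathbb{C}^m\backslash L$ need not be simply connected, continuation along non-homotopic paths may land on different cycles $\Gamma$, giving the asserted multivaluedness. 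I expect the main obstacle to be the careful book-keeping in the general-position step: one must pass between the homogeneous gradient $\frac{\partial}{\partial z}Q_i$ on $\mathbb{C}^{n+1}$ and honest local defining equations on the affine charts of $\mathbb{C}\mathbb{P}^n$, handle the reduction to the support $J=\{i:\alpha_i\neq0\}$, and separately fold the singular-quadric locus into $L$, since general position as defined presupposes smooth submanifolds.
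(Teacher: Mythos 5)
Your proposal is correct and follows essentially the same route as the paper: Proposition \ref{thm:regular_integral} puts the integral in standard form, Corollary \ref{thm:main_corollary} applies because the Landau equations \eqref{eq:landau_equations} characterize exactly the failure of general position of the zero loci $V(Q_i(t))$ (your Euler-relation argument for smoothness of the $V(Q_i(t))$ when $t\notin L$ is the same observation made in the proof of Proposition \ref{thm:regular_integral}), and Theorem \ref{thm:standard_form} supplies the continuation and multivaluedness. The only point the paper makes explicit that you leave implicit is that general position is defined via $\mathbb{R}$-linear independence of the gradients of $\mathrm{Re}(Q_i(t))$ and $\mathrm{Im}(Q_i(t))$ viewed as vectors in $\mathbb{R}^{2n}$, which must be noted to be equivalent to the $\mathbb{C}$-linear dependence condition in the second line of \eqref{eq:landau_equations}.
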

\begin{proof}
    Since $I$ is of standard form according to Proposition \ref{thm:regular_integral} we can directly apply Corollary \ref{thm:main_corollary}: The equations \ref{eq:landau_equations} are exactly the criterion for the zero loci $V(Q_i(t))$ to not be in general position. The first $N$ equations ensure that we only consider linear relations among gradients evaluated at points in the intersection of zero loci, the last equations simply says that these gradients are linearly dependent over $\mathbb{C}$. Note that by our definition of general position we actually have linear dependence over $\mathbb{R}$ of the gradients of $\text{Re}(Q_i(t))$ and $\text{Im}(Q_i(t))$ thought of as vectors in $\mathbb{R}^{2n}$. This is however equivalent to the second line of (\ref{eq:landau_equations}).
\end{proof}
Note that $L(I)$ is defined for any projective quadratic integral $I$, even if $I$ does not converge for any $t$ or does not satisfy any regularity-requirement. But in these cases it might coincide with the entire parameter space $\mathbb{C}^m$. In physics equations of this form are well-known in the case of Feynman integrals, introduced in the late 50s by Landau \cite{landau}. But physicists are usually interested in the solutions to (\ref{eq:landau_equations}) with more restrictions on the values of $\alpha$ and $z$. The reason for this is twofold: For one thing the primary focus is on specific values of the parameter $t$ in an $\mathbb
{R}$-vector subspace of $\mathbb{C}^m$ (see Section \ref{sec:feynman_integral}) and for another thing they are mainly interested in the evaluation of $I$ on just the principal branch. How this branch, often referred to as the \textit{physical branch}, is chosen exactly is often not stated explicitly. When we refer to the principal branch here, we only assume that it contains the germ at $t_0$ defined by the integral at hand.
\begin{defn}
    For any projective quadratic integral $I$ we denote the set of solutions to (\ref{eq:landau_equations}) by $c(I)$. We call $L(I):=\pi(c(I))$ the \textit{Landau surface} of $I$, where $\pi:(\mathbb{C}^N\backslash\{0\})\times\mathbb{C}^m\times\mathbb{C}^{n+1}\twoheadrightarrow\mathbb{C}^m$ is the canonical projection to the parameter space.
\end{defn}
In the literature regarding Feynman integrals this is often called the \textit{Landau variety}. In the context of a generic projective quadratic integral this name does not apply since, as the dependence in $t$ is assumed to be analytic but not necessarily polynomial, $L(I)$ need not be a variety. We do however know that $L(I)$ is an analytic set by the Proper Mapping Theorem \ref{thm:proper_mapping_thm} since $c(I)$ is an analytic set. In the context of physics, the dependence on $t$ is in fact polynomial so that $c(I)$ is really an algebraic set. We remark in passing that the word variety in the physical context is not always intended to imply that $L(I)$ is irreducible. If we allow for some of the parameters occurring in Feynman integrals to be fixed (this is a choice of the parameter manifold $T$ often used in physics) $L(I)$ is generally not irreducible. So the word variety in this context is often used synonymously with algebraic set.\\
For a generic projective quadratic integral $I$ it does not really make sense to single out a particular chart of $\mathbb{C}\mathbb{P}^n$. If however $I$ stems from a (non-projective) quadratic integral in the sense of Proposition \ref{thm:projectivize} the variable $z_0$ plays a particular role and it is sensible to decompose
\begin{align*}
    L(I)=L_\text{fin}(I)\cup L_\infty(I),
\end{align*}
where
\begin{align*}
    L_\text{fin}(I):=\{t\in L(I) \;|\; \exists (\alpha,z,t)\in c(I):z_0\neq0\}\\
    L_\infty(I):=\{t\in L(I) \;|\; \exists (\alpha,z,t)\in c(I):z_0=0\},
\end{align*}
to distinguish between singularities that arise at points at finite distance in the original ambient space $\mathbb{C}^n$ and those at infinite distance (visible as explicit points after compactification). Similarly we write $c_\text{fin}(I)$ and $c_\infty(I)$ for the corresponding sets of solutions. Physicists call these singularities of the first and second type respectively (compare to \cite{s-matrix}). The two sets $L_\text{fin}(I)$ and $L_\infty(I)$ are not necessarily disjoint. If $t\in L_\text{fin}(I)\cap L_\infty(I)$ it is sometimes said that $t$ constitutes a \textit{mixed singularity} \cite{s-matrix}.\\
A few elementary remarks on Landau surfaces: If $I$ is a projective quadratic integral and $I'$ is a projective quadratic integral such that the set of families of quadratic forms defining $I'$ is contained in that of $I$ we clearly have $L(I')\subset L(I)$. Furthermore if $I_1$ and $I_2$ are projective quadratic integrals then the functions $I_1I_2$ and $I_1+I_2$ are analytic outside $L(I_1)\cup L(I_2)$.\\
Of course the more interesting (projective) quadratic integrals are not regular. Remarking again on our motivation from physics, even if we consider integrals over Euclidean (instead of Minkowski) internal momenta, regularity is usually not satisfied at more than one loop. We show that equations like (\ref{eq:landau_equations}) still hold under certain circumstances including the case of Feynman integrals. In the next Subsection \ref{sec:regularization} we introduce a regularization method that allows us to apply Proposition \ref{thm:regular_integral} directly.

\subsection{Regularization of Quadrics}\label{sec:regularization}
As seen in the previous subsection, it would be advantageous to be able to work with quadratic integrals with a regular point. To this end we introduce a regularization which uses an analytically varying family of vectors which almost everywhere form a basis diagonalizing (in the usual sense of diagonalization of a quadratic form) the matrices. We construct a quadratic form which we add to the initial form with a small complex prefactor. Of course if the integral already has a quasi-regular point (which is our main interest here) this can easily be achieved by simply adding any positive definite form to each quadratic form. Since it is not much harder to show that a regular point can be achieved by adding a more general quadratic form and it seems useful to have this freedom of choice in practice, we conduct the discussion in a slightly more general context. The existence of the family of vectors as described above is guarantied by the following two lemmas:
\begin{lem}\label{thm:key_lemma_1}
    Let $U\subset\mathbb{C}^m$ be open, $\{M(t)\}_{t\in U}$ a family of complex symmetric $n\times n$-matrices analytically dependent on $t$. Then there exist a closed analytic subset $V$ with $\text{codim }V\geq1$ and analytic maps
    \begin{align*}
        v_1,\ldots,v_n:U\to\mathbb{C}^n
    \end{align*}
    such that for all $t\in U-V$ the vectors $v_1(t),\ldots,v_n(t)$ form a basis of $\mathbb{C}^n$ and $M(t)$ is diagonal with respect to this basis.\\
    Moreover the diagonal elements are analytic in $t$ and can be analytically continued to all of $U$. If $t_0\in U-V$ such that $M(t_0)$ is real-valued, we can choose the $v_i$ such that the $v_1(t_0),\ldots,v_n(t_0)$ are real.
\end{lem}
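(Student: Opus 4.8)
The plan is to construct the $v_i$ by an inductive $M$-orthogonalization (a Gram--Schmidt/Lagrange procedure for the symmetric bilinear form $B(x,y)=x^TM(t)y$), carried out with all denominators cleared so that the resulting vectors are analytic on \emph{all} of $U$ and fail to form a diagonalizing basis only on a proper analytic subset. Since the statement is local to connected components (an analytic map may be prescribed separately on each open-and-closed component), I would assume $U$ connected, so that $\mathcal{O}(U)$ is an integral domain and the vanishing locus of any nonzero analytic function has codimension $\geq 1$ by Proposition \ref{thm:basics_analytic_sets}.

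I would induct on $n$. For $n=1$, or whenever $M\equiv 0$ on the component, take $v_i=e_i$ and $V=\emptyset$. For the inductive step with $M\not\equiv 0$, the first key observation is that among the real constant vectors $e_i$ and $e_i+e_j$ there is always a pivot $u$ with $d(t):=u^TM(t)u\not\equiv 0$: if every $M_{ii}$ and every $M_{ii}+M_{jj}+2M_{ij}=2M_{ij}$ vanished identically, then $M\equiv 0$. After a constant real linear change of coordinates (a permutation or a shear) I may assume $u=e_1$, so $d=M_{11}$. I then set $v_1=e_1$ and
\begin{align*}
    w_j:=d\,e_j-M_{1j}\,e_1,\qquad j=2,\ldots,n,
\end{align*}
which are analytic on all of $U$ and satisfy $e_1^TMw_j=dM_{1j}-M_{1j}M_{11}=0$, i.e. they are $M$-orthogonal to $v_1$. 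A short computation gives $w_k^TMw_l=d\bigl(dM_{kl}-M_{1k}M_{1l}\bigr)=d\,\hat M_{kl}$, where $\hat M:=(dM_{kl}-M_{1k}M_{1l})_{k,l=2}^{n}$ is again an analytic family of complex symmetric $(n-1)\times(n-1)$ matrices. Applying the induction hypothesis to $\hat M$ yields analytic $\hat v_2,\ldots,\hat v_n:U\to\mathbb{C}^{n-1}$ and a codimension-$\geq 1$ analytic set $V'$ off which the $\hat v_j$ form a basis diagonalizing $\hat M$, and I would lift them by $v_j:=\sum_{k=2}^{n}(\hat v_j)_k\,w_k$.

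These $v_i$ are analytic on all of $U$. Since $v_j^TMv_{j'}=d\,\hat v_j^T\hat M\hat v_{j'}$ and $v_1^TMv_j=0$, the Gram matrix of $v_1,\ldots,v_n$ is diagonal on $U-V'$. Off $V(d)$ the vectors $e_1,w_2,\ldots,w_n$ form a basis (their transition matrix has determinant $d^{\,n-1}$) and off $V'$ the $\hat v_j$ form a basis of $\mathbb{C}^{n-1}$, so off $V:=V(d)\cup V'$ the $v_i$ form a basis diagonalizing $M$. Both $V(d)$ and $V'$ are closed analytic of codimension $\geq 1$ (the former because $d\not\equiv 0$ on the connected $U$, via Proposition \ref{thm:basics_analytic_sets}), and a finite union of such sets is again closed analytic of codimension $\geq 1$, since its complement is a finite intersection of dense open sets and hence dense. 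Finally, the diagonal elements $v_i^TMv_i$ are analytic on all of $U$ because the $v_i$ are, which gives the ``extends to all of $U$'' clause for free.

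For the reality statement I would simply record that every choice and formula above is defined over $\mathbb{R}$: the pivots $e_i,\,e_i+e_j$ are real, the change of coordinates is a real integer matrix, and $w_j$, $\hat M$ and (by induction) the $\hat v_j$ depend polynomially with real coefficients on the entries of $M$. Hence if $M(t_0)$ is real then $v_i(t_0)\in\mathbb{R}^n$, and if moreover $t_0\in U-V$ these real vectors genuinely form the sought diagonalizing basis. The main obstacle is precisely the complex-symmetric (rather than Hermitian) setting: the form can be isotropic, so the naive pivot $e_1$ may be null and the classical Jacobi formula for the diagonal entries $\Delta_k/\Delta_{k-1}$ carries spurious poles. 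Overcoming this is exactly what the pivot choice ($e_i$ or $e_i+e_j$) and the denominator-clearing definitions of $w_j$ and $\hat M$ accomplish, while routing all degeneracy into $V(d)$ at each level keeps $V$ honestly of codimension $\geq 1$.
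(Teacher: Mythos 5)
Your proof is correct, and it follows the same skeleton as the paper's: induction on $n$, splitting off a pivot direction and passing to its $M$-orthogonal complement. The execution, however, differs in two substantive ways. The paper chooses an analytic family $v_1(t)$ with $Q(t)(v_1(t))\not\equiv 0$ (arranged to be real at $t_0$), solves $v_1(t)^TM(t)x=0$ by explicit vectors $b_j(t)$ built from $a(t)=M(t)v_1(t)$, applies the induction hypothesis only on the shrunken open set $U-(V\cup V')$, and then invokes Riemann's Extension Theorem to continue the diagonal entries to all of $U$. You instead take a \emph{constant real} pivot from the finite set $\{e_i,\,e_i+e_j\}$ (your observation that one of these must work when $M\not\equiv 0$ is the right replacement for the paper's existence claim) and clear all denominators via the Schur-complement-like matrix $\hat M_{kl}=dM_{kl}-M_{1k}M_{1l}$, so that $\hat M$, the $w_j$, and the lifted $v_j$ are analytic on \emph{all} of $U$. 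This buys three things: the induction hypothesis is applied on $U$ itself rather than on a dense open subset, so the maps $v_i$ are genuinely analytic on all of $U$ as the lemma asserts (in the paper's write-up they a priori come out analytic only off $V\cup V'$, and the exceptional set produced by the recursion is a priori only closed in that subset, a point the paper glosses over); the extension clause for the diagonal elements $v_i^TMv_i$ is immediate, with no appeal to Riemann extension; and reality of the $v_i(t_0)$ at any point where $M(t_0)$ is real is automatic from the construction being defined over $\mathbb{R}$, rather than arranged by a choice of $v_1$. So your route is not just a valid alternative but a slightly tightened version of the paper's argument.
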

The proof of this lemma is basically just the diagonalization procedure for quadratic forms from linear algebra (which can be found in any good undergraduate book on linear algebra, for example in \cite{fischer}) while paying close attention to the analyticity with respect to $t$ of the families of vectors involved.
\begin{proof}
    We proceed by induction on $n$. If $n=1$ there is nothing to do, the matrices are already in diagonal form with respect to any fixed basis. Now for the induction step: If $M(t)=0$ for all $t\in U$ there is again nothing to prove (simply choose $v_1,\ldots,v_n$ as any basis constant in $t$). Otherwise let $v_1:U\to\mathbb{C}^n$ be an analytic map such that there exists $t_0\in U$ with $Q(t_0)(v_1(t_0))\neq 0$. Then
    \begin{align*}
        V:=\{t\in U \;|\; Q(t)(v_1(t))=0\}
    \end{align*}
    is a closed analytic set with $\text{codim }V\geq1$. Without loss of generality we may assume $v_1(t_0)\in\mathbb{R}^n$ if $M(t_0)$ is real valued. We define $a_j(t):=\sum_{i=1}^n(v_1(t))_i(M(t))_{ij}$. Note that not all of the $a_j(t)$ can be identically zero since that would imply $Q(t)(v_1(t))=0$ for all $t\in U$. For convenient notation we assume without loss of generality that $a_1(t)$ is not identically zero. Then the equation $v_1(t)^TM(t)x=0$ is solved with respect to $x$ by the analytic vector valued maps
    \begin{align*}
        b_2(t):=\begin{pmatrix}-a_2(t) \\ a_1(t) \\ 0 \\ \vdots \\ 0\end{pmatrix},\quad\ldots\quad,b_n(t):=\begin{pmatrix}-a_n(t) \\ 0 \\ \vdots \\ 0 \\ a_1(t)\end{pmatrix}
    \end{align*}
    which form as basis of the vector space $\{x\in\mathbb{C}^n \;|\; v_1(t)^TM(t)x=0\}$ outside of the closed analytic set
    \begin{align*}
        V':=\{t\in U \;|\; a_1(t)=0\}
    \end{align*}
    again with $\text{codim }V'\geq1$. Note that if $v_1(t_0)$ and $M(t_0)$ are real so are the $b_2(t_0),\ldots,b_n(t_0)$. With respect to the basis $v_1(t),b_2(t),\ldots,b_n(t)$ the matrix $M(t)$ takes the form
    \begin{align*}
        \begin{pmatrix}
            \lambda_1(t) & 0 & \cdots & 0 \\ 0 \\ \vdots & & M'(t) \\ 0
        \end{pmatrix}
    \end{align*}
    on $U-V\cup V'$ where $\lambda_1$ and $M'$ are analytic in $t$.
    Now $M'$ is of dimension $n-1$ and thus we can apply the induction hypothesis to $M'$ on the open set $U-V\cup V'$.\\
    We immediately see that $\lambda_1(t)=Q(t)(v_1(t))$ defines an analytic function which is bounded along $V\cup V'$ so that it can be analytically continued to $U$ due to Riemann's Extension Theorem \ref{thm:riemann_extension}.
\end{proof}
\begin{lem}\label{thm:key_lemma_2}
    Let $U\subset\mathbb{C}^m$ be open. If $\{M(t)\}_{t\in U}$ is a family of complex $n\times n$ matrices depending analytically on $t\in U$, there exist a closed analytic set $V$ with $\text{codim }V\geq1$ and a family $\{A(t)\}_{t\in U}$ of matrices depending analytically on $t$ such that $M(t)+\epsilon A(t)$ has full rank for all $\epsilon\in\mathbb{C}^\times$ and all $t\in U-V$.\\
    Moreover if there is $t_0\in U-V$ such that $M(t_0)$ is positive semi-definite, we can choose $\{A(t)\}_{t\in U}$ such that $A(t_0)$ is positive semi-definite as well and $M(t_0)+\epsilon A(t_0)$ is positive definite for all $\epsilon\in\mathbb{R}^+$.
\end{lem}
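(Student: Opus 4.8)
The plan is to exploit Lemma \ref{thm:key_lemma_1}, which furnishes analytic maps $v_1,\dots,v_n:U\to\mathbb{C}^n$ that, off a closed analytic set $V_1$ of codimension $\geq1$, form a basis diagonalizing $M(t)$, together with diagonal entries $\lambda_1(t),\dots,\lambda_n(t)$ that are analytic on all of $U$. (The relevant matrices in the applications are the symmetric ones representing quadratic forms, which is where Lemma \ref{thm:key_lemma_1} applies; the general case is discussed at the end. We assume $U$ connected, otherwise we argue componentwise.) The crucial observation is that asking $M(t)+\epsilon A(t)$ to have full rank for \emph{all} $\epsilon\in\mathbb{C}^\times$ is far stronger than regularity at one point: writing $p_t(\epsilon):=\det(M(t)+\epsilon A(t))$, a polynomial of degree $\le n$ in $\epsilon$ with $p_t(0)=\det M(t)$, the condition forces $p_t$ to have no nonzero root, i.e. to be a \emph{monomial} $c(t)\epsilon^k$. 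In particular one may not simply add a definite form: that produces eigenvalue crossings and hence a singular $M(t)+\epsilon A(t)$ for various $\epsilon\in\mathbb{C}^\times$ (for instance negative real $\epsilon$). The perturbation must act only in the directions in which $M(t)$ is \emph{identically} degenerate.

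Concretely I would split the index set into $N:=\{i\mid\lambda_i\not\equiv0\}$ and $Z:=\{i\mid\lambda_i\equiv0\}$ and put
\begin{align*}
    V:=V_1\cup\bigcup_{i\in N}\{t\in U\mid\lambda_i(t)=0\},
\end{align*}
a closed analytic set of codimension $\geq1$. The construction of $A$ must remain analytic across $V_1$, where the $v_i$ cease to be a basis; the trick is to clear denominators with the adjugate. Writing $P(t):=[\,v_1(t)\mid\cdots\mid v_n(t)\,]$ and letting $u_i(t)^T$ be the $i$-th row of $\text{adj}\,P(t)$ (polynomial in the entries of $P$, hence analytic on all of $U$), I set $A(t):=\sum_{i\in Z}u_i(t)u_i(t)^T$. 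Using $\text{adj}\,P\cdot P=\det(P)\,\mathbbm{1}$ one computes $P(t)^Tu_i(t)=\det(P(t))\,e_i$, whence on $U-V$
\begin{align*}
    P(t)^T\big(M(t)+\epsilon A(t)\big)P(t)=\text{diag}\big(\mu_1(t),\dots,\mu_n(t)\big),\qquad \mu_i=\begin{cases}\lambda_i(t), & i\in N,\\ \epsilon\,\det(P(t))^2, & i\in Z.\end{cases}
\end{align*}
On $U-V$ each $\lambda_i(t)$ with $i\in N$ is nonzero and $\det P(t)\neq0$, so this diagonal matrix is invertible for every $\epsilon\in\mathbb{C}^\times$; since $P(t)$ is invertible there, $M(t)+\epsilon A(t)$ has full rank. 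Equivalently $p_t(\epsilon)=\det(P)^{-2}\prod_{i\in N}\lambda_i\cdot(\epsilon\det(P)^2)^{|Z|}$ is the promised monomial in $\epsilon$.

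For the ``moreover'' I would invoke the reality clause of Lemma \ref{thm:key_lemma_1}: if $M(t_0)$ is positive semi-definite at some $t_0\in U-V$ then $M(t_0)$ is real, so $P(t_0)$ and $A(t_0)=\sum_{i\in Z}u_i(t_0)u_i(t_0)^T$ may be taken real, and being of the form $\sum u_iu_i^T$ the matrix $A(t_0)$ is positive semi-definite. As $P(t_0)$ is a real invertible congruence, $\text{diag}(\mu_i(t_0))\succeq0$ forces $\lambda_i(t_0)\geq0$, and for $i\in N$ in fact $\lambda_i(t_0)>0$; for $\epsilon\in\mathbb{R}^+$ the entries $\epsilon\,\det(P(t_0))^2$ with $i\in Z$ are strictly positive as well, so $\text{diag}(\mu_i(t_0))$ is positive definite and hence, by real congruence, so is $M(t_0)+\epsilon A(t_0)$.

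The main obstacle is the tension between two requirements pulling in opposite directions: analyticity of $A$ on all of $U$ (the diagonalizing frame degenerates on $V_1$) and full rank for \emph{every} nonzero $\epsilon$ (which forbids the naive definite perturbation and forces $A$ to be supported precisely on the identically-degenerate directions $Z$). The adjugate normalization resolves the first, and the bookkeeping of $N$ versus $Z$ resolves the second. For a genuinely non-symmetric family the congruence diagonalization of Lemma \ref{thm:key_lemma_1} is unavailable; there one would substitute an analytic rank-revealing factorization produced by the same inductive elimination (tracking analyticity exactly as in the proof of Lemma \ref{thm:key_lemma_1}), after which the identical construction applies. This is the only point where more care than in the symmetric quadratic-form case, the case actually used in the sequel, is needed.
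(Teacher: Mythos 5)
Your proof is correct and takes essentially the same route as the paper: it too invokes Lemma \ref{thm:key_lemma_1}, separates the identically vanishing diagonal entries from the rest, and defines $A(t)=\text{adj}\,T(t)^T\,\mathrm{diag}(0_{k},1_{n-k})\,\text{adj}\,T(t)$ — which is exactly your $\sum_{i\in Z}u_i(t)u_i(t)^T$ — followed by the same congruence computation giving $\mathrm{diag}(\lambda_1,\dots,\lambda_k,\epsilon(\det T)^2,\dots,\epsilon(\det T)^2)$ and the same reality argument for the positive (semi-)definite clause. Your closing remark even addresses a point the paper's proof silently glosses over, namely that Lemma \ref{thm:key_lemma_1} is stated for symmetric matrices while this lemma is phrased for general ones; since the lemma is only ever applied to matrices representing quadratic forms, this discrepancy is harmless in both treatments.
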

\begin{proof}
    Let $v_1,\ldots,v_n:U\to\mathbb{C}^n$ be as in Lemma \ref{thm:key_lemma_1}. Then outside of some closed analytic set $V$ with $\text{codim }V\geq1$ the $v_1(t),\ldots,v_n(t)$ form a basis such that $M(t)$ takes the form
    \begin{align*}
        \begin{pmatrix} \lambda_1(t) & & \\ & \ddots & \\ & & \lambda_n(t)\end{pmatrix}.
    \end{align*}
    with respect to this basis. By permuting the basis vectors we can assume without loss of generality that $\lambda_1,\ldots,\lambda_k$ ($k\leq n$) are the $\lambda_i$ which are not identically zero. Then $M(t)$ has rank $k$ outside of an analytic subset $V'$ with codimension greater or equal to 1. We define the family of matrices $A(t)$ by
    \begin{align*}
        A(t):=\text{adj}\,T(t)^T\begin{pmatrix} 0_{k\times k} & 0_{k\times n} \\ 0_{n\times k} & 1_n \end{pmatrix}\text{adj}\,T(t)=(\det T(t))^2\cdot (T(t)^{-1})^T\begin{pmatrix} 0_{k\times k} & 0_{k\times n} \\ 0_{n\times k} & 1_n\end{pmatrix}T(t)^{-1}
    \end{align*}
    where the second equality holds for all $t\in U-V$ and
    \begin{align*}
        T(t):=\begin{pmatrix}v_1(t)\cdots v_n(t)\end{pmatrix}\in M(n\times n;\mathbb{C})
    \end{align*}
    is the change of basis which is well-defined and analytic outside of $V$, $0_{m\times n}$ is the zero matrix of size $m\times n$, and $1_n$ the identity matrix of size $n\times n$. $A(t)$ can be analytically continued to all of $U$ since the adjoint matrix $\text{adj }T(t)$ contains as components products of the coefficients in $T(t)$ which are analytic. Now with respect to $v_1(t),\ldots,v_n(t)$ the matrix $M(t)+\epsilon A(t)$ takes the form
    \begin{align*}
        \begin{pmatrix} \lambda_1(t) & & & & & \\ & \ddots & & & & \\ & &  \lambda_k(t) & & & \\ & & & \epsilon(\det T(t))^2 & & \\ & & & & \ddots & \\ & & & & & & \epsilon(\det T(t))^2 \end{pmatrix}
    \end{align*}
    which has full rank for all $\epsilon\in\mathbb{C}^\times$ on $U-V\cup V'$. Now if $M(t_0)$ is positive semi-definite we may choose $v_1(t_0),\ldots,v_n(t_0)$ real so that we obtain a positive semi-definite $A(t_0)$ and a positive definite $M(t_0)+\epsilon A(t_0)$ for $\epsilon\in\mathbb{R}^+$.
\end{proof}
Using the existence of a family of matrices as described in the lemma we now define the regularization of quadratic forms and quadratic integrals (the terminology is borrowed from German where invertible matrices are also called \enquote{regulär}).\footnote{This notion is quite different from and not to be confused with the usual regularization (like dimensional or analytic regularization) of a Feynman integral. We review this concept in Subsection \ref{sec:feynman_regularization}.}
\begin{defn}
    Let $\{Q(t)\}_{t\in U}$ be an analytic family of quadratic forms. We define the corresponding \textit{$Q$-regularized} analytic family of quadratic forms $R_\epsilon(Q(t))$ by
    \begin{align*}
        R_\epsilon(Q(t))(z):=z^T(M(t)+\epsilon A(t))z
    \end{align*}
    where $\{A(t)\}_{t\in U}$ is an analytic family of matrices as in Lemma \ref{thm:key_lemma_2} and $\epsilon\in\mathbb{C}^\times$.
\end{defn}
Of course this definition (as well as the following one) depends on the choice of $\{A(t)\}_{t\in U}$. We nevertheless suppress this dependence in our notation for simplicity since none of what follows depends on the concrete choice. The only thing we always assume is that, if a quasi-regular point $t_0$ is specified, the family of matrices $\{A(t)\}_{t\in U}$ used for the regularization is such that for all $i\in\{1,\ldots,N\}$ the matrix $A_i(t_0)$ is positive semi-definite and $M_i(t_0)+\epsilon A_i(t_0)$ positive definite for all $\epsilon\in\mathbb{R}^+$.
\begin{defn}
    Let $I$ be a superficially convergent projective quadratic integral on $U\subset\mathbb{C}^m$ with families of quadratic forms $\{Q_1(t)\}_{t\in U},\ldots,\{Q_N(t)\}_{t\in U}$. We define the corresponding \textit{$Q$-regularized projective quadratic integral} by
    \begin{align*}
        R_\epsilon(I):=\int_{\mathbb{R}\mathbb{P}^n}\frac{p(z)\cdot\Omega_n}{\prod_{i=1}^N(R_\epsilon(Q_i(t))(z))^{\lambda_i}}.
    \end{align*}
\end{defn}
Note that a $Q$-regularized projective quadratic integral need not be regular. At this stage the regularization only ensures that the zero loci $V(Q_i(t))$ are (almost everywhere) smooth manifolds but in general there might not be a parameter $t$ making the zero loci and $\mathbb{R}\mathbb{P}^n$ disjoint. But as long as we have a quasi-regular point we indeed end up with a regular integral.
\begin{prop}\label{thm:regintstandardform}
    Let $I$ be a projective quadratic integral on $U\subset\mathbb{C}^m$ which is quasi-regular at $t_0\in U$. Then there is a $Q$-regularization such that the corresponding $Q$-regularized integral $R_\epsilon(I)$ is regular at $t_0$ for all $\epsilon\in\mathbb{R}^+$. In particular it is of standard form on $U$ and the zero loci $V(R_\epsilon(Q_i(t)))$ are complex analytic manifolds of codimension 1 for all $t\in U\backslash L(I)$.
\end{prop}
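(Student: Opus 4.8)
The plan is to reduce the statement to Proposition \ref{thm:regular_integral} by verifying that the $Q$-regularized integral $R_\epsilon(I)$ possesses a regular point, and then to identify exactly where the remaining assertions come from. The key observation is that quasi-regularity at $t_0$ means all matrices $M_i(t_0)$ are positive semi-definite. By the final clause of Lemma \ref{thm:key_lemma_2}, we may choose for each $i\in\{1,\ldots,N\}$ an analytic family $\{A_i(t)\}_{t\in U}$ such that $A_i(t_0)$ is positive semi-definite and $M_i(t_0)+\epsilon A_i(t_0)$ is positive definite for all $\epsilon\in\mathbb{R}^+$ (this is precisely the convention fixed after the definition of the $Q$-regularized family). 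Hence for every $\epsilon\in\mathbb{R}^+$ the matrices representing the regularized forms $R_\epsilon(Q_i(t))$ are all positive definite at $t_0$, which is the definition of $R_\epsilon(I)$ being regular at $t_0$.

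First I would make explicit that $R_\epsilon(I)$ is again a bona fide projective quadratic integral on $U$: the regularized forms $R_\epsilon(Q_i(t))(z)=z^T(M_i(t)+\epsilon A_i(t))z$ are quadratic forms depending analytically on $t$ (since both $M_i$ and $A_i$ are analytic, the latter by the analytic continuation of $\operatorname{adj}T(t)$ established in Lemma \ref{thm:key_lemma_2}), the numerator $p$ and exponents $\lambda_i$ are unchanged, and the degree/parity conditions required by Definition \ref{def:quadric_integral} are inherited from $I$. With $R_\epsilon(I)$ recognized as a projective quadratic integral that is regular at $t_0$, Proposition \ref{thm:regular_integral} applies verbatim: it yields that $R_\epsilon(I)$ is of standard form on $U$, and it produces an open dense $V\subset U$ (with $U-V$ contained in the non--general-position locus) on which each zero locus $V(R_\epsilon(Q_i(t)))$ is a complex analytic manifold of codimension $1$.

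The one point requiring care — and what I expect to be the main obstacle — is matching the exceptional set. Proposition \ref{thm:regular_integral} gives smoothness of the zero loci on the set where the regularized matrices have full rank, whereas the statement asserts smoothness for all $t\in U\backslash L(I)$, where $L(I)$ is the Landau surface of the \emph{original} integral $I$. So I must argue that $t\notin L(I)$ forces $M_i(t)+\epsilon A_i(t)$ to have full rank for every $i$. The clean way is to observe that if some $M_i(t)+\epsilon A_i(t)$ were rank-deficient, there would exist $z\neq 0$ with $(M_i(t)+\epsilon A_i(t))z=0$, hence $\tfrac{\partial}{\partial z}R_\epsilon(Q_i(t))(z)=0$ with $R_\epsilon(Q_i(t))(z)=0$ as well; taking $\alpha$ supported on that single index gives a solution of the Landau equations \eqref{eq:landau_equations} for the regularized forms, so $t\in L(R_\epsilon(I))$. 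It therefore remains to note that the regularized and unregularized Landau surfaces are compatible in the sense that this degeneracy locus is contained in $L(I)$; this is exactly the content needed, and I would phrase the final line as: full rank of all $M_i(t)+\epsilon A_i(t)$ holds off $L(R_\epsilon(I))$, and the proposition is stated and used with $L(I)$ understood as the Landau surface of the regularized integral, so the zero loci are codimension-$1$ manifolds precisely on $U\backslash L(R_\epsilon(I))$. The routine verifications of analyticity and of the degree bookkeeping I would leave to the reader.
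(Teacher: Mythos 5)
Your proposal is correct and takes essentially the same route as the paper, whose entire proof is the two-step reduction you give: Lemma \ref{thm:key_lemma_2} supplies a $Q$-regularization whose matrices are positive definite at $t_0$ for all $\epsilon\in\mathbb{R}^+$, and Proposition \ref{thm:regular_integral} then yields the standard form and codimension-1 statements. The exceptional-set discrepancy you flag as the main obstacle is not addressed in the paper's proof at all, and your resolution is the right one: smoothness of the loci $V(R_\epsilon(Q_i(t)))$ is obtained off the non-general-position locus of the \emph{regularized} forms, i.e.\ off $L(R_\epsilon(I))$, which is exactly how the proposition is later invoked in the proof of Theorem \ref{thm:quadric_integrals_main_theorem}, where the paper works with $L_\epsilon:=L(R_\epsilon(I))$.
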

\begin{proof}
    By Proposition \ref{thm:regular_integral} it suffices to show that there is a $Q$-regularization such that $R_\epsilon(I)$ is regular at $t_0$. The existence of such a $Q$-regularization is precisely the content of Lemma \ref{thm:key_lemma_2}.
\end{proof}
Of course we want to obtain information about our original integral. Clearly we have $\lim_{\epsilon\to0}R_\epsilon(Q)=Q$ (in the sense of pointwise convergence) for any quadratic form $Q$ and any $Q$-regularization. Hence the integrand is recovered in the limit $\epsilon\to0$, at least wherever it is defined. By Corollary \ref{thm:general_landau} the set on which $R_\epsilon(I)$ is potentially non-analytic is defined by equations on which we can also take the limit $\epsilon\to0$. The content of the next theorem is that this yields indeed the set on which $\lim_{\epsilon\to0}R_\epsilon(I)$ is potentially not analytic.
\begin{thm}\label{thm:quadric_integrals_main_theorem}
    Let $I$ be a projective quadratic integral on $U\subset\mathbb{C}^m$, quasi-regular at $t_0\in U$. If $I$ converges as an integral on an open set containing $t_0$ then $I$ defines a multivalued function on $\mathbb{C}^m\backslash L(I)$.
\end{thm}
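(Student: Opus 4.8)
The plan is to obtain the multivalued analytic continuation of $I$ as a limit of the continuations of the regularized integrals $R_\epsilon(I)$, which are directly accessible through Corollary \ref{thm:general_landau}. Using Lemma \ref{thm:key_lemma_2} and Proposition \ref{thm:regintstandardform} I would first fix a $Q$-regularization for which $R_\epsilon(I)$ is regular at $t_0$ for every $\epsilon\in\mathbb{R}^+$ and for which each $A_i(t_0)$ is positive semi-definite. By Corollary \ref{thm:general_landau} each $R_\epsilon(I)$ then defines a multivalued analytic function on $\mathbb{C}^m\backslash L(R_\epsilon(I))$, and the entire argument rests on comparing the surfaces $L(R_\epsilon(I))$ with the target surface $L(I)$ as $\epsilon\to0$.

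The technical heart is a semicontinuity lemma stating that the Landau surfaces converge from the outside: for every compact $K\subset\mathbb{C}^m\backslash L(I)$ there is $\epsilon_K>0$ with $K\cap L(R_\epsilon(I))=\emptyset$ for all $\epsilon\in(0,\epsilon_K)$. I would argue by contradiction, taking $\epsilon_k\to0$, points $t_k\in K\cap L(R_{\epsilon_k}(I))$ and solutions $(\alpha^{(k)},z^{(k)},t_k)\in c(R_{\epsilon_k}(I))$. As equations (\ref{eq:landau_equations}) are homogeneous in $\alpha$ and in $z$ I may normalize $\|\alpha^{(k)}\|=\|z^{(k)}\|=1$, placing all data in the compact set $K\times S^{2N-1}\times S^{2n+1}$. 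After passing to a convergent subsequence, and (finitely many indices) to a further subsequence on which for each $i$ a fixed branch of the alternative \enquote{$\alpha_i=0$ or $R_{\epsilon_k}(Q_i(t_k))(z^{(k)})=0$} holds for all $k$, I take the limit $(\alpha_*,z_*,t_*)$. Since $R_\epsilon(Q_i(t))$ and its gradient $2(M_i(t)+\epsilon A_i(t))z$ converge locally uniformly to $Q_i(t)$ and $2M_i(t)z$ as $\epsilon\to0$, the limit solves (\ref{eq:landau_equations}) for $I$ with $\alpha_*\neq0$ and $z_*\neq0$; hence $t_*\in L(I)$, contradicting $t_*\in K$.

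With this lemma I would fix any path $\gamma:[0,1]\to\mathbb{C}^m\backslash L(I)$; its image is compact, so $\mathrm{im}\,\gamma\cap L(R_\epsilon(I))=\emptyset$ for all small $\epsilon$ and Corollary \ref{thm:general_landau} permits the continuation of $R_\epsilon(I)$ along $\gamma$. At the base point the convergence $R_\epsilon(I)\to I$ is controlled by dominated convergence: with $A_i(t_0)$ positive semi-definite, on the real cycle the regularized denominators dominate the original, integrable integrand, so the germs of $R_\epsilon(I)$ converge to that of $I$ there. (Since $I$ may fail to be regular at $t_0$, one typically has $t_0\in L(I)$; one then bases the continuation at $t_0$ and runs $\gamma$ into $\mathbb{C}^m\backslash L(I)$, or perturbs to a nearby point of $W\backslash L(I)$, which is dense in the convergence neighborhood $W$ by the Proper Mapping Theorem \ref{thm:proper_mapping_thm} and Proposition \ref{thm:basics_analytic_sets} provided $L(I)\neq\mathbb{C}^m$.)

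The main obstacle is the global passage to the limit along $\gamma$. Each continuation is an integral $\int_{\sigma_\epsilon(s,\Gamma_0)}\omega/s_\epsilon$ over a cycle produced by an $\epsilon$-dependent ambient isotopy, and I must show these converge to one analytic function near every point of $\mathrm{im}\,\gamma$. The plan is to cover $\mathrm{im}\,\gamma$ by finitely many balls in $\mathbb{C}^m\backslash L(I)$ and to bound the deformed integrands uniformly in $\epsilon$: on a compact neighborhood of $\mathrm{im}\,\gamma$ that the lemma keeps disjoint from all $L(R_\epsilon(I))$ with $\epsilon$ small, the distance from the deformed cycle to the singular locus should stay bounded below, yielding a uniform bound on the continued integrals. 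Proposition \ref{thm:compact_convergence} then forces the limit to be analytic, and the identity theorem identifies it with the unique continuation of the germ of $I$ along $\gamma$. As $\gamma$ is arbitrary and $\mathbb{C}^m\backslash L(I)$ is connected by Corollary \ref{thm:connected_corollary}, this realizes $I$ as a multivalued function on $\mathbb{C}^m\backslash L(I)$. I expect the delicate point to be precisely this uniform lower bound on the cycle-to-singularity distance as $\epsilon\to0$, since the regularization renders the quadrics nondegenerate only to first order in $\epsilon$.
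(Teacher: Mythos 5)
Your strategy---regularize, continue each $R_\epsilon(I)$ separately along a path avoiding $L(I)$, and recover $I$ as the limit $\epsilon\to0^+$---is genuinely different from the paper's proof. Parts of it are sound: your semicontinuity lemma for the Landau surfaces (normalizing $(\alpha,z)$ to spheres by homogeneity of (\ref{eq:landau_equations}), fixing a branch pattern along a subsequence, and using locally uniform convergence of $R_\epsilon(Q_i)$ and its gradient) is correct, and it does guarantee that each $R_\epsilon(I)$ with $\epsilon$ small enough admits a continuation along any fixed path $\gamma$ in $\mathbb{C}^m\backslash L(I)$ by Corollary \ref{thm:general_landau}. The paper needs no such lemma, because it treats $\epsilon$ as a \emph{complex} parameter: it regards the whole family as a single projective quadratic integral $R_\bullet(I)$ on $T\times D$, applies Corollary \ref{thm:main_corollary} once over $T\times(D\backslash\{0\})$ minus the joint Landau set---obtaining one multivalued analytic function of $(t,\epsilon)$ jointly---and then crosses the slice $\{\epsilon=0\}$ by a purely complex-analytic removability argument (Riemann's Second Extension Theorem \ref{thm:riemann_extension_2}), using that the additional singular set there, $L(I)\times\{0\}$, is analytic of codimension $\geq2$ in $\mathbb{C}^{m+1}$. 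No limit of integrals is ever taken.

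The gap in your version is exactly the point you flagged, and as the argument stands it is fatal. The isotopies $\sigma_\epsilon$ are produced by Corollary \ref{thm:main_corollary} and Theorem \ref{thm:standard_form} separately for each fixed $\epsilon$, by integrating vector fields built from $\epsilon$-dependent local trivializations and partitions of unity; nothing in those statements relates the trivializations for different $\epsilon$. Hence there is no a priori control of the cycles $\sigma_\epsilon(s,\Gamma_0)$ as $\epsilon\to0$: they need neither remain at a distance from the singular loci bounded below uniformly in $\epsilon$, nor have uniformly bounded volume. Your semicontinuity lemma cannot supply this, because it controls distances in the \emph{parameter} space $\mathbb{C}^m$, whereas the bound you need is a distance in the ambient space $\mathbb{C}\mathbb{P}^n$ between the deformed cycle and $\bigcup_iV(R_\epsilon(Q_i(\gamma(s))))$. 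Without it, the uniform bound feeding Proposition \ref{thm:compact_convergence} (or Montel/Vitali) is unavailable. There is also a second, independent defect in the identification of the limit: the domination $|Q_i(t)(z)+\epsilon\mathcal{A}_i(t)(z)|\geq Q_i(t)(z)$ on the real cycle holds only where all $M_i(t)$ and $A_i(t)$ are real positive semi-definite, i.e.\ essentially only at $t_0$ itself; for complex $t$ near $t_0$ (in particular at a perturbed base point $t_1\in\mathbb{C}^m\backslash L(I)$) the quantities $Q_i(t)(z)$ and $\epsilon\mathcal{A}_i(t)(z)$ are complex on $\mathbb{R}\mathbb{P}^n$ and may nearly cancel, so dominated convergence gives no convergence of germs on any open set. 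Pointwise convergence at the single point $t_0$ suffices neither for Vitali's theorem (boundedness then yields only subsequential limits) nor for the identity-theorem identification of the limit with $I$. Both defects are cured simultaneously by the paper's device of making $\epsilon$ a complex coordinate of the parameter space, so that a single locally trivial bundle of pairs governs all $(t,\epsilon)$ with $\epsilon\neq0$ and the passage to $\epsilon=0$ becomes a codimension-two extension problem rather than a convergence problem; repairing your one-$\epsilon$-at-a-time argument would essentially force you to rebuild that joint construction by hand.
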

\begin{proof}
    Let $D\subset\mathbb{C}$ be the open unit disc around 0 and $R_\epsilon(I)$ a $Q$-regularization of $I$ as in Proposition \ref{thm:regintstandardform}. We set $L_\epsilon:=L(R_\epsilon(I))$ and denote by $\{\mathcal{A}_i(t)\}_{t\in U}$ the analytic family of quadratic forms associated to the family of matrices $\{A_i(t)\}_{t\in U}$ from the $Q$-regularization for all $i\in\{1,\ldots,N\}$. We can view the regularized integral $R_\epsilon(I)(t)$ as a projective quadratic integral $R_\bullet(I)$ on $T\times D$. We know that $R_\bullet(I)$ is of standard form but we can not directly apply Corollary \ref{thm:main_corollary} since the zero loci $V(Q_i(t)+\epsilon\mathcal{A}_i(t))$ might not be manifolds for $\epsilon=0$. For $\epsilon\neq0$ however the $Q$-regularization guarantees that this is the case so we apply Corollary \ref{thm:main_corollary} to the restriction $R_\bullet(I)|_A$ where $A:=T\times(D-\{0\})$. Hence $R_\bullet(I)|_{A}$ defines a germ of an analytic function at $(t_0,\epsilon_0)$, where $\epsilon_0\in\mathbb{R}^+$, which can be analytically continued to all $(t,\epsilon)\in A-\bigcup_{\epsilon\in D-\{0\}}(L_\epsilon\times\{\epsilon\})$. Now the question is if we can analytically continue to $(T\times D)-\bigcup_{\epsilon\in D}(L_\epsilon\times\{\epsilon\})$. Note that $(T\times D)-\bigcup_{\epsilon\in D}(L_\epsilon\times\{\epsilon\})$ is an open subset of $\mathbb{C}^m\times\mathbb{C}\simeq \mathbb{C}^{m+1}$ and that $L_0\times\{0\}$ is an analytic set (given by the Landau equations and the analytic equation $\epsilon=0$) of codimension $\geq2$ (since $I$ converges for some $t$, hence $L_0\neq\mathbb{C}^m$ and $L_0$ has codimension $\geq1$ according to Proposition \ref{thm:basics_analytic_sets}). Thus according to Riemann's Second Extension Theorem \ref{thm:riemann_extension_2} we can indeed analytically continue as desired.
\end{proof}

\subsection{Comment on the Use of Thom's Isotopy Theorem}
A very useful tool to determine if a fiber bundle of pairs is locally trivial is the Isotopy Theorem due to Thom \cite{thomiso}. We quickly recall the notions of a stratification and Whitney regularity needed to state the theorem. In this section a fiber bundle is always locally trivial (we emphasize this because in the case of fiber bundle of pairs, see Definition \ref{def:fibre_pair}, we use the convention to explicitly state if a fiber bundle of pairs is (locally) trivial or not.).
\begin{defn}[\cite{pham}]
    Let $X$ be a topological space. A \textit{stratification} of $X$ is a sequence of closed sets
    \begin{align*}
        X=S^{n_1}\supset S^{n_2}\supset\cdots\supset S^{n_k}\supset\cdots
    \end{align*}
    such that
    \begin{enumerate}
        \item $S^{n_ i}-S^{n_{i+1}}$ is a smooth manifold of dimension $n_i$. The connected components of the $S^{n_ i}-S^{n_{i+1}}$ are called the \textit{strata}.
        \item The boundary of any stratum is the union of strata of strictly smaller dimension.
    \end{enumerate}
\end{defn}
In \cite{whitney} Whitney introduced some additional conditions on the way two strata are allowed to be incident to each other. They refer to the convergence of tangent planes and lines which is to be understood as convergence in the relevant Grassmannians.
\begin{defn}[Whitney's Regularity Conditions, \cite{mather}]
    Let $(X,Y)$ be a pair of two smooth manifolds $X$ and $Y$ and let $y\in Y$.
    \begin{enumerate}
        \item We say $(X,Y)$ is \textit{Whitney a-regular} at $y$ if for any sequence $\{x_n\}_{n\in\mathbb{N}}$ in $X$ converging to $y$ such that the tangent planes $T_{x_n}X$ converge to some plane $\tau$ we have $TY_y\subset\tau$.
        \item We say $(X,Y)$ is \textit{Whitney b-regular} at $y$ if for any sequences $\{x_n\}_{n\in\mathbb{N}}$ in $X$ and $\{y_n\}_{n\in\mathbb{N}}$ in $Y$, both converging to $y$, such that the tangent planes $T_{x_n}X$ converge to some plane $\tau$ and the sequence of secants $\overline{x_ny_n}$ converges to a line $l$ then $l\subset\tau$.
    \end{enumerate}
    We say the pair $(X,Y)$ is Whitney a-regular (resp. Whitney b-regular) if it is Whitney a-regular (resp. Whitney b-regular) at every point in $Y$. A stratification of a topological space is called Whitney a- or b-regular if all pairs of strata are Whitney a- or b-regular.
\end{defn}
A Whitney b-regular stratification is simply called a \textit{Whitney stratification}. Although this is the form in which the Whitney conditions are commonly stated, it suffices to consider the second one since Mather proved that Whitney b-regularity implies Whitney a-regularity \cite{mather}.\footnote{Whitney a-regularity is still used in practice: There is a weaker requirement than Whitney b-regularity, sometimes called \textit{Whitney b'-regularity}, which together with a-regularity implies b-regularity \cite{lu}. This weaker condition is often easier to check.}
\begin{defn}[\cite{pham}]
    Let $Y$ and $T$ be differentiable manifolds with $T$ connected. If $\pi:Y\to T$ is a differentiable map we say that $\pi:(Y,S)\to T$ is a \textit{stratified bundle} if there is a stratified set $X$ such that $\pi:Y\to T$ is a (locally trivial) fiber bundle with fiber $X$ and the local homeomorphisms $\phi_U$ maps strata of $Y$ to the product of a stratum of $X$ with $T$.
\end{defn}
Given a stratified subset $S\subset Y$ we can obtain a stratification of $Y$ by taking the strata of $S$ together with the connected components of $Y\backslash S$. If the stratification of $S$ is Whitney b-regular, so is the resulting stratication of $Y$ \cite{pham}. Then if $\pi:Y\to T$ is a stratified bundle obtained in this manner, the fiber bundle of pairs $\pi:(Y,S)\to T$ is locally trivial. We can now state the theorem:
\begin{thm}[Thom's Isotopy Theorem, \cite{pham}]\label{thm:thoms_lemma}
    Let $Y$ and $T$ be smooth manifolds, $T$ connected and $Y$ equipped with a Whitney stratification. Let $\pi:Y\to T$ be a proper differentiable map. If the restriction of $\pi$ to each stratum is a submersion then $Y$ is a stratified bundle.
\end{thm}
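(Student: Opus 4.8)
The plan is to prove that $\pi$ is locally trivial over $T$ in the stratified sense. Since $T$ is connected and the desired conclusion is local in the base, it suffices to produce, for each $t_0\in T$, a homeomorphism $\pi^{-1}(W)\cong\pi^{-1}(t_0)\times W$ over a coordinate neighborhood $W\cong\mathbb{R}^p$ of $t_0$ which carries each stratum of $\pi^{-1}(W)$ to the product of a stratum of the fiber $\pi^{-1}(t_0)$ with $W$. I would reduce the entire statement to the problem of lifting the coordinate vector fields $\partial/\partial t_1,\ldots,\partial/\partial t_p$ on $W$ to vector fields on $\pi^{-1}(W)$ whose flows can then be integrated to assemble such a trivialization.

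The technical core is the construction of a system of control data for the Whitney stratification (a Thom--Mather tube system). For each stratum $S$ I would choose a tubular neighborhood with projection $\pi_S$ and tubular function $\rho_S$, arranged so that on overlaps the commutation relations $\pi_{S'}\circ\pi_S=\pi_{S'}$ and $\rho_{S'}\circ\pi_S=\rho_{S'}$ hold, and so that the tubes are compatible with $\pi$ in the sense that $\pi\circ\pi_S=\pi$ on each tube. Here is precisely where the hypotheses enter: the submersion assumption guarantees that on each individual stratum the coordinate field $\partial/\partial t_j$ admits a tangent lift projecting onto it, and Whitney $b$-regularity (which by Mather implies Whitney $a$-regularity \cite{mather}) is what allows these stratum-wise lifts to be assembled, by means of the control data together with a partition of unity, into a single \emph{controlled} vector field $\xi_j$ on $\pi^{-1}(W)$ that is tangent to every stratum, satisfies $d\pi(\xi_j)=\partial/\partial t_j$, and is compatible with all the $\pi_S$ and $\rho_S$.

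Granting such controlled lifts, I would integrate them. Properness of $\pi$ confines integral curves to preimages of compact sets, so the flows are complete and defined for all time. The decisive point, and the one I expect to be the main obstacle, is that a controlled vector field integrates to a flow which is continuous on all of $\pi^{-1}(W)$ rather than merely smooth on each stratum separately; this is exactly the content of Mather's integration theorem for controlled vector fields \cite{mather}, and it is where Whitney regularity does the essential work by preventing integral curves from behaving discontinuously as they approach lower strata. Composing the time-$t_j$ flows of $\xi_1,\ldots,\xi_p$ to transport the fiber $\pi^{-1}(t_0)$ across $W$ then produces the trivializing homeomorphism $\pi^{-1}(t_0)\times W\to\pi^{-1}(W)$ over $W$. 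Because each $\xi_j$ is tangent to strata, its flow preserves the stratification, so the trivialization maps strata of $\pi^{-1}(W)$ to products of fiber-strata with $W$; this exhibits $\pi:Y\to T$ as a stratified bundle with fiber the stratified set $\pi^{-1}(t_0)$ and completes the proof.
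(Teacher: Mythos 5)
The paper contains no proof of Theorem \ref{thm:thoms_lemma} for you to be compared against: the theorem is quoted from \cite{pham} (ultimately Thom and Mather) and is used in the paper only as a point of contrast with Corollary \ref{thm:main_corollary}, with an explanation of why it does not help for quadratic integrals. Judged on its own, your sketch is the standard Thom--Mather argument, which is exactly what the cited sources carry out, and its architecture is sound: existence of control data (tube systems $\pi_S,\rho_S$ with the commutation relations and the compatibility $\pi\circ\pi_S=\pi$, whose construction uses Whitney regularity and the stratum-wise submersion hypothesis); controlled lifts $\xi_j$ of the coordinate fields; Mather's integration theorem giving flows that are continuous on all of $\pi^{-1}(W)$ and preserve strata; properness for completeness; and composition of the flows in a fixed order (they need not commute) to produce a trivialization respecting strata, with connectedness of $T$ ensuring a single fiber type as required by the paper's definition of a stratified bundle. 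You also locate each hypothesis where it genuinely acts, which is the conceptually important part of the theorem.

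Two caveats. First, the entire technical weight of your argument rests on the two theorems of Mather that you invoke as black boxes --- existence of control data compatible with $\pi$, and integrability of controlled vector fields to continuous stratum-preserving flows; these \emph{are} the proof, so what you have is a correct reduction and organization rather than a self-contained argument. Second, your sentence attributing completeness of the flows solely to properness is slightly too quick: properness confines integral curves to preimages of compacta, but it is the control condition $\xi_j\rho_S=0$ (invariance of the tubular functions) that prevents an integral curve inside a stratum from reaching a lower stratum in finite time. Since Mather's integration theorem packages both ingredients, this does not break the argument, but the division of labor should be stated correctly.
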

This statement is sometimes employed in the particular context of Feynman integrals (see for example \cite{periods} or \cite{boyling}), but usually in the context of the parametric representation. In fact, often times the Landau variety is defined by means of this theorem. For example in \cite{periods} it is defined as the codimension 1 part of $\pi(\bigcup cA_i)$ where the union is over all strata $A_i$ and $cA_i$ is the set of all points where $\pi|_{A_i}$ fails to be a submersion. It is certainly a more powerful criterion than Corollary \ref{thm:main_corollary} to test for local triviality. But nevertheless the main obstacle for quadratic integrals (from the perspective of reducing the problem to the study of integrals of standard form) can not be overcome by Thom's Isotopy Theorem, namely that in general and in particular for most Feynman integrals there is no appropriate $t_0$ in the parameter space to start with. If there was such a $t_0$, then the matrix representatives of the $Q_i(t)$ would already have full rank almost everywhere so that Corollary \ref{thm:main_corollary} would apply anyway. Thus in the context considered here the theorem is not particularly useful.\\
While it is evident that the singular locus of the integrand admits the structure of a Whitney stratification,\footnote{In \cite{whitney} Whitney proved that any analytic variety admits a Whitney stratification. Several generalizations, for example to semi-algebraic sets \cite{thom} or subanalytic set \cite{hironaka}, have been established since then.} it does not seem immediately clear if the natural stratification
\begin{align}\label{eq:stratification}
    S_i-\bigcup_{j\neq i}S_i\cap S_j,\quad S_i\cap S_j-\bigcup_{k\neq i, k\neq j}S_i\cap S_j\cap S_k,\quad \ldots
\end{align}
with $S_i:=\{(z,t)\in\mathbb{C}\mathbb{P}^n\times U\;|\; H(Q_i)(t)(z)=0\}$ for all $1\leq i\leq N$ is Whitney regular (for some open $U\subset\mathbb{C}^m$ chosen to include the physically relevant points). If this were true we could directly deduce the Landau equations from Theorem \ref{thm:thoms_lemma}.

\section{Feynman Integrals}\label{sec:feynman_integral}
Perturbative quantum field theory is dominated by the study of Feynman integrals which are obtained from graphs with some additional structure by applying the so called \textit{Feynman rules}. The general philosophy of this procedure is that, roughly speaking, to obtain a prediction of a certain result's probability in the quantum world we need to \enquote{sum} the probability amplitudes of all experimentally indistinguishable ways this result can occur. A Feynman integral then represents one such contribution to the probability amplitude.\\
There are multiple possible definitions for such graphs with additional structure, generally called Feynman graphs, to capture the relevant combinatorial structure, each having their own advantages and disadvantages. For our purposes we define a Feynman graph as follows:
\begin{defn}
    A \textit{Feynman graph} is a pair $(G,\phi)$ of a finite undirected multigraph $G$, i.e. a pair $G=(V(G),E(G))$ of a finite set $V(G)$ whose elements are called \textit{vertices} of $G$ and a finite multiset $E(G)$ whose elements are unordered pairs of vertices called \textit{edges} of $G$, together with a map $\phi:V(G)\to\mathbb{N}$, the \textit{external structure}, assigning each vertex the number of external lines incident to it. We denote the set of all Feynman graphs by $\mathcal{G}$.
\end{defn}
The edges of a Feynman graph represent interacting (virtual) particles and the external structure encodes the particles which we consider incoming or outgoing. In a collider experiment for example the particles we prepare for a collision may count as incoming, the particles measured as a result of the collision as outgoing. Physicists often work with a definition based on half-edges instead. Then the half-edges not joined to a full edge represent incoming and outgoing particles. This replaces the map $\phi$ from our definition. For a definition of Feynman graphs using half-edges see for example \cite{coloredgraphs}. Aside from numerous equivalent variations of this definition there is also a number of generalizations. If one wants to consider different types of particles for example, this can be done by augmenting the definition above by a map $c:E\to\{1,...,N_p\}$ where $N_p$ is the number of particles allowed in the theory under consideration. In this work we restrict ourselves to scalar theories and we do not distinguish different types of edges.\footnote{Most obstructions preventing analyticity can already be observed in this setting. A detailed research is reserved for the future.} An example of a common pictorial representation of Feynman graphs is given in Figure \ref{fig:qed_diagrams}: It shows two graphs occurring in quantum electrodynamics (QED for short), both contributing to the scattering amplitude of an electron and a positron. In QED there are three types of edges: The wiggly lines represent photons while the straight lines with an arrow represent electrons or positrons depending on their orientation (with respect to the horizontal axis). The external structure is indicated by the lines which are not connected to a vertex on one end.\\
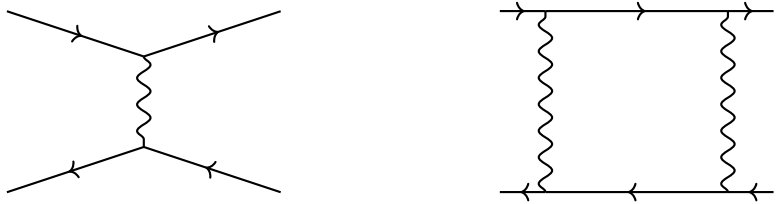
\begin{figure}
\begin{subfigure}[c]{0.4\textwidth}
\begin{tikzpicture}[thick, scale=0.6]
    \draw (0,2) edge [electron] (3,1);
    \draw (3,1) edge [electron] (6,2);
    \draw (3,-1) edge [electron] (0,-2);
    \draw (6,-2) edge [electron] (3,-1);
    \draw (3,1) edge [photon] (3,-1);
\end{tikzpicture}
\end{subfigure}
\begin{subfigure}[c]{0.4\textwidth}
\begin{tikzpicture}[thick, scale=0.6]
    \draw (-1,4) edge [electron] (0,4);
    \draw (0,4) edge [electron] (4,4);
    \draw (4,4) edge [electron] (5,4);
    \draw (4,0) edge [photon] (4,4);
    \draw (0,0) edge [photon] (0,4);
    \draw (0,0) edge [electron] (-1,0);
    \draw (4,0) edge [electron] (0,0);
    \draw (5,0) edge [electron] (4,0);
\end{tikzpicture}
\end{subfigure}
\caption{Two Feynman graphs in quantum electrodynamics.}
\label{fig:qed_diagrams}
\end{figure}
If the external structure is evident or of no specific relevance we denote  a Feynman graph $(G,\phi)$ simply by $G$. Let $(G,\phi)$ be a Feynman graph. For any vertex $v\in V(G)$ we denote its \textit{star}, i.e. the set of all edges incident to $v$, by $\text{st}(v)$. Furthermore we denote the number of independent cycles in $G$ by $h_1(G)$ (also referred to as the \textit{first Betti number} of $G$) and write $N_G:=\sum_{v\in V(G)}\phi(v)$ for the total number of external momenta. We say that $G$ is \textit{one-particle irreducible} (1PI for short) if $G$ is connected and still connected upon removing any edge $e\in E(G)$. In the mathematical literature this property is often called \textit{2-edge-connected}.\footnote{Unfortunately, the terminology in the physics and mathematics community often differs greatly in this area, as this sentence exemplifies. This makes it hard to write for an audience in the intersection and the author apologizes for any inconvenience this might cause here.} A \textit{Feynman subgraph} $(\gamma,\psi)$ of $(G,\phi)$, denoted by $\gamma\subset G$ (by abuse of notation), is a Feynman graph such that $V(\gamma)\subset V(G)$, $E(\gamma)\subset E(G)$, and $\psi=\phi|_{V(\gamma)}$. We denote the Feynman graph obtained from shrinking a subgraph $\gamma\subset G$ to a vertex by $G/\gamma$ (with the obvious external structure).\\
In perturbative quantum field theory, there is a family of (not necessarily convergent) integrals associated to a Feynman graph. We give a lightning review of the theory of Feynman integrals. For details on Feynman graphs and integrals in perturbative quantum field theory the reader can consult classical works like \cite{weinberg} or a more modern exposition like \cite{weinzierl}, focusing on the perturbative aspects of quantum field theory.\\
The procedure to assign an integral to a Feynman graph goes as follows: First we fix a \textit{space-time dimension} $D\in\mathbb{N}$ (in Subsection \ref{sec:feynman_regularization} we comment on dimensional regularization where $D$ is allowed to vary). Let $(G,\phi)$ be a Feynman graph. In practice one usually restricts attention to a certain family of Feynman graphs since a given theory does not allow for vertices of arbitrary valency (see again Figure \ref{fig:qed_diagrams} where only the vertex joining a photon, an electron, and a positron is allowed which is in particular a 3-valent vertex). Since this is of no consequence to our exposition (among other things since we only look at individual graphs until Section \ref{sec:renormalization}), we do not put any such restriction on our graphs. To each edge $e\in E(G)$ we associate a \textit{mass} $m_e\in\mathbb{R}_{\geq0}$ (or more generally $m_e\in\mathbb{C}$) and to each vertex $v\in V(G)$ a set of \textit{external momenta} $\{p_{v,i}\}_{1\leq i\leq\phi(v)}$ with $p_{v,i}\in\mathbb{C}^D$. The result of the integration depends only on $p_v:=\sum_{i=1}^{\phi(v)}p_{v,i}$. Fix a small $\epsilon>0$ and a vertex $v_0\in V(G)$. The integral is then obtained by writing a factor of
\begin{align}\label{eq:propagator}
    \frac{1}{(k_e^2+m_e^2-i\epsilon)^{\lambda_e}}
\end{align}
for each edge $e\in E(G)$ with $\lambda_e\in\mathbb{C}$ (called an \textit{analytic regulator}), then enforcing a linear condition $p_v+\sum_{e\in E}\mathcal{E}_{e,v}k_e=0$ at every vertex $v\in V(G)\backslash\{v_0\}$ and finally integrating over the $k_e$ not determined by these conditions. The reason for leaving out the condition for $v_0$ is to separate the overall momentum conservation $\sum_{v\in V(G)}p_v=0$ of the external momenta from the remaining conditions (note that we agree here on the convention that all external momenta are considered to be incoming). These remaining $k_e$ are referred to as \textit{loop} or \textit{internal momenta} since their number is equal to $h_1(G)$.\footnote{While graph theorists often call edges incident to only one vertex a loop, physicists commonly use the term loop for independent cycles.} The nominator of (\ref{eq:propagator}) now depends on the $p_v$ and we denote the resulting expression by $Q_e(p,m)(k)$ for all $e\in E(G)$ where $p=(p_1,\ldots,p_{|V(G)|})\in\mathbb{C}^{N_GD}$ and $m=(m_1,\ldots,m_{|E(G)|})\in\mathbb{C}^{|E(G)|}$. Thus we obtain an analytic family $\{Q_e(p,m)\}_{(p,m)\in\mathbb{C}^{N_GD}\times\mathbb{C}^{|E(G)|}}$ of quadratic functions on $\mathbb{C}^{h_1(G)D}$. In this prescription we actually need to temporarily endow $G$ with an orientation: $\mathcal{E}_{e,v}\in M(|E(G)|\times|V(G)|;\mathbb{Z})$ is the incidence matrix of $G$ whose definition is orientation dependent. The final result of the integration (whenever it makes sense) can be shown to be independent of the chosen orientation. In our treatment we work in Minkowski-space with signature $(-,+,\ldots,+)$\footnote{Particle physicists tend to dislike this convention (see for example \cite{fun}, footnote on page 2), working rather with $(+,-,\ldots,-)$, while people working in general relativity prefer the one used here. In our application we can avoid some factors of $-1$ by sticking to the second one.} which means we integrate over $i\mathbb{R}\times\mathbb{R}^{D-1}$ for each loop momentum. The general Feynman integral in a scalar theory thus takes the form of an integral
\begin{align}\label{eq:feynman_integral}
    I_G(p,m)=\left(\prod_{i=1}^{h_1(G)}\int_{\mathbb{R}^{D-1}}d^{D-1}\vec{k}_i\int_{-i\infty}^{i\infty}d(k_i)_0\right)\frac{1}{\prod_{e\in E}(Q_e(p,m)(k))^{\lambda_e}}.
\end{align}
We have omitted from equation (\ref{eq:feynman_integral}) some factors of $i$s and $\pi$s as well as coupling constants, which are needed to obtain a numerical agreement with the experiment but play no role in the analytical structure, to slightly declutter the notation.

\subsection{Wick Rotation}
Without the $i\epsilon$-term in (\ref{eq:propagator}) the integral (\ref{eq:feynman_integral}) would not be well-defined. There are several ways to think about this so called $i\epsilon$-prescription. One might for example say that we do not take $\epsilon$ to be some small finite number but consider instead
\begin{align*}
    \frac{1}{k_e^2+m_e^2-i0}=\lim_{\epsilon\to0^+}\frac{1}{k_e^2+m_e^2-i\epsilon}
\end{align*}
as the meaning of the propagator. Of course this limit does not exist in the space of smooth functions due to the pole at $k_e^2=-m_e^2$. Rather one is forced to view it as a distribution and one is lead down the road of Sato's school of microlocal analysis (see for example \cite{sheavesonmanifolds}). This is not the point of view we discuss here but the interested reader is referred to \cite{konrad} where microlocal techniques combined with the theory of toric varieties are used to investigate Feynman integrals from the parametric viewpoint.\\
More commonly one takes the $\epsilon$ to be a small positive value moving the singularities off the integration domain. One possibility to proceed is to keep the $i\epsilon$-term until the calculation is finished and take the limit $\epsilon\to0^+$ at the end of the calculation. In our setup this is not possible since if we projectivize, the $i\epsilon$ acquires a homogenous factor of degree 2 in the auxiliary variable. \enquote{At infinity}, where this factor vanishes, the zero loci still intersect the integration domain. Therefore we need to apply a so called \textit{Wick rotation}\footnote{Unfortunately this name is not used consistently in the literature: While some authors call the deformation of the original integration cycle a Wick rotation \cite{weinzierl}, like we do here, others reserve this word for the analytic continuation from real external momenta to Minkowski space \cite{hyperlog}.} (which the authors of the unpublished paper in \cite{homfeynman} call \textit{Dyson change}):\\
Viewing $I_G$ as an iterated integral we can successively deform the integration contour $i\mathbb{R}$ of the loop momenta's 0th components to $\mathbb{R}$ without crossing any pole. The resulting integral, if it converges at all, is absolutely convergent for appropriate (Euclidean) external momenta and masses. We can thus view it as an integral over $\mathbb{R}^{Dh_1(G)}$ and take the limit $\epsilon\to0^+$ inside the integration. There is a little caveat: As the authors of the aforementioned unpublished paper remark, there is an ambiguity in equation (\ref{eq:feynman_integral}). We need to agree on the order in which we take the integrations and it is not clear if the results are equivalent. In any case the function defines a germ of an analytic function given by the Wick rotated integral but they might correspond to different branches.\\
We will not dwell on the details of this procedure here (the reader is referred to \cite{weinzierl}) and instead take the Wick rotated form to be the definition of a Feynman integral in this work (see Definition \ref{def:feynman_integral} below).\\
All of this does not mean, however, that we restrict ourselves to the case of Euclidean (which in our setup means real) external or internal momenta. The Wick rotated situation just serves as a starting point for an analytic continuation: Starting with Euclidean external momenta $p_0\in\mathbb{R}^{N_GD}$, real masses $m_0\in\mathbb{R}^{|E(G)|}$, and an integration domain consisting of all Euclidean internal momenta, we can analytically continue along appropriate paths in our parameter space (which we have not defined so far, but which generally consists of the (complex) external momenta and possibly the (complex) masses). The zero loci and the initial integration cycle $\Gamma_0=\mathbb{R}\mathbb{P}^{h_1(G)D}$ will eventually collide requiring us to deform $\Gamma_0$ to more general and in particular Minkowski momenta.

\subsection{Analyticity of Feynman Integrals}
After applying the Wick rotation a Feynman integral is in a form in which the results from Section \ref{sec:quadric_integrals} apply. The general Feynman integral, after projectivizing, is then written in the form of a projective quadratic integral. At this point we need to agree on which variables we want the Feynman integral, interpreted as a complex function, to depend on and which variables we view as fixed. The candidates for a functional dependence are the external momenta $p\in\mathbb{C}^{N_GD}$ and the masses $m\in\mathbb{C}^{|E(G)|}$. The experimental situation usually allows for a variation of the external kinematic (of course restricted by the machines at hand) but the masses are fixed by the particles occurring in a given theory. It can sometimes be useful to consider the masses as additional complex variables. In our approach we can freely fix an arbitrary number of external momenta (or even just individual components) and masses while defining the rest to form the parameter manifold. For concreteness we allow all external momenta and masses to vary in the following discussion. In other words the parameter manifold for the integral corresponding to a Feynman graph $G\in\mathcal{G}$ is
\begin{align*}
    T_G:=\mathbb{C}^{N_GD}\times\mathbb{C}^{|E(G)|}.
\end{align*}
We thus define a Feynman integral as follows:
\begin{defn}\label{def:feynman_integral}
    Let $G\in\mathcal{G}$ be a Feynman graph. We define the corresponding \textit{Feynman integral} $I_G$ to be
    \begin{align*}
        I_G(p,m):=\int_{\mathbb{R}^{h_1(G)D}}\frac{dy_1\wedge\cdots\wedge dy_{h_1(G)D}}{\prod_{e\in E}(Q_e(p,m)(y))^{\lambda_i}}.
    \end{align*}
    We call a Feynman integral $I_G$ \textit{convergent} if $I_G(p,m)$ converges as an integral for all $(p,m)$ in some open set $U\subset T_G$. If $2\text{Re}(\lambda)\geq h_1(G)D+1$, we say that $I_G$ is \textit{superficially convergent as a projective quadric integral}.
\end{defn}
\noindent Our definition of superficial convergence as a projective quadratic integral is a little different from the usual definition of superficial convergence in the physics literature, where a Feynman integral is called superficially convergent if $2\text{Re}(\lambda)>h_1(G)D$. We do make use of this second meaning of the term in this work, so there is no danger of confusion in simply saying $I_G$ is superficially convergent if $2\text{Re}(\lambda)\geq h_1(G)D+1$ throughout the rest of this text. Note that
\begin{align*}
    I_G(p,m):=\int_{\mathbb{R}\mathbb{P}^{h_1(G)D}}\frac{u^{2\lambda-h_1(G)D-1}\cdot\Omega_{h_1(G)D}}{\prod_{e\in E}(H(Q_e(p,m))(u,k))^{\lambda_e}}
\end{align*}
by Proposition \ref{thm:projectivize} and thus a superficially convergent Feynman integral as defined above constitutes a projective quadratic integral. Since it is customary to index the first component of each loop momentum by $0$, we denote the auxiliary variable introduced through compactification by $u$ instead of $k_0$. Furthermore we decompose $Q_e(p,m)(k)=(K_e(k)+P_e(p))^2+m_e^2$ where $K_e(k)\in\mathbb{C}^D$ contains the dependence on the internal and $P_e(p)\in\mathbb{C}^D$ the dependence on the external momenta associated to the edge $e$. We additionally write $K_e(k)=\sum_{i=1}^{h_1(G)}c^{(e)}_ik_i$.\\
Feynman integrals are sufficiently regular for our techniques to apply as the next proposition establishes.
\begin{prop}\label{thm:fully_regularizable}
    Let $G\in\mathcal{G}$ be a Feynman graph, $I_G$ the corresponding Feynman integral, and $(p,m)\in\mathbb{R}^{N_GD}\times\mathbb{R}^{|E(G)|}$. Then $I_G$ is quasi-regular at $(p,m)$.
\end{prop}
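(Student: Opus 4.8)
The plan is to unwind the definition of quasi-regularity and reduce the statement to a direct positivity computation for the homogenized quadratic forms attached to the edges. Recall that $I_G$, after projectivizing via Proposition \ref{thm:projectivize}, is governed by the quadratic forms $H(Q_e(p,m))$ for $e\in E(G)$, and that $I_G$ is quasi-regular at $(p,m)$ precisely when each matrix $\tilde{M}_e$ representing $H(Q_e(p,m))$ is positive semi-definite in the sense of the paper, i.e.\ has real entries and is positive semi-definite as a real matrix. So the whole statement comes down to checking these two properties edge by edge at the real point $(p,m)$.

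First I would write out $Q_e$ explicitly using the decomposition already fixed in the text, $Q_e(p,m)(k)=(K_e(k)+P_e(p))^2+m_e^2$ with $K_e(k)=\sum_{i=1}^{h_1(G)}c^{(e)}_i k_i$, keeping in mind that after the Wick rotation the square $(\cdot)^2$ is the Euclidean square $z^2=z_1^2+\cdots+z_D^2$. Homogenizing with the auxiliary variable $u$ and collecting terms by degree yields
\begin{align*}
    H(Q_e(p,m))(u,k)=\bigl(K_e(k)+u\,P_e(p)\bigr)^2+u^2 m_e^2,
\end{align*}
since the degree-two part $K_e(k)^2$ is unchanged, the degree-one part $2K_e(k)\cdot P_e(p)$ picks up one factor of $u$, and the constant $P_e(p)^2+m_e^2$ picks up $u^2$; completing the square in the Euclidean inner product gives the displayed form.

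Now I would evaluate this on an arbitrary real test vector $(u,k)\in\mathbb{R}^{1+h_1(G)D}$. Since the coefficients $c^{(e)}_i$ are integers and $(p,m)$ is real, the vector $K_e(k)+u\,P_e(p)$ lies in $\mathbb{R}^D$, so its Euclidean square is non-negative, and $u^2 m_e^2\geq0$ because $m_e$ is real. Hence $H(Q_e(p,m))(u,k)\geq0$ for all real $(u,k)$, while all entries of $\tilde{M}_e$ are real. This is exactly positive semi-definiteness of $\tilde{M}_e$. As this holds for every edge $e\in E(G)$ (and, if $h_1(G)D$ is even, the positive definite auxiliary factor $x^2+\pi^2$ causes no trouble), $I_G$ is quasi-regular at $(p,m)$.

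The computation itself is routine; the conceptual point to get right---and the only real obstacle---is that positivity hinges entirely on having performed the Wick rotation first. In Minkowski signature $(-,+,\ldots,+)$ the form $(\cdot)^2$ would be indefinite and $\tilde{M}_e$ would fail to be positive semi-definite, so quasi-regularity would be lost. The role of the real point $(p,m)$ is twofold: it guarantees the entries of $\tilde{M}_e$ are real, so that \enquote{positive semi-definite} is meaningful in the paper's sense, and it ensures the mass contribution $m_e^2$ is non-negative. One should also note that no regularity better than quasi-regularity can be expected in general, since the degree-two block of $\tilde{M}_e$ is the rank-one object $c^{(e)}(c^{(e)})^T\otimes I_D$ and is therefore typically degenerate---this is precisely why the regularization of Subsection \ref{sec:regularization} is needed afterwards.
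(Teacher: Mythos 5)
Your proof is correct and follows essentially the same route as the paper: both reduce quasi-regularity to checking that each homogenized form $H(Q_e(p,m))(u,k)=(K_e(k)+uP_e(p))^2+u^2m_e^2$ has a real representing matrix and is non-negative on real vectors $(u,k)$, which is immediate at a real point $(p,m)$ after the Wick rotation. Your additional remarks (the necessity of the Wick rotation, the degeneracy of the degree-two block $c^{(e)}(c^{(e)})^T\otimes I_D$ explaining why only quasi-regularity holds) are accurate but supplementary; the paper's proof is just the core computation.
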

\begin{proof}
    Let $(p,m)\in\mathbb{R}^{N_GD}\times\mathbb{R}^{|E(G)|}$. Then from the general form
    \begin{align*}
        H(Q_e(p,m))(u,k)=(K_e(k)+uP_e(p))^2+u^2m_e^2
    \end{align*}
    of the quadratic forms $H(Q_e(p,m))$ we can see immediately that the corresponding matrix has real entries and that we have $H(Q_e(p,m))(u,k)\geq0$ for $(u,k)\in\mathbb{R}^{h_1(G)D+1}$ since $K_e(k)+uP_e(p)$ and $u^2m_e^2$ are real. Hence $H(Q_e(p,m))$ is positive semi-definite for all $e\in E(G)$.
\end{proof}
Some remarks on the $Q$-regularization of quadratic integrals in the case of Feynman integrals are in order: The method we employ is somewhat unorthodox and might seem strange to a physicist on first sight. The quadratic forms $R_\epsilon(H(Q_e(p,m)))$ contain a contribution from every loop momentum. The regularization \enquote{forgets} the structure of subdivergences which is now completely encoded in the matrices $A_e(p,m)$ from Lemma \ref{thm:key_lemma_2}. The $Q$-regularized integral $R_\epsilon(I_G)$ is already convergent if it is superficially convergent. While the subdivergencies play a crucial role in renormalization which we discuss in the Subsection \ref{sec:renormalization}, they are not relevant in our discussion of the Landau equations here. This is precisely the idea of this technique: We decompose the study of a Feynman integral into two steps, where in the first step we forget about the complicated structure of nested divergencies and in the second step take the limit $\epsilon\to0$. In Theorem \ref{thm:quadric_integrals_main_theorem} we required the unregularized integral to converge, but to achieve convergence as $\epsilon\to0$ we actually need to take care of the divergencies. This can be achieved by \textit{renormalization} which is discussed in Section \ref{sec:renormalization}.
\begin{example}
    Consider the following Feynman graph (often called the sunrise or sunset graph):
    \begin{center}
    \begin{tikzpicture}[thick]
    \draw (-2,0) to [out=80, in=100] (2,0);
    \draw [->] (-2.8,-0.2) -- (-2.2,-0.2);
    \draw node at (-2.5,-0.5) {$p$};
    \draw (-2,0) to [out=-80, in=-100] (2,0);
    \draw [->] (-0.3,1.3) to [out=20, in=160] (0.3,1.3);
    \draw node at (0,0.8) {$k_1$};
    \draw [->] (-0.3,-1.3) to [out=-20, in=-160] (0.3,-1.3);
    \draw node at (0,-0.8) {$p-k_1-k_2$};
    \draw (-2,0) to (2,0);
    \draw [->] (-0.3,0.2) to (0.3,0.2);
    \draw node at (0,-0.3) {$k_2$};
    \draw (-3,0) -- (-2,0);
    \draw (2,0) -- (3,0);
    \draw [->] (2.2,-0.2) -- (2.8,-0.2);
    \draw node at (2.5,-0.5) {$p$};
    \end{tikzpicture}
    \end{center}
    Here $k_1$ and $k_2$ are the internal momenta, $p$ is the external momentum (note that we already applied the overall momentum conservation). In this picture each edge $e$ is labeled with $K_e(k)+P_e(p)$, the momentum assigned to it after enforcing momentum conservation at the two vertices. The orientation used is indicated by the arrows next to the edges. The three propagators (in the chart with $u\neq0$) read
    \begin{align*}
        Q_1(p,m)(k)=\frac{1}{k_1^2+m_1^2},\quad Q_2(p,m)(k)=\frac{1}{k_2^2+m_2^2},\quad Q_3(p,m)(k)=\frac{1}{(p-k_1-k_2)^2+m_3^2}.
    \end{align*}
    The matrices from Lemma \ref{thm:key_lemma_2} are not uniquely determined. One could for example add the positive definite form given by $(u,k)\mapsto\epsilon\cdot(u^2+k^2)$ to each denominator which is arguably the easiest possible choice. Assuming we exclude vanishing masses from our parameter space we could also choose the following $Q$-regularization:
    \begin{gather*}
        R_\epsilon(H(Q_1(p,m)(u,k)))=\frac{1}{k_1^2+\epsilon k_2^2+m_1^2u^2}, \quad R_\epsilon(H(Q_2(p,m)(u,k)))=\frac{1}{\epsilon k_1^2+k_2^2+m_2^2u^2},\\ R_\epsilon(H(Q_1(p,m)(u,k)))=\frac{1}{(up-k_1-k_2)^2+\epsilon(k_1^2+k_2^2)+m_3^2u^2}.
    \end{gather*}
    Note how each propagator now depends non-trivially on all loop-momenta occurring in the integral.
\end{example}
By the discussion in Section \ref{sec:quadric_integrals}, particularly Theorem \ref{thm:quadric_integrals_main_theorem}, we immediately obtain the following from Proposition \ref{thm:fully_regularizable}:
\begin{thm}
    Let $G\in\mathcal{G}$ and $I_G$ be the corresponding Feynman integral. If $I_G$ is convergent it defines a multi-valued function on $\mathbb{C}^m\backslash L(I_G)$.
\end{thm}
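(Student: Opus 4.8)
The plan is to verify that $I_G$ satisfies the hypotheses of Theorem \ref{thm:quadric_integrals_main_theorem} and then simply quote that theorem, with parameter space $\mathbb{C}^m=T_G=\mathbb{C}^{N_GD}\times\mathbb{C}^{|E(G)|}$ (so $m=N_GD+|E(G)|$) and $L(I_G)$ the associated Landau surface. Theorem \ref{thm:quadric_integrals_main_theorem} asks for three things: that $I_G$ be a projective quadratic integral, that it be quasi-regular at some $t_0$, and that it converge on an open set containing $t_0$. The first two I would dispatch directly. For the first, I would invoke Proposition \ref{thm:projectivize} to rewrite the (Wick-rotated) Feynman integral in the projective form displayed after Definition \ref{def:feynman_integral}; this is legitimate precisely when $I_G$ is superficially convergent as a projective quadratic integral, and if $h_1(G)D$ happens to be even I would first adjoin the auxiliary factor $1=\int_\mathbb{R}\frac{dx}{x^2+\pi^2}$ (suitably regularized) to pass to an odd integration dimension as described in Section \ref{sec:quadric_integrals}. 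For the second, Proposition \ref{thm:fully_regularizable} already tells me that $I_G$ is quasi-regular at every real point $(p,m)\in\mathbb{R}^{N_GD}\times\mathbb{R}^{|E(G)|}$, so any such point is a candidate for $t_0$.

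The delicate step, and the one I expect to be the main obstacle, is reconciling the convergence hypothesis with the requirement that the convergence region meet the locus of quasi-regular points. By definition $I_G$ being convergent only guarantees convergence on \emph{some} open set $U\subset T_G$, whereas quasi-regularity is guaranteed only at the real points; a priori $U$ need not come anywhere near $\mathbb{R}^{N_GD}\times\mathbb{R}^{|E(G)|}$. The way I would bridge this gap is to observe that whether the integral converges is, at bottom, a statement about power counting in the various regions where subsets of the loop momenta tend to infinity, and these regions are controlled by the leading degree-two parts $K_e(k)^2$ of the $Q_e(p,m)$, which are independent of $(p,m)$. Hence convergence at one point of $T_G$ forces the relevant power-counting conditions to hold for all $(p,m)$. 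I would then choose $t_0=(p_0,m_0)$ with Euclidean momenta and strictly positive masses: there $Q_e(p_0,m_0)(k)=(K_e(k)+P_e(p_0))^2+m_e^2\geq m_e^2>0$ on the entire real integration domain, so the integrand is smooth and bounded away from its poles and the only remaining obstruction to convergence is the (now guaranteed) behaviour at infinity. A routine dominated-convergence estimate, using the uniform lower bound on $|Q_e(p,m)(k)|$ for $(p,m)$ near $t_0$, then upgrades convergence at $t_0$ to convergence on a full neighbourhood of $t_0$, which is exactly what Theorem \ref{thm:quadric_integrals_main_theorem} needs.

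With all three hypotheses in place, Theorem \ref{thm:quadric_integrals_main_theorem} applies verbatim and yields that $I_G$ defines a multi-valued function on $\mathbb{C}^m\backslash L(I_G)$, completing the argument. The only point requiring genuine care beyond bookkeeping is the transfer of convergence to a Euclidean point carried out above; everything else is a matter of matching definitions. I would also remark that for the physically standard case of integer propagator powers the superficial-convergence condition needed for the projective reformulation is automatic once the integral converges, since then $2\text{Re}(\lambda)$ is an integer strictly exceeding $h_1(G)D$, hence at least $h_1(G)D+1$; in the general complex-$\lambda$ setting this condition should be read as part of what it means to present $I_G$ as a projective quadratic integral in the first place.
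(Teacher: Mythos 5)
Your proposal is correct and follows the paper's own route: the paper obtains this theorem exactly by combining Theorem \ref{thm:quadric_integrals_main_theorem} with Proposition \ref{thm:fully_regularizable}, after passing to the projective form via Proposition \ref{thm:projectivize}. The only difference is that the paper declares this combination ``immediate,'' whereas you make explicit the one hypothesis it glosses over---that the open set on which $I_G$ converges can be arranged to contain a quasi-regular (Euclidean, positive-mass) point $t_0$---and your power-counting transfer argument is a sound way to discharge it.
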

This means in particular that all points at which a Feynman integrals $I_G$ might not be analytic are determined by the Landau equations (\ref{eq:landau_equations}). We can separate them into equations for $L_\text{fin}(I_G)$ (by choosing non-homogeneous coordinates) and for $L_\infty(I_G)$ (by setting $u$ to 0). For $L_\text{fin}(I_G)$ the equations then read
\begin{gather*}
    \alpha_i=0\quad\text{or}\quad Q_i(p,m)(k)=0\quad\forall1\leq i\leq |E(G)|\\
    \sum_{i=1}^{|E(G)|}\alpha_i\frac{\partial}{\partial k_j}Q_i(p,m)(k)=0\quad\forall1\leq j\leq h_1(G)
\end{gather*}
which is a slightly more general form of the momentum-space Landau equations than they are written in a typical textbook on quantum field theory (due to $\alpha$ and $k$ being allowed to take arbitrary values in $\mathbb{C}^{|E(G)|}\backslash\{0\}$ and $\mathbb{C}^{h_1(G)D}$ respectively). We refer to them here as \textit{finite Landau equations}. Note that we have dropped the equation belonging to the component $u$ since it is superfluous. To see this note that if $Q_i(p,m)(k)=0$ then
\begin{align*}
    \sum_{j=1}^{h_1(G)}k_j\frac{\partial}{\partial k_j}Q_i(p,m)(k)=-u\frac{\partial}{\partial u}Q_i(p,m)(k)
\end{align*}
and hence if there are $\alpha_1,\ldots,\alpha_E\in\mathbb{C}$ such that $\sum_{i=1}^{|E(G)|}\alpha_i\frac{\partial}{\partial k_j}Q_i(p,m)(k)=0$ for all $j\in\{1,\ldots,h_1(G)\}$ then we automatically get
\begin{align*}
    0=\sum_{j=1}^{h_1(G)}k_j(\sum_{i=1}^{|E(G)|}\alpha_i\frac{\partial}{\partial k_j}Q_i(p,m)(k))=-u\sum_{i=1}^{|E(G)|}\alpha_i\frac{\partial}{\partial u}Q_i(p,m)(k)
\end{align*}
which implies the claim since $u\neq0$. The equations for $L_\infty(I_G)$ take the form
\begin{equation}\label{eq:landau_second_type}
    \begin{gathered}
        \alpha_i=0\quad\text{or}\quad (K_i(k))^2=0\quad\forall1\leq i\leq |E(G)|\\
        \sum_{i=1}^{|E(G)|}\alpha_ic^{(i)}_jK_i(k)=0\quad\forall1\leq j\leq h_1(G)\\
        \sum_{i=1}^{|E(G)|}\alpha_iP_i(p)K_i(k)=0.
    \end{gathered}
\end{equation}
As it stands this set of equations is not very useful for practical calculations in physics since there are too many solutions at non-physical points. Particularly the equations for the second type should be taken with a grain of salt. But of course being a solution to the Landau equations is merely a necessary criterion for the integral to be non-analytic, not a sufficient one. In case of second type singularities we consider the zero loci at infinity, where they do not depend on $p$ or $m$ at all.\footnote{This is not in contradiction with the dependence on $(p,m)$ occurring in the Landau equations of the second type: Which points in $\mathbb{C}\mathbb{P}^{h_1(G)D}$ with $u=0$ belong to the zero loci does not depend on $(p,m)$, but the gradients at these points do.} So if they are initially disjoint from the integration cycle, no pinch can occur until the integration cycle is somehow deformed to meet the zero loci at infinity.\\
In accordance with the special theory of relativity, physicists are typically interested in specific values for the momenta. These are the momenta in Minkowski space which in our setting are spanned by all vectors such that the 0th component of every vector $p_i$ or $k_i$ is purely imaginary while the remaining components are real. Thus we define the subspaces of Minkowski momenta for a Feynman graph $G\in\mathcal{G}$ as
\begin{align*}
    M_p(G):=i\mathbb{R}\times\mathbb{R}^{D-1}\times\cdots\times i\mathbb{R}\times\mathbb{R}^{D-1}\subset\mathbb{C}^{N_GD}
\end{align*}
and
\begin{align*}
    M_k(G):=i\mathbb{R}\times\mathbb{R}^{D-1}\times\cdots\times i\mathbb{R}\times\mathbb{R}^{D-1}\subset\mathbb{C}^{h_1(G)D}.
\end{align*}

\begin{defn}
    We say that a point $(p,m)\in T_G$ is \textit{physical} if $p\in M_p$ with $p_i^2\leq0$ for all $i\in\{1,\ldots,N_G\}$ and $m\in\mathbb{R}^{|E(G)|}$. For a Feynman graph $G\in\mathcal{G}$ we define the \textit{physical Landau variety} of the corresponding Feynman integral $I_G$ by
    \begin{align*}
        L_\text{phys}(I_G):=\pi(c_\text{fin}(I_G)\cap((\mathbb{R}^{|E(G)|}_{\geq0}-\{0\})\times M_k(G)\times M_p(G)\times\mathbb{R}^{|E(G)|}))\subset L_\text{fin}(I_G).
    \end{align*}
\end{defn}
As indicated in Section \ref{sec:quadric_integrals} and implied by the above definition, it is well-known to physicists that non-analytic physical points on the principal branch should only occur as points in $L_\text{phys}(I_G)$. This is also the form of the Landau equations as they are typically stated in the literature. That this indeed describes all non-analytic points at physical points can not be directly seen using the techniques from \cite{app-iso} and we do not give a proof in this paper. The problem is that these techniques only consider the local geometry of the intersection of the zero loci. But even if the zero loci are not in general position for some $(p,m)\in T_G$, there might still be an appropriate integration cycle constituting an analytic continuation. A very simple example of this is $\int_\sigma\frac{dz}{z^2-t}$ where $\sigma$ is a small circle around $1$ (as in Subsection \ref{sec:integraltreatment}). The point $t=0$ lies in the Landau variety of this integral. If we choose $t_0=1$ as the starting point for our analytic continuation, this is indeed a non-analytic point. If we were to start at $t_0=\frac{1}{2}$ however we simply obtain $\int_\sigma\frac{dz}{z^2-t}=0$ in a neighborhood of $t_0$ and can analytically continue to $t=0$ without problems.\\
Although no proof of the analyticity of $I_G$ at physical points on $T_G\backslash L_\text{phys}(I_G)$ on the principal branch appears here, we nevertheless remark on a few points: If we already know that it suffices to consider solutions with $\alpha\in\mathbb{R}_{\geq0}^{|E(G)|}-\{0\}$, we can immediately restrict the space of relevant values of $k$ to the values appearing in $c_\text{phys}(I_G)$. Suppose $(p,m)\in T_G$ is a physical point and we have a solution $(\alpha,p,m,1,k)\in L(I_G)$. Then if $\pi:\mathbb{C}^{h_1(G)D}\twoheadrightarrow(M_k(G))^T$ denotes the canonical projection to the orthogonal complement of $M_k(G)$ (with respect to the standard scalar product) we have $\sum_{i=1}^{|E(G)|}\alpha_ic_j^{(i)}K_i(\pi(k))=0$ (since $P_i(p)\in i\mathbb{R}\times\mathbb{R}^{D-1}$ for $p\in M_p(G)$). The only solution to this system of linear equations is $\pi(k)=0$ hence we may always assume $k\in M_k(G)$ in this case. There are various ways to see this, all involving some level of graph theory. One easy argument is to note that the matrix corresponding to this linear system of equations satisfies
\begin{align*}
    \det\begin{pmatrix}\sum_{i=1}^{|E(G)|}\alpha_ic_1^{(i)}c_1^{(i)} & \cdots & \sum_{i=1}^{|E(G)|}\alpha_ic_1^{(i)}c_{h_1(G)}^{(i)}\\ \vdots & \ddots & \vdots \\ \sum_{i=1}^{|E(G)|}\alpha_ic_{h_1(G)}^{(i)}c_1^{(i)} & \cdots & \sum_{i=1}^{|E(G)|}\alpha_ic_{h_1(G)}^{(i)}c_{h_1(G)}^{(i)}\end{pmatrix}=\mathcal{U}_G(\alpha),
\end{align*}
which is the content of Theorem 3-10 in \cite{nakanishi}. Here $\mathcal{U}_G$ is the first Symanzik polynomial associated to $G$. It is well-known that $\mathcal{U}_G(\alpha)>0$ for $\alpha\in\mathbb{R}^{|E(G)|}_{\geq0}-\{0\}$ so that $\pi(k)=0$ is indeed the only solution to $\sum_{i=1}^{|E(G)|}\alpha_ic_j^{(i)}K_i(\pi(k))$. In a similar manner we see that there are no solutions to the equations of the second type for $\alpha\in\mathbb{R}_{\geq0}^{|E(G)|}-\{0\}$.

\section{Renormalization of Divergent Integrals}\label{sec:renormalization}
Until now we restricted the discussion to convergent integrals. To physicists it is a well-known phenomenon that the integrals of interest are often divergent and a procedure called \textit{renormalization} is employed to extract finite values. There exists extensive literature on the subject and we keep the discussion here short. The interested reader is referred to \cite{renormalization}.\\
Conceptually the need for renormalization in physics is not unique to quantum field theory and has been known for a long time. Trying to compute the self-interaction of a point charge in classical electrodynamics for example famously leads to a divergent result. The problem is that we compute such physical quantities by using parameters like the mass and charge of the electron in the example that we measure in experiments. The \textit{bare} parameters however might not be experimentally accessible. To account for this one needs to introduce renormalized parameters to salvage the situation. We know describe how renormalization manifests in perturbative quantum field theory.\\
To each Feynman graph $G\in\mathcal{G}$ we associate a number $\omega(G)=\sum_{i=1}^{|E(G)|}\text{Re}\,(\lambda_i)-\frac{D}{2}h_1(G)$\footnote{More generally one can also assign weights to the vertices of $G$: To each vertex $v\in V(G)$ we can associate a number $\omega(v)\in\mathbb{R}$. The generalized formula then reads $\omega(G)=\sum_{v\in V(G)}\omega(v)+\sum_{e\in E(G)}\lambda(e)-\frac{D}{2}h_1(G)$. It should also be noted that a lot of authors define the superficial degree of divergence with a relative sign with respect to our definition and/or include a factor of $2$ in their definition. Our convention is adopted from \cite{hyperlog}.}, the \textit{superficial degree of divergence} (compare to Definition \ref{def:quadric_integral} where we defined superficial convergence for quadratic integrals). Weinberg \cite{power-counting} famously proved that the Feynman integral corresponding to $G$ with all masses non-vanishing is convergent if and only if for each subgraph $\gamma\subset G$ (including $G$ itself) we have $\omega(\gamma)>0$. If $\omega(\gamma)\leq0$ it is said that $\gamma$ constitutes a \textit{subdivergency}. This consideration is known as \textit{power counting} and can informally be explained as follows: Each propagator $\frac{1}{Q_i^{\lambda_i}}$ contributes a term to the denominator which comes with a power of $2\lambda_i$ in the occurring loop-momenta, improving the convergence behavior (thus we count each edge with a weight of $2\text{Re}(\lambda_i)$) and each loop contributes $D$ integrations worsening the convergence. As long as we restrict ourselves to Euclidean external momenta, the Feynman integral converges absolutely if it converges at all and we may consider it as an iterated integral by Fubini's Theorem. Hence if the integral corresponding to a subgraph diverges so does the entire integral. The aim of renormalization is to systematically manipulate the integral to remove divergences in accordance with the structure of subdivergencies. A divergency arising from a failure of the power counting condition is called an \textit{ultraviolet divergency} (as they arise from large $k$) while the divergences arising from vanishing masses are called \textit{infrared divergences} (since they stem from small $k$). We consider here the renormalization of ultraviolet divergencies only since we consider the masses to be parameters. In this setup the infrared divergences arise as part of the Landau variety. However the infrared divergences can be treated by appropriate methods, see for example \cite{infrared}, although the issue is more subtle. The procedure of renormalization relies heavily on the graph structure and its underlying combinatorics. The set of Feynman graphs can be endowed with the structure of a Hopf algebra as discovered by Kreimer \cite{kreimer} which provides the necessary insight into the combinatorial structure of subdivergencies arising from any given graph.

\subsection{Regularization}\label{sec:feynman_regularization}
The typical renormalization procedure goes as follows: Introduce a \textit{regulator}, an auxiliary variable which recovers the original integral in some limit, for which a value exists such that the integral converges (outside of the Landau variety). Then systematically remove the nested divergences by manipulating the regulator-dependent result. Finally take the appropriate limit in the regulator to obtain a finite result.\\
There are many ways to implement regularization. Perhaps the most commonly known are the following three:
\begin{enumerate}
    \item The \textit{cut-off} method replaces the integration domain of the non-projectivized  Feynman integral by a bounded domain (say for example a sphere with radius $R$). The new integral does not suffer from ultraviolet divergences. After renormalization one considers the limit $R\to\infty$. Similarly one can introduce a non-zero lower bound on the absolute value of the internal momenta's components to remove infrared divergences. One of the major problems with this regularization is that it destroys the translation invariance of the integral.
    \item \textit{Dimensional regularization}\cite{dimreg2},\cite{dimreg} replaces the space-time dimension $D$ by a complex variable. Typically one fixes the dimension of interest $D$ and then replaced it by $D-2\epsilon$ where $\epsilon\in\mathbb{C}$. After renormalization one considers the limit $\epsilon\to0$. In the parametric representation $D$ simply appears as a variable and it is straightforward to perform this replacement. In the momentum space setting, defining the meaning of this idea is non-trivial. Since we do not use this approach in this text we do not explain the construction here. The curious reader is referred to one of the two papers above.
    \item \textit{Analytic regularization} \cite{analyticregularization}, which we already prepared in our setup, is the regularization by use of the $\lambda_i$: One is usually interested in the value of a Feynman integral for a specific choice of $\lambda_i\in\mathbb{N}^+$, most commonly $\lambda_1=\cdots=\lambda_N=1$. So we already implicitly analytically regularized our integrals by defining the Feynman integral with generic $\lambda_i$. It is convenient to fix the desired values $\lambda_1,\ldots,\lambda_N$ at which we wish to evaluate the integral and replace them by $\lambda_i+\epsilon$ with $\epsilon\in\mathbb{C}$ instead. Then we only need to take one limit $\epsilon\to0$ instead of $N$ limits in the $\lambda_i$.
\end{enumerate}
In this paper we exclusively employ analytic regularization. Some renormalization schemes like kinematic renormalization, which is explicitly discussed in Subsubsection \ref{sec:schemes}, can be performed on the level of the integrand and thus do not require regularization. To remain general and for a simpler treatment we still regularize even in this case.\\
For a general regularization in a theory which allows vanishing masses it is usually a highly non-trivial question if, for a given integral, the regulator can be chosen in such a way that the integral converges. In analytic regularization for example large real parts of the $\lambda_i$ certainly improve the convergence of the integral as the internal momenta become large. On the other hand, if we are dealing with massless theories, this might simultaneously worsen the infrared behavior, i.e. the behavior of the integrand as the internal momenta become small. We do not discuss this problem here. In the case of analytic regularization an exhaustive discussion from the parametric viewpoint can be found in \cite{konrad}.

\subsection{Hopf Algebraic Renormalization}
We review the necessary theory of Hopf algebras needed to formulate renormalization. All results introduced here are standard. For more details on the subject we recommend the excellent book \cite{hopfalgebras} by Sweedler or the more modern exposition \cite{hopfalgebras2}. All algebras (resp. co-algebras) in this section will be associative and unital (resp. co-associative and co-unital). First we recall the definition of a Hopf algebra:
\begin{defn}[\cite{hopfalgebras2}]
    A \textit{Hopf algebra} $(H,m,\Delta,\mathbbm{1},\hat{\mathbbm{1}},S)$\footnote{Typically the unit and counit are denoted by $\eta$ (or sometimes $u$) and $\epsilon$ respectively. In this section we want to reserve $\epsilon$ for the parameter introduced through analytic regularization.} is a bialgebra $(H,m,\Delta,\mathbbm{1},\hat{\mathbbm{1}})$ over a field $K$, where $m$ denotes multiplication, $\Delta$ co-multiplication, $\mathbbm{1}$ the unit, and $\hat{\mathbbm{1}}$ the co-unit, together with a map $S:H\to H$ called the \textit{antipode} such that the diagram
    \begin{center}
		\begin{tikzcd}[row sep=3.6em,column sep=1em]
			& H\otimes H \arrow[rr,"S\otimes\text{id}"] && H\otimes H \arrow[dr,"m"] \\
			H \arrow[ur,"\Delta"] \arrow[rr,"\hat{\mathbbm{1}}"] \arrow[dr,"\Delta"'] && K \arrow[rr,"\mathbbm{1}"] && H \\
			& H\otimes H \arrow[rr,"\text{id}\otimes S"'] && H\otimes H \arrow[ur,"m"']
		\end{tikzcd}
	\end{center}
    commutes.
\end{defn}
Let $(H,m,\Delta,\mathbbm{1},\hat{\mathbbm{1}},S)$ be a Hopf algebra. We denote $1_H:=\mathbbm{1}(1)$. The kernel $\ker\hat{\mathbbm{1}}$ is called the \textit{augmentation ideal}. We say $H$ is \textit{connected} if it admits a filtration $H=\sum_{n\in\mathbb{N}}H_n$ with $H_0=K\cdot 1_H$.\\
Let $A$ be another algebra with multiplication $m_A$ and unit $1_A\in A$. For two homomorphisms $\phi_1,\phi_2\in\text{Hom}(H,A)$ we denote their \textit{convolution product} by
\begin{align*}
    \phi_1\star\phi_2:=m_A\circ(\phi_1\otimes\phi_2)\circ\Delta\in\text{Hom}(H,A).
\end{align*}
We denote the inverse of a linear map $\phi:H\to A$ with respect to the convolution product, if it exists, by $\phi^{\star-1}$. The elements invertible with respect to convolution form a group under this operation. In particular recall that if $H$ is connected, the linear maps sending $1_H$ to $1_A$ form a subgroup $G^H_A$ of this group. Moreover if $A$ is commutative the \textit{characters}, i.e. morphisms $\phi:H\to A$ of unital algebras, belong to this group and indeed form a subgroup of $G^H_A$. The inverse of a character $\phi$ is then given by the antipode as $\phi^{\star-1}=\phi\circ S$. A bialgebra $H$ is Hopf if and only if the identity $\text{id}_H:H\to H$ admits an inverse with respect to the convolution product. This inverse must then be the anitpode, i.e. $S=\text{id}_H^{\star-1}$.\\
There are several ways to endow the set of graphs under consideration with the structure of a Hopf algebra. Of course different theories have different underlying sets of admissible graphs. Even if we agree on an appropriate set of graphs, there are still various coproducts to choose from which determine which graphs we consider to be in need of renormalization. These choices do not affect our argument below. For a detailed discussion on this setup in more specific settings see for example \cite{david}. Since we do not consider a specific theory but rather allow arbitrary graphs we stick to the following Hopf algebra for concreteness:\\
We set
\begin{align*}
    \mathcal{G}_\text{1PI}:=\{G\in\mathcal{G} \;|\; G\text{ is 1PI}\}
\end{align*}
and let $H_G:=\text{span}_\mathbb{Q}\;\mathcal{G}_\text{1PI}$ as a $\mathbb{Q}$-vector space.\footnote{We could choose a different ground field of characteristic 0.} Then we define the multiplication $m:H_G\otimes H_G\to H_G$ by disjoint union and the coproduct $\Delta:H_G\to H_G\otimes H_G$ by
\begin{align*}
    \Delta(G):=G\otimes1_{H_G}+1_{H_G}\otimes G+\sum_{\substack{\gamma\subsetneq G \\ \omega(\gamma)\leq 0}}\gamma\otimes G/\gamma.
\end{align*}
It is not difficult to verify that $\Delta$ is coassociative turning $H_G$ into a bialgebra. We equip $H_G$ with a grading given by the first Betti number, i.e. $H_G=\bigoplus_{n\in\mathbb{N}}H_G^{(n)}$ where
\begin{align*}
    H_G^{(n)}:=\{G\in H_G \;|\; h_1(G)=n\}.
\end{align*}
The multiplication $m$ and co-multiplication $\Delta$ are compatible with and $H_G$ is connected with respect to this grading (recall that a grading of a Hopf algebra induces a filtration). Any connected bialgebra is Hopf and this holds in particular for $H_G$.\\
We need to choose an appropriate target algebra $A_\text{target}$ for our Feynman rules which we want to interpret as a character $H_G\to A_\text{target}$. We want to consider the result of applying the Feynman rules to a Feynman graph $G\in\mathcal{G}$ as a meromorphic function, i.e. a function that can locally be written as the quotient of two holomorphic functions, in the regulator $\epsilon\in\mathbb{C}$ with coefficients functions in the external momenta and masses in an appropriate ring of functions. How we choose this ring exactly is of little relevance in our context as long as it is big enough to contain all functions arising from Feynman integrals in this way. For concreteness we denote for a finite-dimensional complex analytic manifold $M$ the ring of all holomorphic functions defined on an open and dense subset of $M$ by $\mathcal{F}(M)$ with the obvious addition and multiplication. Then for a given space-time dimension $D$ we define the target algebra to be
\begin{align*}
    A_\text{target}:=(\bigoplus_{n,m\in\mathbb{N}} \mathcal{F}(\mathbb{C}^{nD}\times\mathbb{C}^m))[\epsilon^{-1},\epsilon]],
\end{align*}
where $R[x^{-1},x]]$ denotes the ring of Laurent-series in $x$ with coefficients in a ring $R$.
Feynman rules can then be interpreted as a character $H_G\to A_\text{target}$:
\begin{defn}
    We define the \textit{Feynman rules} $\phi_\text{Feyn}:H_G\to A_\text{target}$ by
    \begin{align*}
        \phi_\text{Feyn}(G):=\mathcal{L}(I_G)\in\mathcal{F}(\mathbb{C}^{N_GD}\times\mathbb{C}^{|E(G)|})[\epsilon^{-1},\epsilon]]\subset A_\text{target}
    \end{align*}
    for all connected $G\in\mathcal{G}$ where $\mathcal{L}(I_G)$ is the Laurent expansion of $I_G$ with respect to the regulator $\epsilon$.
\end{defn}
In this setting renormalization can be understood as a form of algebraic \textit{Birkhoff decomposition} \cite{conneskreimer}: Let $A_\text{target}=A_+\oplus A_-$ be a decomposition of $A_\text{target}$ into two algebras $A_\pm$ (not necessarily unital). Then, since $H_G$ is connected, any linear map preserving the unit admits a unique Birkhoff decomposition, i.e. there exist uniquely determined linear and unit preserving maps $\phi_+,\phi_-:H_G\to A$ such that $\phi=\phi_-^{\star-1}\star\phi_+$ and $\phi_\pm(\ker\hat{\mathbbm{1}})\subset A_\pm$. The decomposition then defines a \textit{renormalization scheme} $R:A\twoheadrightarrow A_-$, where $R$ is the canonical projection. If $\phi_\text{Feyn}$ are the Feynman rules then $(\phi_\text{Feyn})_+$ is called the \textit{renormalized Feynman rules} and $(\phi_\text{Feyn})_-$ the \textit{counter term}. If $x\in\ker\hat{\mathbbm{1}}$ we can obtain $\phi_\pm$ recursively from $\phi$ as
\begin{align}\label{eq:renormalization_recursion}
    \phi_-(x)=-R(\bar{\phi}(x)),\qquad\phi_+(x)=(\text{id}-R)(\bar{\phi}(x))
\end{align}
via the \textit{Bogoliubov map}
\begin{align}\label{eq:bogoliubov}
    \bar{\phi}(x):=\phi(x)+m(\phi_-\otimes\phi)\tilde{\Delta}(x)=\phi_+(x)-\phi_-(x),
\end{align}
where $\tilde{\Delta}:=\Delta-1_{H_G}\otimes\text{id}-\text{id}\otimes1_{H_G}$ is the reduced coproduct. Finally the \textit{physical limit} $\phi_R$ is defined by $\phi_R(G):=\lim_{\epsilon\to 0^+}\phi_+(G)(\epsilon)$ for all $G\in H_G$. This prescription does certainly not make sense for every renormalization scheme $R$. One needs to prove the existence of the physical limit for a given scheme. Such a proof is beyond the scope of this paper. In the following we introduce two specific renormalization schemes and give references to the appropriate literature.

\subsubsection{Two Renormalization Schemes}\label{sec:schemes}
Various renormalization schemes are used in practice. For a more concrete result we explicitly consider two schemes regularly used in practice: \textit{kinematic renormalization}, also known as the MOM-scheme, and \textit{minimal subtraction}.\\
\paragraph{\textbf{Kinematic renormalization}}
The idea of kinematic renormalization is to subtract part of the Taylor series at a specific kinematic reference point. For simplicity we restrict to the case of \textit{logarithmically divergent} graphs, i.e. those with $\omega(G)=0$. We fix a family $\{\mu_k\}_{k=0}^\infty$ of reference momenta $\mu_k\in\mathbb{C}^{kD}$. Then we define kinematic renormalization by the renormalization scheme
\begin{align*}
    R_\text{MOM}:A\to A,\quad f\mapsto f|_{p=\mu_{N_G}}.
\end{align*}
This looks slightly different than the usual definition of the kinematic renormalization scheme, since one typically makes use of the Lorentz-invariance of Feynman integrals, reducing the parameter space of a Feynman graph $G\in\mathcal{G}$ from $T$ to $(\mathbb{C}^{N_GD}/\mathcal{O}(N_GD,\mathbb{C}))\times\mathbb{C}^{|E(G)|}$, where $\mathcal{O}(N_GD,\mathbb{C})$ is the Lie group of complex orthogonal matrices of size $N_GD\times N_GD$ (or even further if one restricts to the hyperplane defined by overall momentum conservation). One useful way to encode this parameter space is to fix a \textit{scale} $s$ and express the parameters as multiples of this scale via \textit{angles} $\Theta$. Then kinematic renormalization becomes the evaluation at a reference scale $\tilde{s}$ and reference angles $\tilde{\Theta}$. We do not discuss this reduced parameter space and its possible implications here. An extensive discussion of kinematic renormalization in this setting can be found in \cite{brownkreimer}.

\paragraph{\textbf{Minimal subtraction}}
As the name implies, minimal subtraction removes the bare minimum to obtain a convergent integral.  It removes precisely the principal part of the Laurent series in $\epsilon$, i.e. we define the renormalization scheme by
\begin{align*}
    R_\text{min}:A\to A,\quad \sum_{k=-\infty}^\infty f_k(z_1,\ldots,z_n)\epsilon^k\mapsto \sum_{k=-\infty}^{-1}f_k(z_1,\ldots,z_n)\epsilon^k.
\end{align*}
This means the decomposition $A_\text{target}=A_+\oplus A_-$ in this scheme reads
\begin{align*}
    A_+=(\bigoplus_{n,m\in\mathbb{N}} \mathcal{F}(\mathbb{C}^{nD}\times\mathbb{C}^m))[[\epsilon]]\qquad\text{and}\qquad A_-=\epsilon^{-1}(\bigoplus_{n,m\in\mathbb{N}} \mathcal{F}(\mathbb{C}^{nD}\times\mathbb{C}^m))[[\epsilon^{-1}]].
\end{align*}
A more detailed discussion can be found in \cite{panzerrenorm}.

\subsection{Compatibility with the Landau Equations}
The main point of this subsection is to show that the previous discussion regarding the Landau variety is compatible with common forms of renormalization. We assume throughout this part that the physical limit of the renormalization scheme at hand is known to exist. We generally want the renormalized Feynman integral to define a function which is analytic outside of its Landau variety. Hence the following definition:
\begin{defn}
    Let $A_\text{target}=A_+\oplus A_-$ be a decomposition of the target algebra and $\phi_\pm$ the corresponding Birkhoff decomposition of the Feynman rules $\phi_\text{Feyn}$. The decomposition is called \textit{Landau compatible} if for every $G\in H_G$ the physical limit $\phi_R(G)$ defines an analytic function on $T_G\backslash L(I_G)$.
\end{defn}
Applying $\phi_+$ to a graph $G\in\mathcal{G}$ yields a Laurent series with coefficients functions in the external momenta and masses. If the physical limit exists, $\phi_+(G)$ has trivial principal part and $\phi_R(G)$ is just the coefficient of order 0 in $\phi_+(G)$. So the question of Landau compatibility is the question whether this coefficient is an analytic function on $T_G\backslash L(I_G)$ or not. From the previous section we know that the Laurent series $\phi_\text{Feyn}(G)(\epsilon)$ converges if $\epsilon$ has a sufficiently large real part and the coefficients of $\phi_\text{Feyn}(G)$ are analytic functions $T_G\backslash L(I_G)\to\mathbb{C}$. We do not go into the details of investigating when an arbitrary renormalization scheme is Landau compatible and postpone a more elaborate study to the future. We conclude this section however by remarking that the two renormalization schemes introduced above, where this can be seen rather easily, do in fact satisfy this condition:
\begin{prop}\label{thm:renormalization}
    The renormalization schemes $R_\text{MOM}$ and $R_\text{min}$ are Landau compatible.
\end{prop}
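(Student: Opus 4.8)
The plan is to reduce Landau compatibility to a single analyticity statement and then propagate it through the recursive Birkhoff decomposition. By definition the physical limit $\phi_R(G)$ is the coefficient of $\epsilon^0$ in $\phi_+(G)$, so it suffices to show that every Laurent coefficient of $\phi_+(G)$ (and, to make the induction go through, of $\phi_-(G)$) is an analytic function on $T_G\backslash L(I_G)$, for every $G\in\mathcal{G}_\text{1PI}$. I would prove this by induction on the Hopf-algebra grading $h_1(G)$, using the recursion (\ref{eq:renormalization_recursion}) together with the Bogoliubov map (\ref{eq:bogoliubov}). The base case $h_1(G)=0$ is immediate since $\tilde\Delta(G)=0$ and the coefficients of $\phi_\text{Feyn}(G)$ are analytic on $T_G\backslash L(I_G)$ by Corollary \ref{thm:general_landau} and Theorem \ref{thm:quadric_integrals_main_theorem}. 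For the inductive step one must control the single nontrivial term $m(\phi_-\otimes\phi)\tilde\Delta(G)=\sum_{\gamma\subsetneq G,\,\omega(\gamma)\leq0}\phi_-(\gamma)\,\phi(G/\gamma)$ and then apply the scheme-dependent projection.

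The analytic engine is the elementary remark recorded after the definition of the Landau surface: a product of functions defined by projective quadratic integrals is analytic outside the union of their Landau surfaces. Combined with the inductive hypothesis for $\phi_-(\gamma)$ and the base fact for $\phi(G/\gamma)=\phi_\text{Feyn}(G/\gamma)$, each summand $\phi_-(\gamma)\phi(G/\gamma)$ is analytic, coefficient-wise in $\epsilon$, outside $L(I_\gamma)\cup L(I_{G/\gamma})$. Hence the heart of the matter is the combinatorial containment $L(I_\gamma)\subset L(I_G)$ and $L(I_{G/\gamma})\subset L(I_G)$. The cograph $G/\gamma$ is built from the propagators of $G$ \emph{not} belonging to $\gamma$, so its families of quadratic forms are a subfamily of those of $I_G$ and $L(I_{G/\gamma})\subset L(I_G)$ follows directly from the remark that $L(I')\subset L(I)$ whenever $I'$ is built from a subset of the forms of $I$. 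For $\gamma$ the same subfamily statement should hold once its external momenta are identified with the in-context momenta dictated by the embedding $\gamma\subset G$: a solution of the Landau equations (\ref{eq:landau_equations}) for $\gamma$ ought to extend to one for $G$ by setting $\alpha_e=0$ for every $e\notin E(\gamma)$, the loop-momentum equations of $G$ not internal to $\gamma$ being then automatically satisfied. With both containments in hand, $\bar\phi(G)$ has coefficients analytic on $T_G\backslash L(I_G)$.

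For \emph{minimal subtraction} the projection $R_\text{min}$ only selects powers of $\epsilon$ and acts as the identity on the coefficient functions in $(p,m)$. Consequently both $R_\text{min}\bar\phi(G)$ and $(\text{id}-R_\text{min})\bar\phi(G)$ inherit coefficient-wise the analyticity of $\bar\phi(G)$ on $T_G\backslash L(I_G)$, closing the induction; in particular the $\epsilon^0$-coefficient $\phi_R(G)$ is analytic there. For \emph{kinematic renormalization} the operator $R_\text{MOM}$ restricts the external momenta to a fixed reference value $\mu_{N_G}$. Here I would first choose the reference momenta so that $(\mu_{N_G},m)\notin L(I_G)$ on the mass slice of interest, making $R_\text{MOM}\bar\phi(G)$ a well-defined Laurent series whose coefficients are constant in $p$. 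Since $\phi_+(G)=(\text{id}-R_\text{MOM})\bar\phi(G)=\bar\phi(G)-\bar\phi(G)|_{p=\mu_{N_G}}$, its coefficients are differences of a function analytic on $T_G\backslash L(I_G)$ and a $p$-independent term, so that the $p$-singularities remain confined to $L(I_G)$ and the physical limit $\phi_R(G)$ is the analytic $\epsilon^0$-coefficient.

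The main obstacle I anticipate is the rigorous bookkeeping behind the subgraph containment $L(I_\gamma)\subset L(I_G)$: unlike the cograph case, the external momenta of $\gamma$ are linear in the ambient loop momenta of $G$ rather than independent parameters, so the identification of parameter spaces requires justification through the routing of momenta and a careful verification that the extended tuple with $\alpha_e=0$ for $e\notin E(\gamma)$ genuinely solves (\ref{eq:landau_equations}) for $G$. A secondary subtlety, specific to $R_\text{MOM}$, is ensuring that the reference subtraction introduces no spurious singularities in the residual parameters (the masses); this is precisely what forces the choice of $\mu_{N_G}$ away from $L(I_G)$ and, strictly speaking, restricts the conclusion to the connected component of the reference point. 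Both points are combinatorial and geometric rather than analytic, the analytic content being supplied entirely by the product remark and the Riemann extension theorems already invoked in Section \ref{sec:quadric_integrals}.
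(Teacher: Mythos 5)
Your proposal is correct and follows essentially the same route as the paper's proof: induction on the first Betti number through the recursion (\ref{eq:renormalization_recursion}) and the Bogoliubov map (\ref{eq:bogoliubov}), with $R_\text{min}$ handled by noting it acts only on powers of $\epsilon$ and $R_\text{MOM}$ by noting it subtracts a term constant in the kinematic variables. If anything, you are more careful than the paper: the strengthened induction hypothesis (analyticity of all Laurent coefficients of $\phi_\pm$, not just of the physical limit), the explicit appeal to the product remark, the containments $L(I_{G/\gamma})\subset L(I_G)$ and $L(I_\gamma)\subset L(I_G)$ that you isolate as the crux, and the requirement that the reference point $\mu_{N_G}$ avoid the Landau surface are all points the paper's own proof passes over tacitly, disposing of the subgraph terms with the single sentence that \enquote{the induction hypothesis applies to them}.
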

\begin{proof}
    It suffices to prove the result for $G\in\mathcal{G}$. We proceed by induction on the first Betti number $h_1(G)$. For $G$ a tree the result is obvious. Otherwise we see from equations (\ref{eq:renormalization_recursion}) and (\ref{eq:bogoliubov}) that $\phi_+(G)$ is a sum of $\phi_\text{Feyn}(G)$ itself as well as proper subgraphs and contracted graphs with $(\text{id}-R)$ applied to them (with $R=R_\text{MOM}$ or $R=R_\text{min}$). Applying $(\text{id}-R_\text{min})$ to $\phi_\text{Feyn}(G)$ and then taking the limit $\epsilon\to0$ simply extracts the 0th order coefficient from the series which is a function analytic on $T_G\backslash L(I_G)$ by the previous section. Applying $(\text{id}-R_\text{MOM})$ modifies the coefficients by subtracting a term constant in the kinematic variables, so that the 0th coefficient remains analytic outside $L(I_G)$. The remaining terms consist of graphs with stricly smaller Betti number and hence the induction hypothesis applies to them. We conclude that $\phi_{R_\text{MOM}}(G)$ and $\phi_{R_\text{min}}(G)$ indeed define analytic functions on $T_G\backslash L(I_G)$.
\end{proof}

\section{Examples}\label{sec:examples}
This section displays three simple examples of Landau surfaces including a simple Feynman integral. We remind the reader of our convention to drop the auxiliary integral $\int_\mathbb{R}\frac{dx}{x^2+\pi^2}$ from our notation, which is implicitly included in all integrals which would otherwise have non-orientable integration domain.
\subsection{A Very Simple Example}
Consider the quadratic integral
\begin{align*}
    I(t)=\int_\mathbb{R}\frac{dy}{y^2+t^2}=\int_{\mathbb{R}\mathbb{P}}\frac{\Omega_1}{t^2z_0^2+z_1^2}.
\end{align*}
The matrix representing the quadratic form in the denominator is
\begin{align*}
    \begin{pmatrix}
        t^2 & 0 \\ 0 & 1
    \end{pmatrix}.
\end{align*}
It has full rank for all $t\neq0$ and is positive definite if $t^2\in\mathbb{R}^+$. Hence it is regular on $\mathbb{R}^\times$ and in particular of standard form. The Landau equations (\ref{eq:landau_equations}) in this simple case read
\begin{align*}
    t^2z_0^2+z_1^2=0\quad\text{and}\quad \begin{pmatrix} t^2z_0 \\ z_1\end{pmatrix}=0
\end{align*}
and we immediately verify that the only solutions $(z_0,z_1,t)$ are of the form $(z_0,0,0)$ with $z_0\in\mathbb{C}^\times$ (which are of course all the same point in $\mathbb{C}\mathbb{P}$) and hence the Landau variety is simply $L(I)=\{0\}$. It decomposes as $L(I)=L_\text{fin}(I)\cup L_\infty(I)$ with $L_\text{fin}(I)=\{0\}$ and $L_\infty(I)=\emptyset$ (there are no singularities of the second type). This is in accordance with the fact that the given integral can, by explicit integration, easily be verified to define a germ of the function $\pi\sqrt{\frac{1}{t^2}}$ for all $t$ with $\text{Re }t>0$ which can be analytically continued along any path in $\mathbb{C}^\times$.
\subsection{A Slightly More Complicated Example}
We consider the projective quadratic integral
\begin{align*}
    I(t)=\int_{\mathbb{R}^2}\frac{dz_1\wedge dz_2}{(z_1^2+z_2^2+z_1t+1)(z_1^2+z_2^2+t^2)}=\int_{\mathbb{R}\mathbb{P}^2}\frac{z_0\cdot\Omega_2}{(z_1^2+z_2^2+z_0z_1t+z_0^2)(z_1^2+z_2^2+z_0^2t^2)}.
\end{align*}
Note that this expression does not correspond to any Feynman integral. The matrices representing the quadratic forms are
\begin{align*}
    M_1(t)=\begin{pmatrix}1 & \frac{t}{2} & 0 \\ \frac{t}{2} & 1 & 0 \\ 0 & 0 & 1\end{pmatrix},\qquad
    M_2(t)=\begin{pmatrix}t^2 & 0 & 0 \\ 0 & 1 & 0 \\ 0 & 0 & 1\end{pmatrix}.
\end{align*}
The finite Landau equations read
\begin{gather*}
    \alpha_1=0\text{ or }z_1^2+z_2^2+z_1t+1=0, \quad \alpha_2=0\text{ or }z_1^2+z_2^2+t^2=0 \\
    \alpha_1\begin{pmatrix} 2z_1+t \\ 2z_2 \end{pmatrix}+\alpha_2\begin{pmatrix} 2z_1 \\ 2z_2 \end{pmatrix}=0.
\end{gather*}
So $\alpha_1=-\alpha_2$ or $z_2=0$ from the second component of the last line. In the first case we get $t=0$ from the first component. In the second case either $\alpha_1=0$ (then $z_1=0$ and we get $t=0$ again), $\alpha_2=0$ (then $z_1=-\frac{t}{2}$ which implies $t=\pm2$), or $\alpha_1,\alpha_2\neq0$ in which case $z_1=\pm it$ and
\begin{align*}
    -z_1^2-z_1t-1=0\qquad\Rightarrow\qquad t^2=\frac{1}{2}(1\pm i).
\end{align*}
The Landau equations of the second type read
\begin{align*}
    z_1^2+z_2^2=0,\qquad\alpha_1\begin{pmatrix} z_1t \\ 2z_1 \\ 2z_2\end{pmatrix}+\alpha_2\begin{pmatrix} 0 \\ 2z_1 \\ 2z_2\end{pmatrix}=0.
\end{align*}
Since we can not have $z_1=z_2=0$ we obtain $\alpha_1=-\alpha_2$ from the last two components of the second equation. This means $\alpha_1,\alpha_2\neq0$ so the first component gives $z_1t=0$. Since $z_1=0$ implies $z_2=0$ by the first equation we get yet again $t=0$. We conclude that
\begin{align*}
    L(I)=\{t\in\mathbb{C} \;|\; t=0\text{ or }t^2=\frac{1}{2}(1\pm i)\text{ or }t=\pm2\},
\end{align*}
which decomposes as $L(I)=L_\text{fin}(I)\cup L_\infty(I)$ with
\begin{align*}
    L_\text{fin}=L(I)\quad\text{and}\quad L_\infty=\{0\}.
\end{align*}
Note that we have a singularity of mixed type at $t=0$. The result is in accordance with the integral representation
\begin{align*}
    I(t)=\frac{\pi i}{\sqrt{-2t^4+2t^2-1}}(&\ln(-t^2-i\sqrt{-2t^4+2t^2-1}+1)-\ln(t^2-i\sqrt{-2t^4+2t^2-1}-1)\\
    &-\ln(-3t^2-2i\sqrt{-2t^4+2t^2-1}+2)+\ln(3t^2-2i\sqrt{-2t^4+2t^2-1}-2))
\end{align*}
for $I$ which can be obtained by (somewhat tedious) direct integration and holds for all $t$ such that $\text{Re}(t)\neq0$ and $\text{Im}(t)\neq\pm\text{Im}(\sqrt{t^2-4})$.

\subsection{The Bubble-Graph}
The easiest while still interesting example of a Feynman integral in a scalar field theory is arguably the one associated to the following Feynman graph:
\begin{center}
\begin{tikzpicture}[thick]
\draw (-2,0) to [out=45, in=135] (2,0);
\draw [->] (-2.8,-0.2) -- (-2.2,-0.2);
\draw node at (-2.5,-0.5) {$p$};
\draw (-2,0) to [out=-45, in=-135] (2,0);
\draw [->] (-0.3,1.0) to [out=20, in=160] (0.3,1.0);
\draw node at (0,0.5) {$k$};
\draw [->] (-0.3,-1.0) to [out=-20, in=-160] (0.3,-1.0);
\draw node at (0,-0.5) {$p - k$};
\draw (-3,0) -- (-2,0);
\draw (2,0) -- (3,0);
\draw [->] (2.2,-0.2) -- (2.8,-0.2);
\draw node at (2.5,-0.5) {$p$};
\end{tikzpicture}
\end{center}
We denote this graph by $B_2$. The labeling of the edges indicates the momenta associated to each edge and vertex after enforcing momentum conservation in the obvious manner. In a typical physics textbook the associated Feynman integral in $D=4$ space-time dimensions is written as
\begin{align}\label{eq:onelooptwopoint}
    I(p,m_1,m_2)=\int_{\mathbb{R}^3}d^3\vec{k}\int_{-i\infty}^{i\infty}dk_0\frac{1}{(k^2+m_1^2-i\epsilon)^{\lambda_1}((p-k)^2+m_2^2-i\epsilon)^{\lambda_2}}.
\end{align}
This integral diverges for $\lambda_1=\lambda_2=1$ and requires renormalization (it is logarithmically divergent: $\omega(B_2)=2\cdot2-4\cdot1=0$). We will not explicitly go through the steps of renormalizing this example but we have seen in Section \ref{sec:renormalization} that, using an appropriate renormalization scheme, no additional problematic points occur in the renormalized function. In our setup (after performing a Wick rotation and projectivizing) the corresponding Feynman integral reads
\begin{align*}
    I(p,m_1,m_2)=\int_{\mathbb{R}\mathbb{P}^4}\frac{u^{2(\lambda_1+\lambda_2)-5}\cdot\Omega_4}{(k^2+m_1^2u^2)^{\lambda_1}((up-k)^2+m_2^2u^2)^{\lambda_2}}.
\end{align*}
The two quadratic forms corresponding to the two propagators are represented by two $5\times5$ matrices
\begin{align*}
    M_1(p,m)=\begin{pmatrix} m_1^2 & 0 \\ 0 & 1_4\end{pmatrix},\qquad M_2(p,m)=\begin{pmatrix} p^2+m_2^2 & -p^T \\ -p & 1_4 \end{pmatrix}
\end{align*}
where again $1_n$ denotes the identity matrix of size $n\times n$. The condition on the linear dependence of the gradients reads
\begin{align*}
    \alpha_1\begin{pmatrix} m_1^2u \\ k \end{pmatrix}+\alpha_2\begin{pmatrix} -pk+(p^2+m_2^2)u \\ k-pu \end{pmatrix}=0.
\end{align*}
The only solutions to the Landau equations (\ref{eq:landau_equations}) with one of the $\alpha_i$ vanishing must have $m_1=0$ (for $\alpha_2=0$) or $m_2=0$ (for $\alpha_1=0$). If $(\alpha_1,\alpha_2)\neq(0,0)$ we have either $u=k^2=0$ or $u\neq0$. For the first case, the case of second type singularities, we get $\alpha_1=-\alpha_2$ from the last 4 components (since in this case we have $k\neq0$). So we are looking for solutions to $pk=k^2=0$ where $k\neq0$. Denote $k=(k_0,\vec{k})$ and $p=(p_0,\vec{p})$. Either $\vec{k}^2=0$ which yields $k_0=p_0=0$ and $\vec{p}\cdot\vec{k}=0$, or $\vec{k}^2\neq0$ in which case the general solution can be expressed as $p_0=\pm i\frac{\vec{k}\cdot\vec{p}}{\sqrt{\vec{k}^2}}$. Note that if we assume $k\in M_k(B_2)$ and $p\in M_p(B_2)$, this solution satisfies
\begin{align*}
    p^2=-\frac{(\vec{k}\cdot\vec{p})^2}{\vec{k}^2}+\vec{p}^2\geq -\frac{\vec{k}^2\vec{p}^2}{\vec{k}^2}+\vec{p}^2=0.
\end{align*}
Hence the only solution with $(p,m)$ a physical point is $p^2=0$. In the second case $u\neq0$ we get
\begin{align*}
    k'^2=-m_1^2\quad\text{and}\quad k'p=\frac{1}{2}(p^2-m_1^2+m_2^2),
\end{align*}
where we denoted $k':=\frac{k}{u}$. Furthermore we have the condition
\begin{align*}
    (\alpha_1+\alpha_2)k'=\alpha_2p, \quad \alpha_2pk'=\alpha_1m_1^2+\alpha_2(p^2+m_2^2)
\end{align*}
for the gradients. If $\alpha_1=-\alpha_2$ then $p=0$ and $m_1^2=m_2^2$ (a non-physical solution). Otherwise
\begin{align*}
    -m_1^2=k'^2=(1-\alpha_2)m_1^2+\alpha_2(p^2+m_2^2)
\end{align*}
and
\begin{align*}
    \alpha_2p^2=\frac{1}{2}(p^2-m_1^2+m_2^2)
\end{align*}
where we set $\alpha_1+\alpha_2=1$ without loss of generality. We obtain $-p^2=(m_1\pm m_2)^2$ and conclude that the physical Landau variety is
\begin{align*}
    L_\text{phys}(I)=\{(p,m_1,m_2)\in\mathbb{C}^6 \;|\; m_1=0\text{ or }m_2=0\text{ or }p^2=0\text{ or }-p^2=(m_1\pm m_2)^2\}.
\end{align*}
This is a classical result which can be found for example in \cite{s-matrix} (note that in the result there $-p^2$ is replaced by $p^2$ since we use a different convention for the Minkowski metric).

\subsubsection{Remark on the Ramification}
With the goal to eventually prove Cutkosky's Theorem in mind, we remark on the ramification of this integral around the points in the Landau variety. The integral (\ref{eq:onelooptwopoint}) is well-known and in kinematic renormalization at a renormalization point $p_0\in\mathbb{R}^4$ evaluates to (see for example \cite{kreimerintegral})
\begin{align}\label{eq:1loop2pointresult}
    I(p,m_1,m_2)=\frac{\sqrt{\lambda(-p^2,m_1^2,m_2^2)}}{-2p^2}\ln(\frac{m_1^2+m_2^2+p^2-\sqrt{\lambda(-p^2,m_1^2,m_2^2)}}{m_1^2+m_2^2+p^2+\sqrt{\lambda(-p^2,m_1^2,m_2^2)}})-\frac{m_1^2-m_2^2}{-2p^2}\ln(\frac{m_1^2}{m_2^2}) - (p\leftrightarrow p_0)
\end{align}
for $(m_1-m_2)<-p^2<(m_1+m_2)^2$ and $m_1^2,m_2^2>0$ where the branch cut of the logarithm $\ln(z)$ (not of the expression including the complicated argument of the logarithm in (\ref{eq:1loop2pointresult})) and the square root $\sqrt{z}$ are understood to lie on the positive real axis. Again we replaced $p^2$ in the expression from \cite{kreimerintegral} by $-p^2$ since we use a different convention for the Minkowski metric. Here $\lambda(a,b,c):=a^2+b^2+c^2-2(ab+bc+ca)$ is the Källén function. Note that the ramification of $I$ at the points $-p^2=(m_1\pm m_2)^2$ is somewhat subtle: The function $I$ does not have a singularity at these points (except when $p^2=m_1=m_2=0$) and thus the discontinuity does not directly arise from the essential singularity of the logarithm at 0. If we assume $m_1^2,m_2^2>0$, the Källén function factorizes as
\begin{align*}
    \lambda(-p^2,m_1^2,m_2^2)=(-p^2-(m_1+m_2)^2)(-p^2-(m_1-m_2)^2).
\end{align*}
In this form we immediately see that $-p^2=(m_1\pm m_2)^2$ are the zeros of $\lambda(-p^2,m_1^2,m_2^2)$. Computing the naive limit $p^2\to-(m_1\pm m_2)^2$ of $I(p,m1,m2)$ yields a finite result, the argument of the logarithm becomes 1. In fact the argument of the logarithm can only be 0 if one of the masses vanishes. The square-root is a multivalued function with two sheets and in particular $\sqrt{\lambda(-p^2,m_1^2,m_2^2)}$ changes the sign when continuing along a loop around a point with $-p^2=(m_1+m_2)^2$. So going around such a loop the first term in equation (\ref{eq:1loop2pointresult}) becomes
\begin{align*}
    -\frac{\sqrt{\lambda(p^2,m_1^2,m_2^2)}}{2p^2}\ln(\frac{m_1^2+m_2^2-p^2+\sqrt{\lambda(p^2,m_1^2,m_2^2)}}{m_1^2+m_2^2-p^2-\sqrt{\lambda(p^2,m_1^2,m_2^2)}}).
\end{align*}
To compare this to the expression we started with we use $\ln(\frac{1}{x})=-\ln(x)+2\pi i$ for $x\in\mathbb{R}^+$ and obtain
\begin{align*}
    \frac{\sqrt{\lambda(p^2,m_1^2,m_2^2)}}{2p^2}(\ln(\frac{m_1^2+m_2^2-p^2-\sqrt{\lambda(p^2,m_1^2,m_2^2)}}{m_1^2+m_2^2-p^2+\sqrt{\lambda(p^2,m_1^2,m_2^2)}})+2\pi i).
\end{align*}
So we cross the branch cut of the logarithm by crossing the branch cut of the square root.

\section{Conclusion and Outlook}
In this paper we showed how to apply the isotopy techniques developed in \cite{app-iso} to the case of quadratic integrals and in particular Feynman integrals in momentum space. From these techniques we obtain a criterion to test the analyticity of such an integral at a given point by means of a set of simple equations. Read in the finite chart, these are the classical Landau equations and at infinity we obtain a new set of equations for singularities of the second type.\\
It remains to show however that Feynman integrals are analytic at physical points on the principal branch outside of $L_\text{phys}(I_G)$. This can not be achieved by the isotopy techniques discussed in this work alone and requires new ideas. The main task is to show that every non-analytic point of a given graph $G$ arises as a solution $(\alpha,k,p,m)\in c(I_G)$ to the Landau equations with $\alpha\in\mathbb{R}_{\geq0}^{|E(G)|}-\{0\}$. An immediate consequence of this statement would be the absence of second type singularities on the principal branch.\\
In our discussion the $Q$-regularization method introduced in Subsection \ref{sec:regularization} proves to be very useful. It remains to hope that this technique can be applied to other problems relating to momentum space Feynman integrals, specifically Cutkosky's Theorem. Furthermore it might be worth studying the geometry of the intersection of the $Q$-regularized zero loci in its own right. A detailed discussion of the ramification of quadratic integrals around non-analytic points as well as dispersion relations is postponed for future research.

\section*{Acknowledgement}
I would like to express my gratitude towards Marko Berghoff for all the discussions and personal support, particularly during the lock-down due to the ongoing Corona crisis. Furthermore I thank Dirk Kreimer for his guidance and the opportunity to study the mysterious world of quantum field theory as well as Prof. John Collins for sharing his valuable insights on the topic with me. A special thanks goes to Olaf Müller for reassuring me and Raphael Kogler who had to endure my endless babbling and of course for proof-reading this (or rather a very old version of this) work.

\bibliography{ref}
\bibliographystyle{alpha} 

\end{document}